\documentclass[11pt]{article}
\usepackage[margin=1in]{geometry}
\usepackage{mathpazo}
\usepackage{color}
\usepackage{cite}
\usepackage{amsthm}
\usepackage{soul}
\usepackage{amsmath}
\usepackage{amssymb}
\usepackage{comment}
\usepackage{mdframed}
\usepackage{braket}
\usepackage{thmtools}
\usepackage[colorlinks = true]{hyperref}
\usepackage{cleveref}
\usepackage{graphicx}
\usepackage{mathtools}
\usepackage{booktabs}
\usepackage{dsfont}
\usepackage{float}
\usepackage{tablefootnote} 
\usepackage{orcidlink}
\makeatletter 
\makeatother 
\usepackage[super]{nth}

\usepackage{xcolor}
\definecolor{darkred}{rgb}{0.5,0,0}
\definecolor{darkblue}{rgb}{0,0,0.5}
\definecolor{darkgreen}{rgb}{0,0.5,0}
\hypersetup{
  urlcolor   = blue,         
  linkcolor  = darkblue,     
  citecolor  = darkgreen,    
  filecolor  = darkred       
}



\newtheorem{theorem}{Theorem}
\newtheorem{lemma}{Lemma}

\theoremstyle{definition}
\newtheorem{algorithm}{Algorithm}
\newtheorem{definition}{Definition}
\newtheorem{proposition}{Proposition}
\newcounter{factc}
\newtheorem{fact}[factc]{Fact}
\newtheorem{remark}{Remark}


\newcommand{\be}{\begin{eqnarray}}
\newcommand{\ee}{\end{eqnarray}}

\newcommand{\semigeq}{\succeq}




\newcommand{\QMA}{\mathsf{QMA}}


\DeclareMathOperator{\tr}{Tr}     



\newcommand{\knote}[1]{\textcolor{red}{({\bf Kevin:} #1)}}
\newcommand{\onote}[1]{\textcolor{blue}{({\bf Ojas:} #1)}}

\renewcommand{\knote}[1]{}
\renewcommand{\onote}[1]{}




\newcommand{\Sh}{\mathcal{S}}
\newcommand{\psuedo}[1]{\widetilde{#1}}
\newcommand{\ketbra}[2]{\mathinner{|{#1}\rangle \langle{#2}|}}

\DeclareMathOperator{\Id}{\mathbb{I}}   
\DeclareMathOperator{\las}{\mathcal{L}} 
\DeclareMathOperator{\herm}{\mathcal{H}}   
\DeclareMathOperator{\D}{\mathcal{D}}   

\DeclareMathOperator{\C}{\mathbb{C}}
\DeclareMathOperator{\R}{\mathbb{R}}

\author{
  Ojas Parekh\orcidlink{0000-0003-2689-9264}\footnote{Sandia National Laboratories, {\it email:} odparek@sandia.gov} \,\,and  
  Kevin Thompson\orcidlink{0000-0001-5669-2200}\footnote{Sandia National Laboratories, {\it email:} kevthom@sandia.gov}
}



\title{An Optimal Product-State Approximation for 2-Local Quantum Hamiltonians with Positive Terms}

\begin{document}

\date{}
\maketitle

\begin{abstract}
We resolve the approximability of the maximum energy of the Quantum Max Cut (QMC) problem using product states.  A classical 0.498-approximation, using a basic semidefinite programming relaxation, is known for QMC, paralleling the celebrated 0.878-approximation for classical Max Cut.  For Max Cut, improving the 0.878-approximation is Unique-Games-hard (UG-hard), and one might expect that improving the 0.498-approximation is UG-hard for QMC.  In contrast, we give a classical 1/2-approximation for QMC that is unconditionally optimal, since simple examples exhibit a gap of 1/2 between the energies of an optimal product state and general quantum state.  Our result relies on a new nonlinear monogamy of entanglement inequality on a triangle that is derived from the second level of the quantum Lasserre hierarchy.  This inequality also applies to the quantum Heisenberg model, and our results generalize to instances of Max 2-Local Hamiltonian where each term is positive and has no 1-local parts.  Finally, we give further evidence that product states are essential for approximations of 2-Local Hamiltonian.
\end{abstract}

\tableofcontents
\section{Introduction}\label{sec:introduction}
The $\QMA$-hard $k$-Local Hamiltonian ($k$-LH) problem is a central problem in Quantum Complexity, playing a role akin to constraint satisfaction problems (CSPs) in classical algorithms and complexity theory.  Local Hamiltonian bridges a unique gap in that it has shaped quantum computational complexity theory while enjoying physical applications such as better understanding properties of materials.  Rigorous approximation algorithms\footnote{In our context, an \emph{$\alpha$-approximation algorithm} is a randomized classical polynomial-time algorithm producing a classical state or description of a quantum state; the output is guaranteed to have value at least $\alpha$ times the optimal.} and related hardness results remain relatively sparse, with a recent thrust to better understand approximability~\cite{G12, B16, B19, G19, H20, PT20, A20, PT21, A21, H21}.  The seminal Goemans-Williamson Max Cut approximation algorithm \cite{G95}, or other appropriate classical counterparts~\cite{B10,BRS11,R12}, have provided the inspiration for many of these works.  We expect that classical CSPs have much more to teach us about avenues for and barriers to approximating $k$-LH.

The classical Lasserre hierarchy of convex semidefinite programs (SDPs), dual to the Sum-of-Square hierarchy, has played a pivotal role in understanding the approximability of CSPs.  The Goemans-Williamson algorithm employs the \nth{1} level of the hierarchy and achieves the best-possible approximation guarantee for Max Cut under the Unique Games conjecture~\cite{K07}.  Raghavendra and Steurer show that this phenomenon holds for all constraint-satisfaction problems (CSPs), namely the basic level-1 SDP yields the best approximation algorithm for general instances~\cite{RS09}, and it is Unique-Games-hard to do better~\cite{R08}.  One may ask whether comparable results hold for $k$-LH.  Recent hardness and approximation results~\cite{H21,A20,PT21} provide evidence\footnote{Personal communication: A flaw was recently discovered in one of the proofs of \cite{H21}; the authors are attempting to fix it and believe their claims to be true.  We view their work as evidence in favor of their claimed hardness results.} that , unlike the classical CSP case, better approximations for $2$-LH are possible using the \nth{2} level of a quantum analogue of the Lasserre hierarchy introduced by Pironio, Navascu{\'e}s, and Acin~\cite{P10}.  A precise understanding of the ways in which the approximability of $2$-LH compares and contrasts with $2$-CSP remains an open problem.  We resolve this question in the case of using product states to approximate the Quantum Max Cut problem and more general $2$-LH instances.  

\paragraph{Quantum Max Cut.}
Max Cut has advanced the art of classical approximation algorithms over several decades, since it a simply formulated NP-hard $2$-CSP capturing essential properties of general CSPs. As quantum approximation algorithms are in their infancy, such an analogue for $2$-LH is highly desirable for advancing the field. The $\QMA$-hard Quantum Max Cut (QMC) problem has emerged to fill this role: (i) QMC is physically motivated and closely related to the well-known anti-ferromagnetic Heisenberg model~\cite{G19}; (ii) its approximability is conjectured to be that of the general positive $2$-LH problem\footnote{\emph{Positive $2$-LH} refers to instances of $2$-LH in which each term is positive semidefinite; this is the most natural generalization of classical CSPs.}~\cite{PT20} (iii) moreover, techniques for converting approximation algorithms for QMC to those for positive 2-LH are known~\cite{PT20}, (iv) it is considerably simpler to analyze than general $2$-LH (e.g.~\cite{PT20}); (v) it may be viewed as a generalization of Max Cut, which has enabled generalizing classical approaches for both approximation~\cite{G19,A20,PT21} and hardness~\cite{H21}.

QMC seeks to find the maximum eigenvalue of the Hamiltonian $\sum_{ij \in E} w_{ij}(\Id - X_iX_j - Y_iY_j - Z_iZ_j)/4$, where $X_i,Y_i,Z_i$ are the Paulis acting on qubit $i$ and $w_{ij} \geq 0$ are weights.  The well-studied anti-ferromagnetic quantum Heisenberg model is obtained by dropping the $\Id$ from each term: $\Id - X_iX_j - Y_iY_j - Z_iZ_j \leftrightarrow - X_iX_j - Y_iY_j - Z_iZ_j$.  Similarly, Max Cut expressed as a Hamiltonian is $\sum_{ij \in E} (\Id - Z_iZ_j)/2$, and the anti-ferromagnetic Ising model is obtained by the replacement: $\Id - Z_iZ_j \leftrightarrow -Z_iZ_j$.  Thus the relationship between QMC and Max Cut is analogous to that of the quantum Heisenberg and Ising models.  The energy shift does not affect the complexity of the finding the maximum eigenvalue; however, approximability is affected:  $\Omega(1/\log(n))$ for Ising/Heisenberg~\cite{C04,B19} versus $\Omega(1)$ for (Quantum) Max Cut~\cite{G95,G19}.  Further parallels between Max Cut and QMC appear in~\Cref{sec:MC-QMC-parallels}.

\paragraph{Extensions to positive $2$-Local Hamiltonian.}  We posit that approximations for QMC will result in approximations with matching performance gaurantees for the more general positive $2$-LH problem.  In some sense this is the most general version of $2$-LH that capture the spirit of classical $2$-CSPs (where each term is a positive function of its boolean variables).  A QMC term on qubits $i,j$ acts as a rank-$1$ projector onto the singlet state on $i,j$ (see \Cref{rem:QMC-term} for details).  Since any positive $2$-local term may be expressed as a positive combination of rank-$1$ projectors, the approximability of the $2$-LH problem with rank-$1$ captures that of general positive $2$-LH.  Note that each Max Cut term acts as a (diagonal, in the computational basis) rank-2 projector, and the approximability of the classical Max AND problem captures that of general $2$-CSP.   

While it may seem that QMC is an isolated problem, it represents a maximially entangled case of positive $2$-LH, and there is evidence~\cite{PT20} that it is hardest instance of $2$-LH to approximate. The present work and~\cite{PT20} give techniques for turning QMC approximation algorithms into those for positive 
$2$-LH (although approximation guarantees for the latter are established numerically, as it is harder to analyze).  Hence we believe that QMC simultaneously serves a role akin to both Max Cut and Max AND in the quantum case.

\paragraph{Main result.}  Gharibian and Parekh~\cite{G19} introduced QMC and gave a $0.498$-approximation\footnote{We use 0.498 and analogous numerical estimates of approximation factors as shorthand for the exact approximation factor.} using product states, where a $\frac{1}{2}$-approximation is the best possible in the product state regime due to the gap between the optimal quantum and product-state values on a single edge. \cite{G19} employs the \nth{1} level of the quantum Lasserre hierarchy ($\las_k$ will denote level $k$), and Hwang, Neeman, Parekh, Thompson, and Wright~\cite{H21} provide evidence that $0.498$ is the integrality gap\footnote{In our context, the \emph{integrality gap} of an SDP relaxation is the minimum of the ratio of the optimal and SDP values over all instances.} of $\las_1$ for QMC. They also conjecture a tight related hardness result: a $(0.956+\varepsilon)$-approximation of the optimal product state for QMC is Unique-Games-hard, which would match the best known approximation guarantee for this problem~\cite{B10}.  In light of this evidence, one might anticipate hardness of a $(0.498+\varepsilon)$-approximation for QMC using product states.  In contrast we show:
\begin{theorem}\label{thm:PROD}
Quantum Max Cut admits a $\frac{1}{2}$-approximation, with respect to $\las_2$, outputting a product state.
\end{theorem}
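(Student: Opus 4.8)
The plan is to solve the level-$2$ relaxation $\las_2$, read off its optimal pseudoexpectation $\widetilde{\mathbb{E}}$ and value $\mathrm{sdp}\ge\mathrm{OPT}$, and round to a product state of energy at least $\tfrac12\mathrm{sdp}$. A product state is $\bigotimes_i\rho_i$ with $\rho_i=\tfrac12(\Id+u_i\cdot\vec\sigma)$ for $u_i\in\mathbb{R}^3$, $\|u_i\|\le 1$ (WLOG $\|u_i\|=1$), where $\vec\sigma_i=(X_i,Y_i,Z_i)$; its QMC energy is $\sum_{ij}w_{ij}(1-u_i\cdot u_j)/4$. Writing $x_{ij}:=\widetilde{\mathbb{E}}[\vec\sigma_i\cdot\vec\sigma_j]\in[-3,1]$ so that $\mathrm{sdp}=\sum_{ij}w_{ij}(1-x_{ij})/4$, the task becomes: produce unit vectors $u_i\in S^2$ with $\sum_{ij}w_{ij}(1-u_i\cdot u_j)\ge\tfrac12\sum_{ij}w_{ij}(1-x_{ij})$. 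Note first that $\tfrac12$ is the best possible (on a single edge the singlet has energy $w_{ij}$ while the best product state has $w_{ij}/2$), and second that one cannot argue edge by edge: an edge on which $\las_2$ looks singlet-like ($x_{ij}\approx-3$) forces nearly antipodal $u_i,u_j$, which is incompatible with the other two edges of a triangle through it. So the rounding and its analysis must be global, and everything hinges on showing that $\las_2$ cannot make every edge of an odd structure look singlet-like.

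That is the role of the new monogamy inequality, which is the crux. From positive-semidefiniteness of the level-$2$ moment matrix restricted to the degree-$\le 2$ Pauli monomials supported on the three qubits of a triangle $ijk$ of the instance, I would derive a nonlinear inequality constraining $x_{ij},x_{jk},x_{ik}$ (and the one-body moments $\vec m_i=\widetilde{\mathbb{E}}[\vec\sigma_i]$ when nonzero). Morally this is a quantitative monogamy of entanglement: no triangle can look singlet-like on all three edges, in the spirit of the elementary bound $x_{ij}+x_{jk}+x_{ik}\ge-3$ (an honest three-qubit state has total spin $\ge\tfrac12$). The linear bound alone does not beat the $0.498$ barrier, so I would search for the sharper nonlinear certificate: take a parameterized degree-$2$ Hermitian operator $A$ (a combination of the $\vec\sigma_i\cdot\vec\sigma_j$ and cross terms on the triangle), impose $\widetilde{\mathbb{E}}[A^2]\ge 0$, and tune the parameters so that the admissible region for $(x_{ij},x_{jk},x_{ik})$ is exactly what the rounding below requires.

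Then I would round and analyze. The level-$2$ moment matrix contains the level-$1$ Gram data: vectors $v_i^a$ with $\langle v_i^a,v_j^b\rangle=\widetilde{\mathbb{E}}[\sigma_i^a\sigma_j^b]$ and $x_{ij}=\sum_a\langle v_i^a,v_j^a\rangle$. Project onto a uniformly random $3$-dimensional subspace and extract one single-qubit pure state per qubit, in the spirit of the Gharibian--Parekh rounding, obtaining random $u_i\in S^2$ with $\mathbb{E}[u_i\cdot u_j]=g(x_{ij})$ for an explicit $g$. The per-edge ratio $\tfrac{1-g(x_{ij})}{1-x_{ij}}$ equals $1$ at $x_{ij}=-3$ and dips to $\approx 0.498$ at some intermediate value; the monogamy inequality guarantees that an edge in this bad regime is, around every triangle through it, accompanied by enough non-singlet-ness on the other two edges that the triangle's total rounded energy is at least $\tfrac12$ of its total $\las_2$ energy. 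I would turn this into a global charging argument---bipartite-like portions handled by the trivial antipodal bound, odd structures decomposed into triangles and chained along odd cycles---to conclude $\mathbb{E}[\text{energy}]\ge\tfrac12\mathrm{sdp}$, and derandomize by conditional expectations.

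The main obstacle is the monogamy inequality itself. It must simultaneously be a valid low-degree sum-of-squares consequence of $\las_2$---so that it binds the relaxation and not merely honest three-qubit states---and be sharp enough to pull the worst-case rounding ratio from $0.498$ up to exactly $\tfrac12$; finding a nonlinear inequality that threads this needle, and then matching per-edge rounding loss to per-triangle SDP slack in the global accounting, is where essentially all the work lies. Handling nonzero one-body moments and the complex, anticommutator-laden structure of the Pauli moment matrix is a further technical layer.
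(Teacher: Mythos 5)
There is a genuine gap, and it is in exactly the place you flag as "where essentially all the work lies." Your plan keeps Gharibian--Parekh rounding on every edge and defers the deficit on bad edges to an unspecified global charging argument over triangles and odd cycles of the instance graph. As stated this does not work: the bad regime for hyperplane rounding is $v_{ij}\approx 0.97$ (at the singlet value $v_{ij}=1$ the per-edge ratio is exactly $\tfrac12$, so "singlet-like forces antipodal $u_i,u_j$, incompatible with triangles" misdiagnoses the difficulty -- it is not graph frustration); such edges need not lie on any triangle of the graph (bipartite instances, trees, stars); and with arbitrary weights the surplus available on neighboring edges can be made negligible compared to the deficit, so there is no reservoir to charge to. The paper sidesteps any global accounting by changing the algorithm, not the analysis: edges with $v_{ij}>\gamma=0.911>(1+\sqrt3)/3$ are shown to form a matching (\Cref{rem:matching}); on each such edge it plants the classical mixed state $(\Id-Z_iZ_j)/4$, deterministically earning $\tfrac12\ge\tfrac12\mu_{ij}$; the new triangle inequality is applied to an arbitrary triple of qubits $i,j,k$ (the pair $jk$ need not be a graph edge) to show every edge adjacent to a large edge has $v<\tfrac13$, hence $\mu<\tfrac12$, so the uncorrelated assignment already earns $\tfrac14>\tfrac12\mu$ there; and on all remaining edges $v\le\gamma$, where \Cref{lem:1/2-approx} shows GP rounding is at least $\tfrac12$ per edge. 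Every edge is then handled locally via \Cref{prop:approx} -- no decomposition into odd structures is needed, and indeed your premise that "one cannot argue edge by edge" is false once the algorithm treats large edges separately.

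The second missing ingredient is the monogamy inequality itself, which you assert can be found by tuning $\widetilde{\mathbb{E}}[A^2]\ge 0$ for a parameterized degree-$2$ operator $A$ but do not produce; it is the heart of the result. The paper's route: pass from Pauli moments to SWAP operators $S_{ij}$, so a single scalar $s_{ij}$ per pair suffices; form the real part of the $4\times4$ Gram matrix of the $\las_2$ vectors for $\{\Id,S_{12},S_{13},S_{23}\}$, whose off-diagonal entries involve only $s_{12},s_{13},s_{23}$ because the $3$-local terms of $S_{ij}S_{ik}$ cancel in the anticommutator; positivity of this matrix is equivalent (via a Schur complement) to $0\le s_{12}+s_{13}+s_{23}\le 3$ together with the quadratic inequality of \Cref{lem:informal-triangle}, and the precise consequence actually used, $s_{ik}\le 0\Rightarrow s_{ij}\ge-\sqrt3/2$ (equivalently $v_{ij}>0.911\Rightarrow v_{ik}<\tfrac13$, \Cref{lem:large-edge-bound}), is certified by an explicit dual SDP solution. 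You would also need the analytic verification that the GP ratio stays above $\tfrac12$ on all of $(-\tfrac13,\gamma]$ (the monotonicity argument of \Cref{lem:1/2-approx}), which your sketch leaves implicit. Finally, one-body moments are a non-issue for QMC: the terms are strictly quadratic and the argument never uses $\widetilde{\mathbb{E}}[\vec\sigma_i]$.
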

This resolves the approximability of QMC using product states, and in the process yields a classical CSP-like problem\footnote{Namely, approximating QMC with respect to a classically computable upper bound ($\las_2$ in this case) and producing a classical output (a product state).} whose approximability does not appear to be captured by $\las_1$ (a departure from the behavior of traditional CSPs).  Our result shows that approximating Quantum Max Cut via product states is one of limited generalizations of classical CSPs for which the natural SDP is likely to be suboptimal.  Another implication of our result is that even though it is likely NP-hard\footnote{Personal communication: John Wright has a proof sketch of NP-hardness.} to find the optimal product-state solution to QMC, this does not prevent us from obtaining the optimal worst-case approximation for QMC using product states.  

Our approximation algorithm for QMC may be applied to the positive $2$-LH problem without $1$-local terms.  The analysis for the latter is more difficult, and the approximation factor is verified numerically. Additionally, we resolve an open question in quantum approximation.  Gharibian and Kempe~\cite{G12} show that for any instance $H$ of positive $2$-LH, there is a product state giving a $\frac{1}{2}$-approximation; more generally, they demonstrate existence of a $(1/2)^{k-1}$-approximate product state for positive $k$-LH.  Their result uses an optimal quantum state to construct a $\frac{1}{2}$-approximate product state.  Since one cannot assume access to an optimal state, their result is non-constructive.  Our algorithm may be viewed as a constructive version that, given $H$, produces a $\frac{1}{2}$-approximate product state (with respect to the $\las_2$ upper bound, so the approximation factor could be significantly better than $\frac{1}{2}$ for special cases).  

We note that 0.533 is the currently best-known approximation for QMC, using a modest generalization of product states~\cite{PT21,A20} and that approximations beyond $\frac{1}{2}$ using product states are possible on dense graphs, which is further discussed below.

\paragraph{New triangle inequality for the Heisenberg model.} Our algorithm may be viewed as an adaptation of an existing algorithm~\cite{PT21} to the product-state case.  However, proving that the resulting algorithm is optimal requires more precise bounds and analysis.  For this, we prove a new nonlinear triangle inequality on the energies of each term of the quantum Heisenberg model on a triangle.  
\begin{lemma}\label{lem:informal-triangle}
Let $\rho$ be an $n$-qubit density matrix.  Define $s_{ij} := \tr[S_{ij} \rho]$, where $S_{ij} = \frac{\Id + X_iX_j + Y_iY_j + Z_iZ_j}{2}$ is the SWAP operator on qubits $i,j$.  Then we have:
\begin{gather*}
0 \leq s_{ij}+s_{ik}+s_{jk} \leq 3,\\
s_{ij}^2 + s_{ik}^2 + s_{jk}^2 + 2(s_{ij}+s_{ik}+s_{jk}) - 2(s_{ij}s_{ik}+s_{ij}s_{jk}+s_{ik}s_{jk}) \leq 3.
\end{gather*}
\end{lemma}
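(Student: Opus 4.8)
The plan is to reduce to three qubits and then exploit the $S_3$-structure of $(\C^2)^{\otimes 3}$. Since each $S_{ij}$ is supported on qubits $i,j$, both inequalities depend on $\rho$ only through its reduced state on qubits $i,j,k$, so I would assume $n=3$. It is convenient to work with the singlet projectors $P_a := (\Id - S_a)/2$ for $a\in\{ij,ik,jk\}$ (these are exactly the QMC terms $(\Id-X_iX_j-Y_iY_j-Z_iZ_j)/4$) and with $\mu_a := \tr[P_a\rho] = (1-s_a)/2\in[0,1]$. Substituting $s_a = 1-2\mu_a$ turns the two claims into, respectively, $0\le \mu_{ij}+\mu_{ik}+\mu_{jk}\le\tfrac32$ and $2(\mu_{ij}^2+\mu_{ik}^2+\mu_{jk}^2)\le(\mu_{ij}+\mu_{ik}+\mu_{jk})^2$; the latter is, for nonnegative $\mu_a$, equivalent to saying that $\sqrt{\mu_{ij}},\sqrt{\mu_{ik}},\sqrt{\mu_{jk}}$ obey the triangle inequality.

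Next I would invoke the (Schur--Weyl-type) decomposition $(\C^2)^{\otimes 3}=\mathcal{S}\oplus W$, where $\mathcal{S}$ is the $4$-dimensional symmetric subspace, on which every $S_{ij}=\Id$ and hence every $P_a=0$, and $W\cong\C^2\otimes\C^2$ carries the representation in which a permutation acts as $I\otimes\pi(\cdot)$ with $\pi$ the $2$-dimensional standard representation of $S_3$ on the second factor. Since each transposition acts in $\pi$ as a reflection, on $W$ one has $P_a=I\otimes|\hat u_a\rangle\!\langle\hat u_a|$, where $\hat u_{ij},\hat u_{ik},\hat u_{jk}$ span its three $(-1)$-eigenlines; a short computation in the model $\pi\cong\{x\in\C^3:x_1+x_2+x_3=0\}$ gives $|\langle\hat u_a|\hat u_b\rangle|^2=\tfrac14$ for $a\neq b$, so the associated Bloch vectors $\hat n_a$ are coplanar unit vectors at pairwise angle $120^{\circ}$; in particular $\sum_a\hat n_a=0$ and $\sum_a\hat n_a\hat n_a^{\top}=\tfrac32\,\Pi_{\mathrm{plane}}$, the orthogonal projector onto their common plane.

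With this in hand the inequalities follow quickly. Let $p:=\tr[\Pi_W\rho]\in[0,1]$; if $p=0$ all $\mu_a$ vanish, so assume $p>0$ and let $\tau=\tfrac12(\Id+\vec r\cdot\vec\sigma)$, $|\vec r|\le1$, be the single-qubit state obtained by tracing the first factor out of the normalized block $\Pi_W\rho\,\Pi_W/p$. Since each $P_a$ is supported on $W$, $\mu_a=p\,\langle\hat u_a|\tau|\hat u_a\rangle=\tfrac p2(1+\vec r\cdot\hat n_a)$, whence $\sum_a\mu_a=\tfrac{3p}{2}\in[0,\tfrac32]$, which is the first inequality. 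For the second, set $y_a:=1+\vec r\cdot\hat n_a=2\mu_a/p\ge0$; then $\sum_a y_a=3$ while $\sum_a y_a^2=3+\vec r^{\top}\bigl(\sum_a\hat n_a\hat n_a^{\top}\bigr)\vec r=3+\tfrac32|\vec r_{\parallel}|^2\le\tfrac92$, where $\vec r_{\parallel}$ is the in-plane part of $\vec r$ and $|\vec r_{\parallel}|\le|\vec r|\le1$. Hence $2\sum_a\mu_a^2=\tfrac{p^2}{2}\sum_a y_a^2\le\tfrac94 p^2=(\sum_a\mu_a)^2$, with equality forcing $|\vec r_{\parallel}|=1$, which matches the tight example of a singlet on $ij$ with $k$ disentangled.

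The step I expect to be the crux is the middle one: a priori three noncommuting singlet projectors on three qubits seem to admit a great deal of freedom, and it is precisely the $S_3$ representation theory that pins them down, on the only subspace affecting the $\mu_a$, to act as a single shared qubit probed along three fixed directions in the ``Mercedes'' ($120^{\circ}$) configuration, after which the lemma is just the Bloch-ball bound $|\vec r|\le1$. I would also check that the argument used only $\rho\succeq0$, $\tr\rho=1$, and the algebra of the $S_{ij}$, so that the inequality $2\sum_a\mu_a^2\le(\sum_a\mu_a)^2$ transfers verbatim to $\las_2$ pseudo-moments, the form needed for the approximation algorithm; concretely this amounts to realizing it as a consequence of positivity of a degree-$\le2$ moment matrix on qubits $i,j,k$, which is the sense in which the inequality ``comes from $\las_2$''.
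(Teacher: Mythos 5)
Your proof is correct, but it takes a genuinely different route from the paper. You reduce to three qubits, pass to the QMC energies $\mu_a=(1-s_a)/2$ (your reformulation $2\sum_a\mu_a^2\le(\sum_a\mu_a)^2$ is indeed algebraically equivalent to the stated nonlinear inequality), and then use the $S_3$ isotypic decomposition $(\C^2)^{\otimes3}=\mathcal{S}\oplus W$ with Schur--Weyl, so that on the only relevant block the three singlet projectors become one shared qubit measured along three unit Bloch vectors at $120^\circ$; both inequalities then reduce to the Bloch-ball bound $|\vec r|\le1$, and you even get the equality case (singlet on $ij$, $k$ decoupled; note $\mu_{ik}=\mu_{jk}=1/4$ there). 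The paper instead forms the $4\times4$ symmetrized moment matrix $M$ of $\{\Id,S_{ij},S_{ik},S_{jk}\}$, whose off-diagonal entries are anticommutator averages $\tfrac12\braket{S_{ij}S_{ik}+S_{ik}S_{ij}}=\tfrac12(s_{ij}+s_{ik}+s_{jk}-1)$ (the $3$-local terms cancel), and shows via a Schur complement that $M\succeq0$ is \emph{equivalent} to the two inequalities, later extracting the algorithmic consequence by SDP duality. Your argument buys geometric insight and the tightness characterization; the paper's buys two things your route does not give directly: the converse direction, and, more importantly, validity for $\las_2$ pseudo-densities. On that last point your closing remark is too optimistic as stated: your proof uses positivity of $\rho$ against the isotypic projector onto $W$ and operators of the form $\Id\otimes\ketbra{w}{w}$ there, which live in the group algebra of $S_3$ and hence involve the $3$-cycles $S_{ij}S_{jk}$, i.e.\ degree-$3$ Pauli terms not obviously of the form $P^2$ with $\deg P\le2$; so the argument does not transfer ``verbatim'' to $2$-positive pseudo-densities, and the symmetrization/anticommutator trick (precisely the paper's moment matrix $M$) is the mechanism that makes the inequality a genuine $\las_2$ consequence.
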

Due to the close connection of Heisenberg terms, QMC terms, and the SWAP terms above, simple variants of \Cref{lem:informal-triangle} apply to all three cases.  The first linear inequality is the traditional monogamy of entanglement inequality, related to the Lieb-Mattis bound~\cite{L62}, that has been critically used in approximating QMC using tensor products of 1- and 2-qubit states~\cite{A20,PT21}.  The second nonlinear inequality is what we must employ for our more precise treatment of product states. We expect that our nonlinear inequality may be generalized to stars rather than triangles, as is the case for the first linear inequality~\cite{A20,PT21}; however, our results do not require such a generalization.  

We in fact prove a stronger version of the above result, demonstrating that \Cref{lem:informal-triangle} holds not just for values arising from a true quantum density matrix, but also for those derived from a relaxed (and not necessarily positive) quantum pseudo-density arising from $\las_2$ (\Cref{def:psuedo-density}).  To the best of our knowledge, the above lemma was previously unknown, despite decades of work on the Heisenberg model, and suggests the efficacy of using the quantum Lasserre hierarchy to derive new monogamy of entanglement constraints.  We expect for our work to initiate further inquires along these lines.

\paragraph{Why consider product states?} 
Since generic quantum states are highly entangled it is natural to guess that good approximation algorithms necessitate algorithms which produce entangled states.  We argue that the state-of-the-art in approximating QMC and $2$-LH has not moved beyond product states, but instead focuses on judiciously combining product states with entangled ansatzes. This was heralded by results of Brand{\~a}o and Harrow~\cite{B16}, demonstrating that, for Hamiltonians exhibiting regularity in their interaction graph, product states provide arbitrarily close to optimal approximations as a function of the vertex degree.  Hence, for an approximation algorithm to offer a good unconditional approximation guarantee, it must address the case of dense or high-degree graphs, where product states are known to give arbitrarily good approximations~\cite{G12, B16, A20}.  Our work reinforces this notion by demonstrating that the bottlenecks to good product-state approximations are low-degree subgraphs.  

We give a concrete demonstration of the above by quantifying the performance of approximation via product states for unweighted graphs with minimum degree at least $d$.  A related result by Anshu, Gosset, and Morenz is a $(\frac{4}{7}-O(|E|^{-1}))$-approximation for QMC on connected and unweighted graphs~\cite{A20}. We show that a slight modification of the standard product-state rounding algorithm of~\cite{G19} already outperforms the best-known general QMC approximation, using an \emph{entangled} ansatz, on unweighted graphs with minimum degree $3$.  Thus, a high-level message of our work is that a precise understanding of product states seems crucial for any kind of approximation of $2$-LH.

\paragraph{Open questions.} Does combining product-state rounding with specialized ansatzes for low-degree subgraphs give rise to best-possible approximations for QMC and positive $2$-LH?  Can our techniques be extended to give a $(1/2)^{k-1}$-approximation for positive $k$-LH using product states? Do broader classes of (new) monogamy of entanglement bounds arise from $\las_k$ for small $k$?

\paragraph{Subsequent Work.  }  After the preprint of this work many authors improved on the best known approximation factors for Quantum Max Cut \cite{lee2024improved, apte2025improved, huber2024second, gribling2025improved, lee2022optimizing}, with the current state of the art for general graphs being $0.611$ and the best conjectured achievable approximation factor being $0.625$\cite{apte2025conjectured}.  Additional new directions include the study of the complexity of Quantum Max Cut and more general Hamiltonians \cite{kallaugher_et_al:LIPIcs.ITCS.2025.63, piddock2025quantum}, the study of approximation algorithms for generalizations of Quantum Max Cut to higher dimensions \cite{jorquera2024monogamy, carlson2023approximation}, the study of approximation algorithms for the EPR Hamiltonian \cite{ju2025improved, tao2025refined, apte20250} as well as approximating constrained quantum problems \cite{parekh2024constrained, culf2024approximation}.  It is worth emphasizing that none of these subsequent results achieve (or can achieve) a better approximation factor than the current work for approximating Quantum Max Cut via product states as our result is optimal.

\section{Preliminaries}
\label{sec:preliminaries}

We use subscripts for indices of vectors and matrices; however, when enhancing readability, we will also use the notation $v(\cdot)$ and $M(\cdot,\cdot)$ to refer to entries of vectors or matrices, respectively.  We will use $S^{d-1}$ to refer to the set of unit vectors in $d$ dimensions: $S^{d-1}=\{x\in \mathbb{R}^d: x^T x=1\}$.  We will make use of the concept of a {\it graph}.  A graph is a tuple of vertices and edges, $G=(V, E)$, where E is a set of unordered pairs from $V\times V$.  A generic edge, for instance, will be denoted $ij\in E$.  In this paper the vertex set will always be $[n]=\{1, 2, ..., n\}$.  For a subset of edges $S\subseteq E$ we use the notation $\delta(S)$ to indicate the {\it edge boundary}: $\delta(S)$ is the set of edges $ij \in E$ such that there exists $k$ with $ik\in S$ or $jk\in S$, but $ij \notin S$. Similarly, the \emph{neighbors} of a vertex $i$ will be denoted $N(i)$: $N(i)=\{j\in V: ij \in E\}$.  A \emph{matching} is a set of edges such that no two share an endpoint.  A \emph{star graph} is a graph where all vertices are connected to a central node.  Formally, a star graph with $d$ leaves is a graph $G=(V, E)$ where $V=\{0, 1, ..., d\}$ and $E=\{0i:i\in V,\,\,i \neq 0\}$.  We will refer to a graph $G$ with minimum degree at least $d$ as a min-degree-$d$ graph.

The vertices of a graph can always be thought of as qubits in this paper, hence we reserve $n$ for the number of qubits.  We use $\herm_n$ to denote the set of Hermitian matrices on $n$ qubits: $\herm_n=\{O\in \mathbb{C}^{2^n \times 2^n } : O^\dagger =O \}$.  The notation $\D_n$ will represent the set of density matrices on $n$ qubits: $\D_n=\{\rho \in \mathbb{C}^{2^n \times 2^n }: \rho=\rho^\dagger,\,\, \rho\semigeq 0, \,\, \tr[\rho]=1 \}$.  We use the notation $\mathcal{U}_n$ for the set of unitary matrices on $n$ qubits: $\mathcal{U}_n=\{U\in \mathbb{C}^{2^n \times 2^n}: U^\dagger U=\mathbb{I}\}$.

The Pauli matrices are defined as:
\begin{equation*}
\label{eq:paulis}
 \mathbb{I}=\begin{bmatrix}
1 & 0 \\
0 & 1
\end{bmatrix},
\,\,\,\,\,\,
X=\begin{bmatrix}
0 & 1 \\
1 & 0
\end{bmatrix},
\,\,\,\,\,\,
Y=\begin{bmatrix}
0 & -i \\
i & 0
\end{bmatrix}, \,\text{and}
\,\,\,\,\,\,
Z=\begin{bmatrix}
1 & 0 \\
0 & -1
\end{bmatrix}.
\end{equation*}
\noindent Subscripts indicate quantum subsystems among $n$ qubits, as in \cite{PT21}.  For instance, the notation $\sigma_i$ is used to denote a Pauli matrix $\sigma \in \{X,Y,Z\}$ acting on qubit $i$, i.e.\ $\sigma_i := \mathbb{I} \otimes \mathbb{I} \otimes \ldots \otimes \sigma \otimes \ldots \otimes \mathbb{I} \in \mathbb{C}^{2^n \times 2^n}$, where the $\sigma$ occurs at position $i$.  We will frequently consider polynomials in the Pauli operators. 

\begin{definition}[Pauli polynomial] A \emph{term} or \emph{monomial} $\tau$ is a tensor product of Pauli operators, $\sigma_1 \otimes \ldots \otimes \sigma_n$ with $\sigma_i \in \{\Id,X,Y,Z\}$.  The \emph{degree} of $\tau$ is the number of $\sigma_i$ not equal to $\Id$.  We let $\mathcal{P}_n(k)$ be the set of Pauli terms of degree at most $k$ on $n$ qubits.  A \emph{Pauli polynomial} $P$, unless otherwise stated, is a real linear combination of terms; its degree is the maximal degree over its terms.  Pauli polynomials of degree $l$ correspond to $l$-local Hamiltonians, and we interchangeably refer to a Pauli polynomial and the Hermitian operator it represents.
\end{definition}

The algorithms we discuss will produce product states, which are convex combinations of tensor products of single-qubit density matrices.
\begin{definition}[Product state]
A \emph{product state} is $\rho \in \D_n$ of the form $\rho = \sum_l \mu_l \left(\bigotimes_{i \leq n} \rho^l_i\right)$, with $\sum_l \mu_l = 1$ and $\mu_l \geq 0$ for all $l$.  We generally assume $\rho$ has polynomial-size description in $n$.
\end{definition}

Lastly we note that the current result makes use of many observations from Parekh and Thompson~\cite{PT20}. We have placed clean statements of some of their lemmas we use in the Appendix (\Cref{sec:lems_from_beating_random}) for convenience.  

\subsection{Problem Statement}
We consider specializations of the strictly quadratic $2$-Local Hamiltonian problem.
\begin{definition}[Strictly quadratic $2$-Local Hamiltonian]
\label{def:strictly-quadratic-2-LH}
A \emph{strictly quadratic} $2$-local Hamiltonian is one of the form
\begin{equation*}
H = \sum_{ij \in E}w_{ij}H_{ij} =  \sum_{ij \in E}w_{ij}\left(c_{ij}(\Id)\cdot \Id + \sum_{\tau,\sigma \in \{X,Y,Z\}} c_{ij}(\tau_i\sigma_j)\cdot \tau_i\sigma_j\right),
\end{equation*}
where we assume $w_{ij} > 0$, $c_{ij}(\cdot) \in \R$, and $H_{ij} \succeq 0$ for each edge $ij \in E$.
\end{definition}
The key feature of a strictly quadratic instance is the absence of $1$-local Pauli terms.
\begin{definition}[Quantum Max Cut]\label{def:QMC-def}
\emph{Quantum Max Cut} (QMC) is an instance of strictly quadratic 2-local Hamiltonian where $H_{ij} = \frac{\Id - X_iX_j - Y_iY_j - Z_iZ_j}{4}$ for all $ij \in E$.
\end{definition}

\begin{remark}[Quantum Max Cut local term]\label{rem:QMC-term}
Each QMC local term is the density matrix of a projector onto a rank-$1$ pure state, normally referred to as the \emph{singlet}: $(\Id- X\otimes X - Y\otimes Y - Z\otimes Z)/4=(\ket{01}-\ket{10})(\bra{01}-\bra{10})/2$.  The diagonal part of this projector is $(\ketbra{01}{01} + \ketbra{10}{10})/2$, a scaled version of the classical Max Cut Hamiltonian, that earns maximal value on an edge by putting ``anti-aligned'' (i.e., of different signs) $\{\pm1\}$ Boolean values on its endpoints.  For QMC, a quantum state earns maximal value on an edge by assigning an anti-aligned superposition of the anti-aligned basis states on the edge -- i.e., with a singlet.   
\end{remark}

Given such a Hamiltonian, the goal of the \emph{Local Hamiltonian Problem} is to find the largest eigenvalue $\lambda_{\max}(H)$.  Naturally the main constraint in the quantum case is that a quantum state achieving value $\lambda_{\max}(H)$ may not have an efficient description.  Since optimizing over this domain is intractable in general, we seek an approximation to $\lambda_{\max}(H) \geq 0$ by optimizing over classes of states with efficient descriptions.

\begin{definition}[$\alpha$-Approximation algorithm]
An algorithm is an \emph{$\alpha$-approximation}, for some $\alpha \in (0,1]$, if given any instance $H$ of $2$-Local Hamiltonian within some class, it produces a (description of a) state $\rho \in \D_n$ with
\begin{equation*}
    \tr[H\rho] \geq \alpha \lambda_{\max}(H).
\end{equation*}
An $\alpha$-approximation runs in time polynomial in the size of its input.
\end{definition}
Our goal is to produce randomized classical approximation algorithms, that output product states, for strictly quadratic instances of $2$-Local Hamiltonian.

\subsection{Quantum Lasserre Hierarchy}
\label{sec:intro-quantum-lasserre-hierarchy}
The quantum Lasserre hierarchy has become a standard ingredient in the design and analysis of quantum approximation algorithms.  The noncommutative (quantum) Lasserre hierarchy, for a fixed Hamiltonian $H$, is a sequence of semidefinite programs (SDPs), of increasing size, meant to approximate moments $\tr[\rho\, \tau]$ of a quantum state $\rho$ on $n$ qubits with respect to Pauli terms $\tau$ of some fixed degree (degree-$2$ terms in our case since $H$ is $2$-local).  This hierarchy was introduced by Navascu{\'e}s, Pironio, and Acin~\cite{NPA08,P10} and related to work by Doherty, Liang, Toner, and Wehner~\cite{D08}.  It was first used in approximating local Hamiltonian problems by Brand{\~a}o and Harrow~\cite{B16}.  Level $k$ of the hierarchy, denoted $\las_k$, is an SDP over matrices of size $O(n^k)$ by $O(n^k)$ that gives a relaxation of moments over Pauli terms of degree at most $2k$.  For an $l$-local Hamiltonian $H \in \herm_n$, the SDP $\las_k$ may be used to obtain an upper bound on $\lambda_{\max}(H)$, for $k \geq \lceil l/2 \rceil$.  The value of this upper bound is monotonically decreasing in $k$, and when $k=n$ the bound is tight.  

\begin{definition}[Psuedo-density]\label{def:psuedo-density}
A \emph{$k$-positive psuedo-density} operator, $\psuedo{\rho} \in \herm_n$ is defined by:
\begin{enumerate}
    \item  $\tr[\psuedo{\rho}]=1$, and
    \item  \label{item:psuedo-positivity}
    $\tr[\psuedo{\rho} P^2] \geq 0$, for all Pauli polynomials $P$ of degree at most $k$.
\end{enumerate}
The set of $k$-positive pseudo-density operators on $n$ qubits is denoted $\widetilde{\D}_n(k)$.  
\end{definition}
Level $k$ of the quantum Lasserre hierarchy yields an upper bound $\nu_k(H)$ on $\lambda_{\max}(H)$ by finding an optimal $k$-positive pseudo-density matrix, which amounts to solving an SDP:
\begin{equation} \label{prob:lasserre-psuedo-density}
\nu_k(H) := \max_{\psuedo{\rho} \in \widetilde{\D}_n(k)} \tr[H \psuedo{\rho}] \geq \lambda_{\max}(H).
\end{equation}
If we had $\psuedo{\rho} \succeq 0$, the inequality would be tight.  Indeed, \Cref{item:psuedo-positivity} of \Cref{def:psuedo-density} is a relaxation of $\psuedo{\rho} \succeq 0$, and when $k=n$ we are guaranteed $\psuedo{\rho} \succeq 0$.  

\begin{lemma}\label{lem:n-positivity}
$\psuedo{\D}_n(n) = \D_n$
\end{lemma}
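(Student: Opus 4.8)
The plan is to prove the two inclusions $\D_n \subseteq \psuedo{\D}_n(n)$ and $\psuedo{\D}_n(n) \subseteq \D_n$ separately; the first is immediate, and the second rests on a single linear-algebra observation.

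First I would dispatch $\D_n \subseteq \psuedo{\D}_n(n)$. If $\rho \in \D_n$ then $\tr[\rho] = 1$, and for \emph{any} Pauli polynomial $P$ — which is Hermitian — we have $\tr[\rho P^2] \geq 0$ because $\rho \succeq 0$: writing $\rho = \sum_l p_l \ketbra{\psi_l}{\psi_l}$ with $p_l \geq 0$, this equals $\sum_l p_l \norm{P\ket{\psi_l}}^2 \geq 0$. Both clauses of \Cref{def:psuedo-density} therefore hold (with room to spare, since this half does not even use the degree bound $k = n$), so $\rho \in \psuedo{\D}_n(n)$.

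For the reverse inclusion I would first record the key fact: the $4^n$ Pauli terms $\sigma_1 \otimes \cdots \otimes \sigma_n$ with $\sigma_i \in \{\Id,X,Y,Z\}$, all of which lie in $\mathcal{P}_n(n)$, form a real basis of $\herm_n$. They are Hermitian; they are pairwise orthogonal under the Hilbert–Schmidt inner product, $\tr[\tau\tau'] = 2^n \delta_{\tau\tau'}$, hence linearly independent; and there are $4^n$ of them, matching $\dim_\R \herm_n = 4^n$. Consequently \emph{every} Hermitian operator on $n$ qubits is a Pauli polynomial of degree at most $n$. Now take $\psuedo{\rho} \in \psuedo{\D}_n(n)$; by definition $\psuedo{\rho} \in \herm_n$ and $\tr[\psuedo{\rho}] = 1$, so it remains only to show $\psuedo{\rho} \succeq 0$. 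Fix an arbitrary $\ket{\psi}$ and apply property~\ref{item:psuedo-positivity} of \Cref{def:psuedo-density} with $P$ the Hermitian operator $\ketbra{\psi}{\psi}$ (a legitimate degree-$\leq n$ Pauli polynomial by the observation); since $P^2 = P$ this yields $\bra{\psi}\psuedo{\rho}\ket{\psi} = \tr[\psuedo{\rho}\, P^2] \geq 0$. As $\ket{\psi}$ was arbitrary, $\psuedo{\rho} \succeq 0$, and hence $\psuedo{\rho} \in \D_n$. (Equivalently, apply the condition with $P = B^{1/2}$ for each $B \succeq 0$ and invoke self-duality of the positive semidefinite cone.)

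There is no genuine obstacle here: the entire content is the spanning fact for degree-$\leq n$ Pauli terms, which makes property~\ref{item:psuedo-positivity} at level $k = n$ exactly equivalent to $\psuedo{\rho} \succeq 0$. The lemma is simply recording that at the top level the quantum Lasserre relaxation~\eqref{prob:lasserre-psuedo-density} is exact, i.e.\ $\nu_n(H) = \lambda_{\max}(H)$.
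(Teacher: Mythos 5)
Your proof is correct and follows essentially the same route as the paper: both arguments establish $\psuedo{\D}_n(n) \subseteq \D_n$ by taking $P = \ketbra{\psi}{\psi}$ in the positivity condition and using $P^2 = P$ to conclude $\braket{\psi|\psuedo{\rho}|\psi} \geq 0$. You merely make explicit two points the paper leaves implicit — that the degree-$\leq n$ Pauli terms span $\herm_n$ (so $\ketbra{\psi}{\psi}$ is an admissible $P$) and the easy inclusion $\D_n \subseteq \psuedo{\D}_n(n)$ — which is a harmless and arguably welcome addition.
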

\begin{proof}
Let $\psuedo{\rho} \in \psuedo{\D}_n(n)$.  Now for any unit $\ket{\psi} \in (\C^2)^{\otimes n}$, set $P = \ketbra{\psi}{\psi}$ in \Cref{item:psuedo-positivity} to yield $\braket{\psi|\psuedo{\rho}|\psi} = \tr[\psuedo{\rho}P] = \tr[\psuedo{\rho}P^2] \geq 0$. 
\end{proof}
To summarize, $\psuedo{\rho} \in \widetilde{\D}_n(k) \supset \widetilde{\D}_n(k+1) \supset \cdots \supset \widetilde{\D}_n(n) = \D_n$ is a relaxation of positivity.

Although pseudo-densities offer an elegant perspective into the quantum Lasserre hierarchy, our treatment will not explicitly refer to them.  Instead we will think of $\las_k$ as providing a relaxed value for each Pauli term in $H$.

\begin{definition}[Relaxed Hamiltonian objective] \label{def:relaxed-objective}
Let $\psuedo{\rho}$ be an optimal pseudo-density solution to $\las_k$, achieving value $\nu_k(H)$ (see \Cref{prob:lasserre-psuedo-density}).  For each Pauli term $\phi$ of $H$, let the \emph{relaxed value} of $\phi$ be
\begin{equation*}
\braket{\phi} := \tr[\psuedo{\rho} \phi].    
\end{equation*}
Based on \Cref{def:strictly-quadratic-2-LH}, the \emph{relaxed objective value} from $\las_k$ is then:
\begin{equation*}
\nu_k(H) = \sum_{ij \in E}w_{ij}\left(c_{ij}(\Id) + \sum_{\tau,\sigma \in \{X,Y,Z\}} c_{ij}(\tau_i\sigma_j)\braket{\tau_i\sigma_j}\right).
\end{equation*}
We will find it convenient to take
\begin{align*}
\mu_{ij} &:= c_{ij}(\Id) + \sum_{\tau,\sigma \in \{X,Y,Z\}} c_{ij}(\tau_i\sigma_j)\braket{\tau_i\sigma_j},\\
v_{ij}&:= \frac{4}{3} \sum_{\tau,\sigma \in \{X,Y,Z\}} c_{ij}(\tau_i\sigma_j)\braket{\tau_i\sigma_j}
\end{align*}
so that $\nu_k(H) = \sum_{ij} w_{ij} \mu_{ij}$.  
\end{definition}
\begin{remark}[Quantum Max Cut edge values]\label{rem:qmc_values}
For Quantum Max Cut, $\mu_{ij}$ and $v_{ij}$ specialize to
\begin{align*}
\mu_{ij} &:= \frac{1 - \braket{X_iX_j} - \braket{Y_iY_j} - \braket{Z_iZ_j}}{4},\\
v_{ij}&:= -\frac{\braket{X_iX_j} + \braket{Y_iY_j} + \braket{Z_iZ_j}}{3},
\end{align*}
where we will be able to assume $\mu_{ij} \in [0,1]$ and $v_{ij} \in [-\frac{1}{3},1]$ (\Cref{lem:QMC-values-ranges}).  The latter is interpreted as the average value earned on each of the three traceless terms.  These values will play a similar role for more general $2$-local Hamiltonian problems where each local term acts as a rank-$1$ projector on the corresponding 2-qubit subspace.
\end{remark}

We will derive approximation guarantees based on the expected performance of an approximation algorithm on an edge.
\begin{proposition} \label{prop:approx}
Let $\rho \in \D_n$ be a (description of a) solution generated by an approximation algorithm for an instance $H = \sum_{ij \in E} w_{ij} H_{ij}$ of a $2$-local Hamiltonian problem.  If for each $ij \in E$:
\begin{equation*}
Tr[H_{ij}\rho] \geq \alpha \mu_{ij},
\end{equation*}
for some $\alpha \in (0,1]$, then $\rho$ constitutes an $\alpha$-approximation of $\lambda_{\max}(H)$.
\end{proposition}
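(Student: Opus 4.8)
The plan is to unwind the definitions and combine linearity of the trace with the fact that the quantum Lasserre hierarchy produces an upper bound on $\lambda_{\max}(H)$. The argument is short, so I will only indicate the order of steps rather than belabor the arithmetic.

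First I would expand $\Tr[H\rho] = \sum_{ij \in E} w_{ij}\Tr[H_{ij}\rho]$, using linearity of the trace and the decomposition $H = \sum_{ij \in E} w_{ij} H_{ij}$. Next, I would apply the per-edge hypothesis $\Tr[H_{ij}\rho] \geq \alpha \mu_{ij}$ term by term; since each weight satisfies $w_{ij} > 0$, multiplying and summing preserves the inequality, yielding $\Tr[H\rho] \geq \alpha \sum_{ij \in E} w_{ij}\mu_{ij}$.

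The third step is to recognize, via \Cref{def:relaxed-objective}, that the $\mu_{ij}$ appearing in the hypothesis are precisely the relaxed edge values extracted from a fixed optimal $\las_k$ pseudo-density, so that $\sum_{ij \in E} w_{ij}\mu_{ij} = \nu_k(H)$. Finally I would invoke \Cref{prob:lasserre-psuedo-density}, namely $\nu_k(H) \geq \lambda_{\max}(H)$, together with $\alpha > 0$, to conclude $\Tr[H\rho] \geq \alpha \nu_k(H) \geq \alpha \lambda_{\max}(H)$, which is exactly the claimed $\alpha$-approximation guarantee; the polynomial running time is inherited from the hypothesis that $\rho$ is produced by an approximation algorithm.

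I do not expect a genuine obstacle here. The only point requiring care is bookkeeping: the weighted sum telescopes to $\nu_k(H)$, rather than to some other linear functional, precisely because the $\mu_{ij}$ in the statement are those defined relative to the same $\las_k$ solution. Individual $\mu_{ij}$ need not be nonnegative, but this is irrelevant, since the chain of inequalities uses only $w_{ij}>0$, $\alpha>0$, and the global bound $\nu_k(H)\ge\lambda_{\max}(H)$; no property of $\rho$ beyond $\rho\in\D_n$ enters.
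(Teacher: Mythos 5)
Your argument is correct and is essentially identical to the paper's one-line proof: expand $\tr[H\rho]$ by linearity, apply the per-edge hypothesis with $w_{ij}>0$, identify $\sum_{ij} w_{ij}\mu_{ij} = \nu_k(H)$ via \Cref{def:relaxed-objective}, and conclude with $\nu_k(H) \geq \lambda_{\max}(H)$ from \Cref{prob:lasserre-psuedo-density}. No gaps.
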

\begin{proof}We have
$\tr[H\rho] = \sum_{ij \in E} w_{ij} \tr[H_{ij}\rho] \geq \alpha \sum_{ij \in E} w_{ij}\mu_{ij} = \alpha \nu_k(H) \geq \alpha \lambda_{\max}(H),$ where the first inequality follows from the hypothesis and the second by \Cref{prob:lasserre-psuedo-density}.
\end{proof}

\subsection{Vector Solutions from the Quantum Lasserre Hierarchy}

We have discussed the relaxed values provided by $\las_k$, but our rounding algorithms will need more than just these values.  In particular we think of the output of solving $\las_k$ as a collection of unit vectors $\ket{\tau} \in \C^d$, for each Pauli term $\tau$ of degree at most $k$; we may assume $d = O(n^k)$.  

To motivate this notion of a solution, consider a true pure quantum state $\rho = \ketbra{\psi}{\psi}$, and let $\ket{\tau} := \tau \ket{\psi}$, for each $\tau$.  Then we would define the value as $\braket{\tau} := \tr[\rho \tau]$.  For any Pauli terms $\tau$ and $\sigma$ we would have $\braket{\tau | \sigma} = \braket{\psi | \tau \sigma | \psi} = \braket{\tau \sigma}$.  It turns out we may view $\las_k$ as providing vectors with this property, for any $\tau, \sigma$ of degree at most $k$.  This gives a different characterization of $\nu_k(H)$, as the following vector program, equivalent to an SDP.  A more complete and formal treatment, that also demonstrates equivalence to the pseudo-density perspective, is given in \Cref{sec:lasserre}.
\begin{definition}[Quantum Lasserre hierarchy] \label{def:intro-lasserre}
We are given as input $H = \sum_{\phi \in \mathcal{P}_n(2k)} c(\phi) \phi$, with $H \in \herm_n$.  Level $k$ of the Quantum Lasserre hierarchy, denoted $\las_k$, is defined by the following vector program:
\begin{alignat*}{2}
\nonumber 
    \nu_k(H) := \max \sum_{\phi \in \mathcal{P}_n(2k)} & c(\phi) \braket{\phi}\\
    s.t. \qquad
    \braket{\tau | \tau} &= 1
        \quad && \forall \tau \in \mathcal{P}_n(k) \\
    \braket{\tau | \sigma} &= \braket{\tau\sigma} 
        \quad && \forall \tau,\sigma \in \mathcal{P}_n(k) \\
\nonumber
    \ket{\tau} &\in \C^d
        \quad && \forall \tau \in \mathcal{P}_n(k),
\end{alignat*}
for any integer $d \geq |\mathcal{P}_n(k)|$.
\end{definition}

\section{Algorithm and Analysis Overview}

We begin with a high-level motivation of product-state approximations for local Hamiltonian problems as well as our algorithm and analysis.

\subsection{Parallels between Max Cut and Quantum Max Cut}
\label{sec:MC-QMC-parallels}

We motivate both relaxations and approximations for QMC by drawing parallels to the classical Max Cut problem.  This will help place our main result in contrast with analogous classical results.  The SDP relaxation for Max Cut employed in the celebrated Goemans-Williamson algorithm~\cite{G95} is typically cast as the following vector program.  
\begin{definition}[Max Cut vector program]
\label{def:MC-relaxation}
For a nonnegatively weighted graph $G=(V,E)$ on $n$ vertices define:
\begin{alignat*}{2}
    \nu_{MC} := \max \sum_{ij \in E} & w_{ij} \mathrlap{\frac{1 -\braket{Z_i|Z_j}}{2}}\\
    s.t. \qquad
        \braket{Z_i|Z_i} &=1  \qquad && \forall i \in V\\
        \ket{Z_i} &\in \R^d
        && \forall i \in V.
\end{alignat*}
The value $\nu_{MC}$ is an upper bound on the weight of a maximum cut in $G$ and remains the same for any integer $d \geq n$.  We use bra-ket notation for vectors for analogy with quantum relaxations used for Quantum Max Cut.
\end{definition}

A solution of value $\nu_{MC} - \varepsilon$ may be obtained in time polynomial in the instance size and $\log(1/\varepsilon)$. For such a solution, $\{\ket{Z_i}\}_{i \in V}$ the Goemans-Williamson algorithm selects a random Gaussian vector $\ket{r}$ and assigns each vertex $i$ to a side of a cut based on the sign of $\braket{Z_i|r}$.  This gives a randomized $\alpha_{GW}$-approximation, where $\alpha_{GW} \approx 0.878$. Khot, Kindler, Mossel, and O'Donnell demonstrate that is Unique-Games-hard (UG-hard) to obtain an $(\alpha_{GW} + \varepsilon)$-approximation, for any $\varepsilon > 0$~\cite{K07}.  Raghavendra and Steurer show that this phenomenon holds for all constraint-satisfaction problems (CSPs), namely that an approximation matching the integrality gap of the basic level-1 SDP relaxation is possible~\cite{RS09}, and it is UG-hard to do better~\cite{R08}.  


As local Hamiltonian problems are natural quantum analogues of CSPs, one might strive for quantum analogues of the above results.  For this one first needs SDP relaxations for local Hamiltonian problems.

\paragraph{Quantum Max Cut relaxation.} Gharibian and Parekh~\cite{G19} obtain a vector-program relaxation for Quantum Max Cut in the vein of that for Max Cut.
\begin{definition}[Quantum Max Cut vector program]
\label{def:QMC-relaxation}
For a nonnegatively weighted graph $G=(V,E)$ on $n$ vertices define:
\begin{alignat*}{2}
    \nu_{QMC} := \max \sum_{ij \in E} & w_{ij} \mathrlap{\frac{1 - \braket{X_i|X_j} - \braket{Y_i|Y_j} - \braket{Z_i|Z_j}}{4}}\\
    s.t. \qquad
        \braket{\tau_i|\tau_i} &=1  \qquad && \forall i\in V\text{ and }\tau \in \{X,Y,Z\}\\
        \braket{\tau_i | \sigma_i} &= 0 
        \qquad && \forall i\in V\text{ and }\tau,\sigma \in \{X,Y,Z\} : \tau \not= \sigma\\
    \ket{\tau_i} &\in \R^d
        && \forall i\in V\text{ and }\tau \in \{X,Y,Z\}.
\end{alignat*}
The value $\nu_{MC}$ is an upper bound on $\lambda_{\max}(\sum_{ij \in E} w_{ij}(\Id-X_iX_j-Y_iY_j-Z_iZ_j)/4)$ and remains the same for any integer $d \geq 3n$ (proven more generally in \Cref{sec:lasserre}).  In contrast to Max Cut, for each vertex $i$ corresponding to a qubit, we have three mutually orthogonal unit vectors, $\ket{X_i}$, $\ket{Y_i}$, and $\ket{Z_i}$.
\end{definition}

This vector program may be further simplified to closely resemble that of Max Cut.  

\begin{definition}[Simplified Quantum Max Cut vector program]
\label{def:SQMC-relaxation}
For a nonnegatively weighted graph $G=(V,E)$ on $n$ vertices define:
\begin{alignat*}{2}
    \nu_{SQMC} := \max \sum_{ij \in E} & w_{ij} \mathrlap{\frac{1 - 3\braket{W_i|W_j}}{4}}\\
    s.t. \qquad
        \braket{W_i|W_i} &=1  \qquad && \forall i\in V\\
    \ket{W_i} &\in \R^d
        && \forall i\in V.
\end{alignat*}
The value $\nu_{SQMC}$ remains the same for any $d \geq n$.  The value of edge $ij$ in this context is $v_{ij}=-\braket{W_i|W_j}$.
\end{definition}
Indeed, it follows from~\cite{G19} that the two QMC relaxations above are equivalent:
\begin{lemma}\label{lem:QMC_to_MC}$\nu_{QMC} = \nu_{SQMC}$.
\end{lemma}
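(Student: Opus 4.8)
I would establish the two inequalities $\nu_{QMC} \ge \nu_{SQMC}$ and $\nu_{QMC} \le \nu_{SQMC}$ separately, in each case converting a feasible solution of one vector program into a feasible solution of the other with the same objective value. For $\nu_{QMC} \ge \nu_{SQMC}$, start from a feasible SQMC solution $\{\ket{W_i}\}_{i \in V} \subseteq \R^n$ (the optimum of SQMC is attained with $d = n$) and fix an orthonormal triple $\ket{e_1}, \ket{e_2}, \ket{e_3} \in \R^3$. Setting $\ket{X_i} := \ket{W_i} \otimes \ket{e_1}$, $\ket{Y_i} := \ket{W_i} \otimes \ket{e_2}$, $\ket{Z_i} := \ket{W_i} \otimes \ket{e_3}$ in $\R^{3n}$, a direct computation gives $\braket{\tau_i | \tau_i} = \braket{W_i | W_i} = 1$, $\braket{\tau_i | \sigma_i} = \braket{W_i | W_i}\braket{e_a | e_b} = 0$ when $\tau \neq \sigma$, and $\braket{\tau_i | \tau_j} = \braket{W_i | W_j}$ for each $\tau \in \{X,Y,Z\}$. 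Hence $\{\ket{X_i}, \ket{Y_i}, \ket{Z_i}\}$ is QMC-feasible (with dimension $3n$, matching \Cref{def:QMC-relaxation}), and its QMC objective equals $\sum_{ij \in E} w_{ij} \frac{1 - 3\braket{W_i | W_j}}{4}$, which is the SQMC value we started with.

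For $\nu_{QMC} \le \nu_{SQMC}$, start from a feasible QMC solution and consider the three $n \times n$ Gram matrices $G^X, G^Y, G^Z$ of the vector families $\{\ket{X_i}\}$, $\{\ket{Y_i}\}$, $\{\ket{Z_i}\}$; each is positive semidefinite with every diagonal entry equal to $1$. Their average $M := \tfrac{1}{3}(G^X + G^Y + G^Z)$ is then positive semidefinite with unit diagonal, so it has a Gram factorization $M_{ij} = \braket{W_i | W_j}$ by unit vectors $\ket{W_i} \in \R^n$, which is an SQMC-feasible solution. Since $3\braket{W_i | W_j} = \braket{X_i | X_j} + \braket{Y_i | Y_j} + \braket{Z_i | Z_j}$, the SQMC objective of this solution coincides with the QMC objective of the original one, giving the inequality. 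Combining both directions yields $\nu_{QMC} = \nu_{SQMC}$.

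There is no substantial obstacle here; the only points requiring a moment's care are (i) that the per-qubit orthogonality constraints $\braket{\tau_i | \sigma_i} = 0$ of the QMC program cost nothing, since they are enforced for free by tensoring against a fixed orthonormal frame, and are simply discarded when passing to SQMC, and (ii) the elementary fact that a convex combination of Gram matrices is again a Gram matrix, which is exactly what lets us collapse the three vectors per qubit into one. The dimension bookkeeping ($d \ge n$ for SQMC versus $d \ge 3n$ for QMC) is routine given the dimension-independence already asserted for both programs, and all of this is implicit in~\cite{G19}.
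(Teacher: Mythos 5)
Your proof is correct and follows essentially the same route as the paper: embedding the single SQMC vector into three mutually orthogonal copies for one direction, and averaging the three per-qubit vectors for the other. The only cosmetic difference is that for $\nu_{QMC} \le \nu_{SQMC}$ you invoke a Gram factorization of $\tfrac{1}{3}(G^X+G^Y+G^Z)$ abstractly, whereas the paper exhibits the explicit witness $\ket{W_i} := (\ket{X_i}\oplus\ket{Y_i}\oplus\ket{Z_i})/\sqrt{3}$, which realizes exactly that factorization.
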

\begin{proof}
First we show $\nu_{SQMC} \geq \nu_{QMC}$.
For an optimal solution $\{\ket{X_i},\ket{Y_i},\ket{Z_i}\}_{i \in V}$ to the vector program of \Cref{def:QMC-relaxation}, we set $\ket{W_i} := (\ket{X_i} \oplus \ket{Y_i} \oplus \ket{Z_i})/\sqrt{3}$.  We have $\braket{W_i|W_i} = 1$ and $\braket{W_i|W_j} = (\braket{X_i|X_j} + \braket{Y_i|Y_j} + \braket{Z_i|Z_j})/3$, so $\{\ket{W_i}\}_{i \in V}$ is a feasible solution of objective value $\nu_{QMC}$ for the simplified QMC vector program.  Note that the values of the edges do not change under this transformation.  

For $\nu_{QMC} \geq \nu_{SQMC}$, take an optimal solution $\{\ket{W_i}\}_{i \in V}$ to the simplified QMC vector program, and define: 
\begin{align*}
\ket{X_i} &:= \ket{W_i} \oplus \vec{0} \oplus \vec{0} \\
\ket{Y_i} &:= \vec{0} \oplus \ket{W_i} \oplus \vec{0} \\
\ket{Z_i} &:= \vec{0} \oplus \vec{0} \oplus \ket{W_i}. 
\end{align*}
\noindent We have that $\ket{X_i}$, $\ket{Y_i}$, and  $\ket{Z_i}$ are mutually orthogonal unit vectors with $3\braket{W_i|W_j} = \braket{X_i|X_j} + \braket{Y_i|Y_j} + \braket{Z_i|Z_j}$.  This gives us a feasible solution to the program of \Cref{def:QMC-relaxation} with objective $\nu_{SQMC}$.
\end{proof}
The simplified QMC and Max Cut vector programs differ only in their objective values: 
\begin{equation}\label{eq:MC-QMC-objectives}
\nu_{QMC} = \frac{3\nu_{MC}-1}{2}.
\end{equation}
A natural question is whether the similarity in relaxations leads to a similarity in approximability of Max Cut and Quantum Max Cut.

\paragraph{Approximating (Quantum) Max Cut.}  \Cref{alg:QMC} is an approximation algorithm for QMC that produces a product state as a solution.  Its approximation guarantee matches that of \cite{G19}, but it is adapted to use the simplified QMC relaxation from \Cref{def:SQMC-relaxation} and more closely resembles the Goemans-Williamson algorithm for Max Cut.
\begin{algorithm}[Max Cut]\label{alg:MC}
\emph{Input: }$\ket{Z_i} \in \R^d$ for each $i \in V$
\begin{enumerate}
    \item Pick a random $\ket{r} \sim \mathcal{N}(0,1)^d$
    \item Output $u_i := \text{Unit}(\braket{Z_i|r})\ \forall i$
\end{enumerate}
\end{algorithm}
\begin{algorithm}[Quantum Max Cut]\label{alg:QMC}
\emph{Input: }$\ket{W_i} \in \R^d$ for each $i \in V$
\begin{enumerate}
    \item Pick random $\ket{r_x},\ket{r_y},\ket{r_z} \sim \mathcal{N}(0,1)^d$
    \item Output $u_i := \text{Unit}(\braket{W_i|r_x},\braket{W_i|r_y},\braket{W_i|r_z})\ \forall i$
\end{enumerate}
\end{algorithm}

We include the Goemans-Williamson algorithm above for comparison with \Cref{alg:QMC}, and $\text{Unit}(v)$ returns a unit vector in the direction of $v$.  While in the classical case $u_i \in \{-1,1\}$ assigns $i$ to a side of a cut, in the quantum case $u_i$ is taken as a Bloch vector for a pure state $\rho_i$ on qubit $i$.  The output of the algorithm is then the product state $\rho = \otimes_{i} \rho_i$.  The expected performance of \Cref{alg:MC} on edge $ij$ only depends on $\braket{Z_i|Z_j}$, while that of \Cref{alg:QMC} only depends on $\braket{W_i|W_j}$.  The precise expectation follows from work of Bri{\"e}t, de Oliveira Filho, and Vallentin:

\begin{lemma}[Lemma 2.1 from~\cite{BJ10}]\label{lem:Briet-et-al}
Let $u, v$ be unit vectors in $\R^d$ and let $Z \in \R^{k \times d}$ be a random matrix whose entries are distributed independently according to the standard normal distribution with mean 0 and variance 1. Then,
\begin{equation*}
\mathbb{E}\left[ \frac{Zu}{\Vert Zu \Vert} \cdot \frac{Zv}{\Vert Z v\Vert}\right] = F(k, u \cdot v),
\end{equation*}
where
\begin{equation}\label{eq:F_with_r}
F(k,t) := \frac{2}{k}\left(\frac{\Gamma((k+1)/2)}{\Gamma(k/2)}\right)^2 t \,_2F_1\left(\frac{1}{2},\frac{1}{2};\frac{k}{2} + 1;t^2\right),
\end{equation}
with $_2F_1$ the Gaussian hypergeometric function.
\end{lemma}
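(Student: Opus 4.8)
The plan is to prove \Cref{lem:Briet-et-al} by reducing the left-hand side to a one-variable special-function integral. Set $X := Zu$ and $Y := Zv$. Since the rows of $Z$ are i.i.d.\ standard Gaussians in $\mathbb R^d$ and $\|u\| = \|v\| = 1$, the pair $(X,Y)$ is jointly Gaussian in $\mathbb R^k$ with $X$ and $Y$ each marginally $\mathcal N(0, I_k)$ and cross-covariance $\mathbb E[X_a Y_b] = (u\cdot v)\,\delta_{ab}$; equivalently, the $k$ coordinate pairs $(X_a, Y_a)$ are i.i.d.\ centered bivariate normals with covariance $\bigl(\begin{smallmatrix}1 & t\\ t & 1\end{smallmatrix}\bigr)$, where $t := u\cdot v$. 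Hence the expectation depends on $u,v$ only through $t$; write it as $\Phi_k(t) = \mathbb E\bigl[\langle X,Y\rangle/(\|X\|\,\|Y\|)\bigr]$. It is bounded by $1$, odd in $t$, analytic on $(-1,1)$, with boundary value $\Phi_k(1) = 1$ (when $u = v$, $X = Y$). The goal is to show $\Phi_k(t) = F(k,t)$.

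To evaluate $\Phi_k(t)$ I would linearize both norms via the Gamma identity $r^{-1} = \pi^{-1/2}\int_0^\infty \lambda^{-1/2} e^{-\lambda r^2}\,d\lambda$ and interchange expectation with the two resulting integrals; this is legitimate by Tonelli, since $\langle X,Y\rangle/(\|X\|\|Y\|)$ is bounded and the Gaussian integrals converge absolutely. One then gets $\Phi_k(t) = \pi^{-1}\iint_{(0,\infty)^2} (\lambda\mu)^{-1/2}\,\mathbb E\bigl[\langle X,Y\rangle e^{-\lambda\|X\|^2 - \mu\|Y\|^2}\bigr]\,d\lambda\,d\mu$, and the inner expectation factorizes over coordinates as $k\,\mathbb E[X_1 Y_1 e^{-\lambda X_1^2 - \mu Y_1^2}]\cdot(\mathbb E[e^{-\lambda X_1^2 - \mu Y_1^2}])^{k-1}$. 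Each factor is an elementary Gaussian integral obtained by completing the square in the $2\times2$ quadratic form, and the product evaluates to $k t\,[(1+2\lambda)(1+2\mu) - 4\lambda\mu t^2]^{-(k+2)/2}$, with $\langle X,Y\rangle$ contributing the single factor of $t$.

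It then remains to compute $\Phi_k(t) = \tfrac{kt}{\pi}\iint_{(0,\infty)^2}(\lambda\mu)^{-1/2}[(1+2\lambda)(1+2\mu) - 4\lambda\mu t^2]^{-(k+2)/2}\,d\lambda\,d\mu$. A substitution clearing the factors of $\tfrac12$ and the square roots (e.g.\ $\lambda = x^2/2$, $\mu = y^2/2$) turns this into an integral over $(x,y) \in (0,\infty)^2$ whose inner $y$-integral is a standard Beta-type integral $\int_0^\infty (1 + c y^2)^{-s}\,dy = \tfrac{\sqrt\pi}{2}\tfrac{\Gamma(s-1/2)}{\Gamma(s)}c^{-1/2}$; the leftover $x$-integral, after the substitution $w = x^2/(1+x^2)$, becomes $\tfrac12\int_0^1 w^{-1/2}(1-w)^{(k-1)/2}(1-t^2 w)^{-1/2}\,dw$, which Euler's integral representation identifies as $\tfrac12\,\tfrac{\Gamma(1/2)\,\Gamma((k+1)/2)}{\Gamma(k/2+1)}\,{}_2F_1(\tfrac12,\tfrac12;\tfrac k2+1;t^2)$. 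Reassembling the $\Gamma$-factors — using $\Gamma(k/2+1) = \tfrac k2\Gamma(k/2)$ — collapses the prefactor to $\tfrac 2k\bigl(\Gamma(\tfrac{k+1}2)/\Gamma(\tfrac k2)\bigr)^2$, yielding $\Phi_k(t) = F(k,t)$.

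The probabilistic content here is light; the main obstacle is simply carrying out these substitutions and tracking the $\Gamma$-function normalizations without slips, so I would cross-check against the special values $\Phi_k(0) = 0$ and $\Phi_k(1) = 1$ — the latter via Gauss's evaluation ${}_2F_1(\tfrac12,\tfrac12;\tfrac k2+1;1) = \Gamma(\tfrac k2+1)\Gamma(\tfrac k2)/\Gamma(\tfrac{k+1}2)^2$, which indeed gives $F(k,1) = 1$. Two minor loose ends: the degenerate cases $t = \pm1$ should be handled by continuity of both sides (or ignored, since $u\cdot v \in (-1,1)$ in our application), and if the double integral proves unwieldy there is a parallel route — condition on $X$, write $Y = tX + \sqrt{1-t^2}\,W$ with $W\sim\mathcal N(0,I_k)$ independent, note that the conditional law of $Y$ is rotationally symmetric about the axis $\mathbb R X$ so that $\mathbb E[\hat X\cdot\hat Y\mid X] = g(\|X\|)$, and reduce to $\Phi_k(t) = \mathbb E_{R\sim\chi_k}[g(R)]$ with $g$ a scalar Gaussian integral expressible through ${}_1F_1$, which averaging over the chi-distributed $R$ promotes to ${}_2F_1$.
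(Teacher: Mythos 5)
Your derivation is correct, but note that the paper itself does not prove this statement at all: it is imported verbatim as Lemma 2.1 of the cited work of Bri\"et, de Oliveira Filho, and Vallentin, so there is no internal proof to compare against. What you have written is a self-contained replacement, and the key computations check out: the reduction to $k$ i.i.d.\ bivariate normal pairs with correlation $t=u\cdot v$ is right; the tilted-Gaussian evaluations give $\mathbb{E}\bigl[e^{-\lambda X_1^2-\mu Y_1^2}\bigr]=D^{-1/2}$ and $\mathbb{E}\bigl[X_1Y_1e^{-\lambda X_1^2-\mu Y_1^2}\bigr]=tD^{-3/2}$ with $D=(1+2\lambda)(1+2\mu)-4\lambda\mu t^2$, so the inner expectation is indeed $ktD^{-(k+2)/2}$; the substitutions $\lambda=x^2/2$, $\mu=y^2/2$, the Beta-type $y$-integral, and the change of variables $w=x^2/(1+x^2)$ produce exactly $\tfrac12\int_0^1 w^{-1/2}(1-w)^{(k-1)/2}(1-t^2w)^{-1/2}\,dw$, which Euler's representation identifies with $\tfrac12\,\Gamma(\tfrac12)\Gamma(\tfrac{k+1}{2})\Gamma(\tfrac k2+1)^{-1}\,{}_2F_1(\tfrac12,\tfrac12;\tfrac k2+1;t^2)$, and the $\Gamma$-bookkeeping via $\Gamma(\tfrac k2+1)=\tfrac k2\Gamma(\tfrac k2)$ collapses to the stated prefactor $\tfrac2k\bigl(\Gamma(\tfrac{k+1}2)/\Gamma(\tfrac k2)\bigr)^2$; your endpoint checks $F(k,0)=0$ and $F(k,1)=1$ via Gauss's theorem are also right. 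One small wording fix: Tonelli as such does not apply to the signed integrand $\langle X,Y\rangle e^{-\lambda\Vert X\Vert^2-\mu\Vert Y\Vert^2}$; instead bound $\abs{\langle X,Y\rangle}\leq \Vert X\Vert\,\Vert Y\Vert$, apply Tonelli to the resulting nonnegative integrand (total mass at most $\pi$ after the two Gamma linearizations), and then invoke Fubini. With that adjustment, and handling $t=\pm1$ by continuity as you indicate, the argument is complete and could stand in place of the citation.
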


Take $u = \ket{Z_i}$, $v = \ket{Z_j}$, and $k=1$ for the Max Cut case, and take $u = \ket{W_i}$, $v = \ket{W_j}$, and $k=3$ for the QMC case.  The matrix $Z$ from the lemma has row $\ket{r}$ in the former case and rows $\ket{r_x}$, $\ket{r_y}$, and $\ket{r_z}$ in the latter.  From this one may see that the expected performance of \Cref{alg:MC,alg:QMC} on edge $ij$ is
\begin{equation*}
\frac{1-F(1,\braket{Z_i|Z_j})}{2}\text{ and } \frac{1-F(3,\braket{W_i|W_j})}{4},
\end{equation*}
for Max Cut and QMC, respectively. To bound the overall approximation ratio, it suffices to bound the worst-case approximation ratio on any edge, by the argument in \Cref{prop:approx}.  Hence the worst-case approximation ratios are
\begin{equation*}
\min_{t \in [-1,1]}\frac{1-F(1,t)}{1-t}\text{ and } \min_{t \in [-1,1/3]}\frac{1-F(3,t)}{1-3t},
\end{equation*}
for Max Cut and QMC, which evaluate to $\approx 0.878$ and $\approx 0.498$, respectively.  

\paragraph{Approximability of QMC using product states.} We summarize the parallels in the above approximations for Max Cut and QMC.
\begin{enumerate}
\item The same relaxation is employed for both, differing only in objective value (\Cref{def:MC-relaxation,def:SQMC-relaxation}).
\item The rounding algorithm for QMC is a natural 3-dimensional analogue of that for Max Cut (\Cref{alg:MC,alg:QMC}), and the analysis of both algorithms boils down to the same hypergeometric function (evaluated at different values of $k$).
\item For Max Cut the integrality gap of the relaxation matches the approximation ratio of the algorithm, as proven by Feige and Schechtman~\cite{FS02}. For QMC this is conjectured to be the case as well~\cite{H21}.
\end{enumerate}
For Max Cut, it is Unique-Games-hard to obtain an constant-factor improvement in approximation ratio over \Cref{alg:MC}~\cite{K07}.  Since product states exhibit no entanglement and are classical in some sense, one might expect an analogous hardness result to hold for approximating QMC using product states.  As mentioned in \Cref{sec:introduction}, this is conjectured to be the case when one seeks to approximate the optimal product-state solution to QMC~\cite{H21} (as opposed to using a product state to approximate the optimal quantum solution).  Yet we show, perhaps surprisingly, that it is possible to improve upon \Cref{alg:QMC}, yielding an unconditionally optimal product-state approximation algorithm for QMC.

\subsection{Approximation Algorithms}
\subsubsection{Rounding Algorithm for Unweighted Graphs with Bounded Minimum Degree}
Our first algorithm is a relatively simple algorithm that performs well on unweighted instances of QMC of some minimum degree, $d$.  Our goal in including this algorithm is to demonstrate that product states perform well as the minimum degree $d$ increases.  This follows more generally from results of Brand{\~a}o and Harrow~\cite{B16}, but for QMC we are able to demonstrate this more directly.  For rounding we will simply apply the standard product state rounding technique (\Cref{alg:QMC}) to the second level of the Lasserre hierarchy.  A minor technical detail is that we must convert the vectors $\{\ket{X_i}, \ket{Y_i}, \ket{Z_i}\}$ to a single vector $\ket{W_i}$ to use \Cref{alg:QMC}, but this can be done without changing the edge values by the proof of \Cref{lem:QMC_to_MC}. Note that this rounding algorithm is nearly identical to \cite{G19}.  The main difference is that \Cref{alg:reg_graph} uses the tighter relaxation $\las_2$ rather than $\las_1$.  

\begin{mdframed}
\begin{algorithm}[Product-state approximation algorithm for Unweighted min-degree-$d$ Quantum Max Cut Instances]
\label{alg:reg_graph}
\mbox{}\\[1ex]
\emph{Input}: A QMC Hamiltonian $H=\sum_{ij \in E} H_{ij}$ (\Cref{def:QMC-def}) with $G=(V, E)$ a $d$-regular graph.
\begin{enumerate}
    \item  Solve the instance of level 2 of the quantum Lasserre hierarchy, $\las_2$ (\Cref{def:intro-lasserre}) for $H$.
    \item  Convert $\las_2$ vectors $\{(\ket{X_i}, \ket{Y_i}, \ket{Z_i}\}_{i\in V}$ into a single vector for each $i\in V$, $\{\ket{W_i}\}_{i\in V}$.  Execute \Cref{alg:QMC} on $\{\ket{W_i}\}_{i\in V}$ to find Bloch vectors for all vertices in $V$.  Denote the output Bloch vectors as: $\{(\theta_i^X, \theta_i^Y, \theta_i^Z)\}_{i \in V}$.
    \item  Output the state:
    \begin{equation}
        \rho =  \prod_{k \in V} \left( \frac{\mathbb{I}+\theta_k^X X_k + \theta_k^Y Y_k + \theta_k^Z Z_k}{2}\right)
    \end{equation}
    \end{enumerate}
\end{algorithm}
\end{mdframed}





\subsubsection{Rounding Algorithm for General Quantum Max Cut Instances}

We opt to present the rounding algorithm for QMC instances (\Cref{alg:main}) before the rounding algorithm for arbitrary PSD instances (\Cref{alg:generic} in \Cref{sec:positive-hamiltonians}).  \Cref{alg:generic} specializes to \Cref{alg:main}, however \Cref{alg:main} is easier to describe and we have analytic rather than numerical results for it.  In this direction let us assume we are given a QMC Hamiltonian $H=\sum_{ij\in E} w_{ij} H_{ij}$ with $w_{ij}\geq 0$ for all $ij\in E$.  


First, we solve the SDP corresponding to the second level of the quantum Lasserre hierarchy ($\las_2$) for our input Hamiltonian.  As explained for \Cref{def:relaxed-objective}, from the solution to our $\las_2$ relaxation, we obtain relaxed values, $\mu_{ij}$ on each edge.  For $H_{ij}$ as described above, we will also have $\mu_{ij} \in [0,1]$ and define $v_{ij} := (4\mu_{ij}-1)/3 \in [-\frac{1}{3},1]$.  To avoid confusion, we will henceforth refer to the $\mu_{ij}$ as \emph{energies} that directly contribute to the objective function (see \Cref{prop:approx}), while we will call the $v_{ij}$ \emph{values}, corresponding to what is earned on the traceless part of each term.  The $\mu_{ij}$ represent relaxed energies (arising from a pseudo-density) and will generally not correspond to energies attainable by a quantum state.  

The rounding algorithm picks the ``large'' edges $L$ as the set of edges $ij\in E$ with value $v_{i j}$ larger than some pre-defined threshold $\gamma$. Because the SDP solution obeys certain monogamy of entanglement constraints, adjacent large edges are prohibited.  Hence, the edges in $L$ form a matching and one may independently assign 
a product state earning $\frac{1}{2}$ the maximum energy (see \Cref{eq:planted-product-state}) on each such edge.  By fixing $2$-qubit states on each edge in $L$, we lose freedom on edges adjacent to $L$ and are relegated to random assignment on such edges.  For all remaining edges, standard Gharibian-Parekh product-state rounding is applied.

\begin{mdframed}
\begin{algorithm}[Product-state approximation algorithm for Quantum Max Cut]
\label{alg:main}
\mbox{}\\[1ex]
\emph{Input}: A QMC Hamiltonian $H=\sum_{ij\in E}w_{ij} H_{ij}$ (\Cref{def:QMC-def}), and a threshold $\gamma \in (-1/3,1]$.
\begin{enumerate}
    \item  Solve the instance of level 2 of the quantum Lasserre hierarchy, $\las_2$ (\Cref{def:intro-lasserre}) for $H$.
    \item  For each edge $ij$ calculate its value $v_{ij}$ from the $\las_2$ solution, according to \Cref{def:relaxed-objective}.
    \item  Let $L:=\{ij\in E: v_{ij} > \gamma\}$ be the set of ``large'' edges, which are guaranteed to be a matching (\Cref{rem:matching}).  Let $B$ be the set of vertices not adjacent to a large edge: $B=\{i\in [n]: ij \notin L \text{ for any }j\}$.
    \item \label{item:execute-product-state-algorithm} 
    Convert $\las_2$ vectors $\{(\ket{X_i}, \ket{Y_i}, \ket{Z_i}\}_{i\in B}$ into a single vector for each $i\in B$, $\{\ket{W_i}\}_{i\in B}$.  Execute \Cref{alg:QMC} on $\{\ket{W_i}\}_{i\in B}$ to find Bloch vectors for all vertices in $B$.  Denote the output Bloch vectors as: $\{(\theta_i^X, \theta_i^Y, \theta_i^Z)\}_{i \in B}$.
    \item  Output the state:
    \begin{equation}
        \rho = \prod_{ij \in L}\left( \frac{\mathbb{I}-Z_i Z_j}{4}\right) \prod_{k \in B} \left( \frac{\mathbb{I}+\theta_k^X X_k + \theta_k^Y Y_k + \theta_k^Z Z_k}{2}\right)
    \end{equation}
    \end{enumerate}
\end{algorithm}
\end{mdframed}

\paragraph{Obtaining a pure product state.} \Cref{alg:main} outputs a mixed product state $\rho$ that assigns:
\begin{equation}\label{eq:planted-product-state}
\frac{\Id-Z_iZ_j}{4} = \frac{1}{2}\ketbra{01}{01} + \frac{1}{2}\ketbra{10}{10},
\end{equation}
to the large edges, $L$, which we will show must form a matching.  To obtain a pure product state, we assign either $\ket{01}$ or $\ket{10}$ to each edge in $L$.  Our performance guarantee only requires independence of the random choice of assignment for pairs of edges in $L$, hence we may derandomize this choice efficiently using standard techniques~\cite{LW06}.
The algorithms of \Cref{item:execute-product-state-algorithm} return pure product states.

\subsection{Technical Overview}\label{sec:tech_overview}
Here we will present technical overviews for proving the performance of \Cref{alg:reg_graph} and \Cref{alg:main}.   

\paragraph{Min-degree-$d$ graphs.}
For \Cref{alg:reg_graph} we proceed by by writing the unweighted Hamiltonian $H$ as a sum over star Hamiltonians as in~\cite{A20}.  By double counting each edge and dividing the overall Hamiltonian by a factor of $2$, we can decompose the Hamiltonian as a sum of star Hamiltonians where each star is rooted at one of the vertices:
\begin{equation*}
H=\frac{1}{2}\sum_{i\in V} h_i,\,\, \text{where} \,\, h_i=\sum_{j\in N(i)} H_{ij}.
\end{equation*}
Since the graph has minimum degree $\geq d$ each $h_i$ is a star Hamiltonian with $\geq d$ many leaves.
Then, we can demonstrate an approximation factor $\alpha$ by demonstrating that $\mathbb{E}[\tr(\rho h_v)] \geq \alpha \langle h_v \rangle$ for all $v\in V$ where $\rho$ is the random output of \Cref{alg:reg_graph}.  We will show this for some constant $\alpha(d)$.  

For some fixed $h_i$ which corresponds to a star with $d$ leaves let $\{v_j\}_{j=1}^d$ be the values assigned by $\las_2$ (\Cref{rem:qmc_values}).  We will find a lower bound, $\alpha(d)$, for the approximation factor obtained by \Cref{alg:reg_graph} on $h_i$ and establish that $\alpha(d+1) \geq \alpha(d)$.  Since $\alpha(d+1) \geq \alpha(d)$, on stars of higher degree we obtain an even better approximation factor and the achieved approximation factor overall is $\geq \alpha(d)$.  We find $\alpha(d)$ by demonstrating that the values minimizing the approximation factor are the same and non-negative for each edge in the star without loss of generality.  This reduces the optimization to a single parameter which is then constrained by an inequality from $\las_2$ called the \emph{star bound} in previous work~\cite{A20, PT21}.  Let $F(3, v)$ be defined in \Cref{eq:F_with_r}.  The main result is:
\begin{theorem}\label{thm:approx_reg}
If $G$ is an unweighted min-degree-$d$ graph then \Cref{alg:reg_graph} is an $\alpha(d)$-approximation where
$$
\alpha(d)=\frac{1+F(3, \frac{1}{3}+\frac{2}{3d})}{2+\frac{2}{d}}
$$
\end{theorem}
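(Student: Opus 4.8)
The plan is to follow the star-decomposition strategy from the overview. Double-counting edges, write $H=\tfrac12\sum_{i\in V}h_i$ with $h_i=\sum_{j\in N(i)}H_{ij}$ a star Hamiltonian with $\deg(i)\ge d$ leaves. Exactly as in \Cref{prop:approx}, but applied to this decomposition rather than to the individual $H_{ij}$, it suffices to prove the per-star bound $\mathbb{E}_{\rho}[\tr(h_i\rho)]\ge\alpha(\deg i)\,\langle h_i\rangle$ for every $i$, where $\rho$ is the random output of \Cref{alg:reg_graph} and $\langle h_i\rangle=\sum_{j\in N(i)}\mu_{ij}$ is the $\las_2$ value of the star. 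Granting that $\alpha$ is non-decreasing (a one-variable check I would do at the end, which also lets the min-degree bound propagate from each vertex), it is enough to treat a star with exactly $d$ leaves; write $v_1,\dots,v_d$ for the $\las_2$ edge values $v_{ij}$, $j\in N(i)$.

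Next I would make the expectation explicit edge by edge. Step~2 of \Cref{alg:reg_graph} forms unit vectors $\ket{W_i}$ with $\braket{W_i|W_j}=-v_{ij}$ (the reduction in the proof of \Cref{lem:QMC_to_MC}) and runs \Cref{alg:QMC}; by \Cref{lem:Briet-et-al} with $k=3$ and the oddness of $t\mapsto F(3,t)$, the expected energy produced on edge $ij$ is $\tfrac14\bigl(1-F(3,\braket{W_i|W_j})\bigr)=\tfrac14\bigl(1+F(3,v_{ij})\bigr)$, while $\mu_{ij}=\tfrac14(1+3v_{ij})$ (\Cref{rem:qmc_values}). Thus the per-star bound becomes
\begin{equation*}
\frac{\sum_{j=1}^{d}\bigl(1+F(3,v_j)\bigr)}{\sum_{j=1}^{d}\bigl(1+3v_j\bigr)}\;\ge\;\alpha(d),
\end{equation*}
for all admissible $(v_1,\dots,v_d)$: admissibility means $v_j\in[-\tfrac13,1]$ (\Cref{lem:QMC-values-ranges}) together with the linear \emph{star bound} $\sum_j v_j\le\tfrac{d+2}{3}$, i.e.\ the Lieb--Mattis-type monogamy inequality of \cite{A20,PT21}, which holds for $\las_2$ pseudo-values and not merely for genuine densities.

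I would then collapse the minimization of the left-hand ratio to a single variable. First, a negative $v_j$ may be raised to $0$: this increases numerator and denominator by positive amounts with quotient $F(3,v_j)/(3v_j)\approx 0.28$, which is below the ratio itself (the ratio is $>\tfrac13$ a priori, e.g.\ by the Gharibian--Parekh edge bound together with the mediant inequality), so the move only lowers the ratio. Second, with $\sum_j v_j$ fixed, $\sum_j F(3,v_j)$ is minimized at equal arguments by Jensen, since $F(3,\cdot)$ is convex on $[0,1]$ --- because $t\mapsto t\cdot{}_2F_1(\tfrac12,\tfrac12;\tfrac52;t^2)$ is convex on $[0,\infty)$ (the series has nonnegative coefficients). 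So the worst case has $v_1=\cdots=v_d=v$ with $0\le v\le v_0:=\tfrac13+\tfrac2{3d}$ (the star bound), and the ratio there equals $h(v):=\frac{1+F(3,v)}{1+3v}$. Since $1+3v_0=2+\tfrac2d$, the theorem reduces to the claim that $h$ is non-increasing on $[0,v_0]$, so that its minimum on the feasible interval is $h(v_0)=\alpha(d)$.

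The main obstacle is that monotonicity input, $h'(v)\le 0$ on $[0,v_0]$, i.e.\ $F'(3,v)(1+3v)\le 3\bigl(1+F(3,v)\bigr)$ there: this is precisely where the improvement over $\las_1$ is harvested, since it says the $\las_2$ star bound confines $v$ to the range on which the Gharibian--Parekh single-edge ratio $h$ is still decreasing, away from its interior minimum near $v\approx 0.95$ (value $\approx 0.498$). Proving it cleanly requires a derivative estimate for the Gaussian hypergeometric $F(3,\cdot)$. Since $v_0\le\tfrac23$ for every $d\ge 2$, this is comfortable and uniform for $d\ge 2$; the borderline case $d=1$ ($v_0=1$) needs separate care, e.g.\ noting that optimality --- not just feasibility --- of the $\las_2$ solution forces value $1$ on a pendant edge, giving the exact ratio $h(1)=\tfrac12$. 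A cleaner route merges the one-variable reduction with the monotonicity: bound $F(3,v)\ge\ell(v)$ on $[-\tfrac13,1]$ by the support line $\ell$ of $F(3,\cdot)$ at $v_0$ --- below the curve on $[0,1]$ by convexity, and on $[-\tfrac13,0]$ after a single endpoint check at $v=-\tfrac13$ --- whereupon $\frac{\sum_j(1+\ell(v_j))}{\sum_j(1+3v_j)}$ is a M\"obius function of $\sum_j v_j$, decreasing in it exactly when $h'(v_0)\le 0$, and minimized at $\sum_j v_j=\tfrac{d+2}{3}$, again yielding $\alpha(d)$. Finally $\alpha(d+1)\ge\alpha(d)$ is a short, self-contained estimate on the same one-variable function.
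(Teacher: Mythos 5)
Your main route is essentially the paper's own proof: the same star decomposition $H=\tfrac12\sum_i h_i$ with per-edge expectation from \Cref{lem:Briet-et-al}, the same $\las_2$ star bound $\sum_j(1+3v_j)\le 2(d+1)$, a reduction first to nonnegative and then to equal values via convexity of $F(3,\cdot)$ on $[0,1]$ (Jensen, as in \Cref{item:helper_lemma_item2} of \Cref{prop:helper}), and finally monotonicity of $v\mapsto(1+F(3,v))/(1+3v)$ on $(-\tfrac13,\tfrac13+\tfrac2{3d}]$ together with $\alpha(d+1)\ge\alpha(d)$ (\Cref{item:helper_lemma_item4,item:helper_lemma_item5} of \Cref{prop:helper}), with $d=1$ treated separately exactly as the paper does (matching, $v_{ij}=1$, ratio exactly $\tfrac12$); the two calculus facts you defer are precisely the hypergeometric-derivative estimates the paper carries out. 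Two differences are worth noting. First, your mediant argument for discarding negative entries (increment ratio $F(3,v_j)/(3v_j)\le\tfrac13$, i.e.\ \Cref{item:helper_lemma_item3} of \Cref{prop:helper}, versus a star ratio above $\tfrac13$ by the Gharibian--Parekh edge bound) is a cleaner variant of the paper's ``fewest negative entries'' contradiction, but both versions quietly assume that raising a negative $v_j$ to $0$ preserves feasibility; since this increases $\sum_j v_j$, it can violate the star bound when that bound is tight (e.g.\ $d=3$ with $(v_1,v_2,v_3)=(-\tfrac13,1,1)$), so as written neither reduction is airtight. Second, your tangent-line alternative genuinely differs from, and repairs, this step: lower-bounding $F(3,v)$ by its support line at $v_0=\tfrac13+\tfrac2{3d}$ (below the curve on $[0,1]$ by convexity and on $[-\tfrac13,0]$ after the endpoint check at $-\tfrac13$) turns the star ratio into a M\"obius function of $\sum_j v_j$ alone, decreasing exactly when $h'(v_0)\le0$, whose value at $\sum_j v_j=\tfrac{d+2}{3}$ is exactly $\alpha(d)$; this bypasses both the nonnegativity reduction and the Jensen step, and is arguably a tighter write-up than the published argument.
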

We note that for $d=3$, we have $\alpha(3) > 0.557$, so that the above product-state approximation outperforms the best-known general $0.533$-approximation for QMC~\cite{PT21}.

\paragraph{General graphs.} Now we will describe the analysis for the algorithm for general graphs, \Cref{alg:main}.  We focus on the case of Quantum Max Cut (QMC) for this overview and the majority of the paper.  The more general case of strictly quadratic $2$-local Hamiltonian is treated in \Cref{sec:positive-hamiltonians}. Our main result establishes an optimal worst-case bound on the performance of \Cref{alg:main}.
\newtheorem*{thm:main}{Theorem \ref{thm:main}}
\begin{thm:main}There exists a $\gamma \in (-\frac{1}{3},1]$ such that \Cref{alg:main} is a $\frac{1}{2}$-approximation for QMC Hamiltonians.
\end{thm:main}

For QMC, Gharibian and Parekh~\cite{G19} apply an algorithm of Bri{\"e}t, de Oliveira Filho, and Vallentin~\cite{B10} to level 1 of the Quantum Lasserre Hierarchy, $\las_1$. They show that the expected performance of their algorithm on an edge $ij$ depends only on its value $v_{ij}$.  The worst case approximation guarantee is $\approx 0.498$ and occurs when an edge has $v_{ij} \approx 0.97$.  

There is some hope in that if $v_{ij}$ is small, then the Ghariabian-Parekh algorithms has performance arbitrarily better than $\frac{1}{2}$.  However, it seems difficult to rule out the existence of large edges in a systematic way.  We draw inspiration from Parekh and Thompson~\cite{PT21}, who identified and treated large edges separately in designing an approximation for QMC employing entangled states.  Parekh and Thompson use an ansatz introduced by Anshu, Gosset, and Morenz~\cite{A20}, who place a singlet on select pairs of qubits combined with a product state on the remaining qubits.  Parekh and Thompson used a $\las_2$ solution to inform placement of singlets on pairs of qubits with suffciently large $v_{ij}$.  This ensured that the maximum possible energy was earned on such edges.  Instead, we place $(\Id-Z_iZ_j)/4$ on edges in $L$, which is a convex combination of pure product states. This ensures that $\frac{1}{2}$ the maximum energy is earned on such edges.  However, the catch, for both Parekh and Thompson's approach and ours, is that edges adjacent to large edges only earn $\frac{1}{4}$ of the maximum energy.  To overcome this, we would like to show that optimal energies, or in our case the relaxed $\las_2$ energies, cannot be large on both of an adjacent pair of edges.

\paragraph{Monogamy of entanglement bounds.}
Anshu, Gosset, and Morenz~\cite{A20} appealed to the well-known Lieb-Mattis monogamy of entanglement bound to bound the maximum energy a quantum state may earn on a star, while Parekh and Thompson~\cite{PT21} showed that this bound was implied by $\las_2$ and could be used for approximation algorithms based on $\las_2$ (i.e., was satisfied by $\las_2$ pseudo-densities).  In the case of a pair of adjacent edges, the star bound applies to the sum of the energies or values earned. Concretely, on two adjacent edges $ij$ and $jk$, the star bound implies $v_{ij} + v_{jk} \leq \frac{4}{3}$.  The latter is indeed sharp (at e.g., $v_{ij} = v_{jk} = \frac{2}{3}$).  This bound allows pairs of $v_{ij}$ and $v_{jk}$ such that the Gharibian-parekh algorithm is unable to achieve an overall $\frac{1}{2}$-approximation on these two edges.

We remedy this with a more refined analysis of $\las_2$ in order to establish a new nonlinear monogamy of entanglement bound more precisely capturing feasible energies on a triangle (see \Cref{lem:informal-triangle} and \Cref{sec:nonlinear-triangle-inequality}).  For strictly quadratic $2$-Local Hamiltonian, $\las_2$ produces a vector solution assigning up to 9 vectors (i.e., for pairs of nontrivial Pauli terms) per edge.  For QMC, Parekh and Thompson~\cite{PT21} were able to consider 3 vectors per edge.  Here, by performing a carefully chosen change of basis, we reduce this to a single vector per edge, enabling our new bound.  

From our triangle inequality, we deduce that there exists a $\gamma$ that satisfies the following.     
\setcounter{factc}{0}
\begin{fact}[by \Cref{lem:1/2-approx}]\label{fact:1}
    If $H$ is a QMC Hamiltonian, and if edge $ij$ has $v_{ij} \leq \gamma$ then product state rounding has approximation factor $\geq \frac{1}{2}$ on $ij$.
\end{fact}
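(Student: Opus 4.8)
The plan is to compute, for an arbitrary QMC edge $ij$, the expected energy earned on $ij$ by the product-state rounding of step~\ref{item:execute-product-state-algorithm} of \Cref{alg:main} (equivalently the subroutine in \Cref{alg:reg_graph}), express both it and $\mu_{ij}$ as functions of the single scalar $v:=v_{ij}$, and then read off the constraint this places on $\gamma$. When that step is run on the $\las_2$ vectors, the merged unit vectors satisfy $\braket{W_i|W_j}=\tfrac13(\braket{X_iX_j}+\braket{Y_iY_j}+\braket{Z_iZ_j})=-v$, by the construction in the proof of \Cref{lem:QMC_to_MC} together with the moment identities of \Cref{def:intro-lasserre}. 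Applying \Cref{lem:Briet-et-al} with $k=3$ (the three Gaussian rows $\ket{r_x},\ket{r_y},\ket{r_z}$ of \Cref{alg:QMC}), the expected energy on $ij$ is $\tfrac14\bigl(1-F(3,\braket{W_i|W_j})\bigr)=\tfrac14\bigl(1+F(3,v)\bigr)$, using that $F(3,\cdot)$ is odd (immediate from \Cref{eq:F_with_r}). Since $\mu_{ij}=\tfrac{1+3v}{4}$ by \Cref{rem:qmc_values}, the per-edge approximation factor equals $\phi(v):=\tfrac{1+F(3,v)}{1+3v}$ for $v\in[-\tfrac13,1]$, and \Cref{fact:1} reduces to: there is $\gamma\in(-\tfrac13,1]$ with $\phi(v)\ge\tfrac12$ for all $v\in(-\tfrac13,\gamma]$. (At $v=-\tfrac13$ we have $\mu_{ij}=0$ and $H_{ij}\succeq0$, so that edge is handled vacuously.)

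Next I would clear the (positive) denominator: for $v>-\tfrac13$, $\phi(v)\ge\tfrac12\iff\psi(v):=2F(3,v)-3v+1\ge0$, and $\psi$ is continuous on $[-\tfrac13,1]$ with $\psi(0)=1>0$. From \Cref{eq:F_with_r}, $F(3,v)=\tfrac{8}{3\pi}\,v\cdot{}_2F_1\!\bigl(\tfrac12,\tfrac12;\tfrac52;v^2\bigr)$, so $F'(3,v)$ is even in $v$ and nondecreasing in $|v|$ (the hypergeometric series and its derivative have nonnegative coefficients). Using $\tfrac{d}{du}{}_2F_1(a,b;c;u)=\tfrac{ab}{c}{}_2F_1(a{+}1,b{+}1;c{+}1;u)$ one checks $F'(3,\tfrac13)<1$, so $\psi'(v)=2F'(3,v)-3<-1$ on $[-\tfrac13,\tfrac13]$; hence $\psi$ is strictly decreasing there and $\psi(v)\ge\psi(\tfrac13)=2F(3,\tfrac13)>0$ for every $v\in[-\tfrac13,\tfrac13]$. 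Thus $\gamma=\tfrac13$ already satisfies \Cref{fact:1}. For the sharpest admissible threshold one would instead take $\gamma$ to be the least zero of $\psi$ in $(\tfrac13,1)$, which exists because $\min_{t\in[-1,1/3]}\tfrac{1-F(3,t)}{1-3t}\approx0.498<\tfrac12$ by \cite{G19} forces $\phi$ below $\tfrac12$ near $v\approx0.97$; this gives $\gamma\approx0.9$ and is computed numerically.

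The scalar estimates above are routine once \Cref{eq:F_with_r} is in hand, so the genuine difficulty is not in this Fact in isolation but in choosing a \emph{single} $\gamma$ that simultaneously works everywhere: edges with $v_{ij}>\gamma$ are controlled via the nonlinear triangle inequality \Cref{lem:informal-triangle}, and that analysis wants $\gamma$ \emph{large} (so the ``large'' edges form a matching and leave enough slack on adjacent edges), whereas the present Fact wants $\gamma$ \emph{small} --- no larger than the first crossing of $\phi$ with $\tfrac12$, about $0.9$. The crux is therefore to verify that the lower threshold demanded by \Cref{lem:informal-triangle} lies strictly below this upper threshold, so that the admissible window for $\gamma$ is nonempty; pinning down a concrete $\gamma$ in that window is the load-bearing step.
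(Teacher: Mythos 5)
Your reduction is the same as the paper's: the per-edge expectation $\tfrac14\bigl(1+F(3,v_{ij})\bigr)$ (cf.\ \Cref{eq:GP-expectation}) measured against $\mu_{ij}=\tfrac{1+3v_{ij}}{4}$, so the Fact becomes the scalar claim $\phi(v)=\tfrac{1+F(3,v)}{1+3v}\ge\tfrac12$ on $(-\tfrac13,\gamma]$, and your analytic treatment of $v\in[-\tfrac13,\tfrac13]$ (via $F'(3,v)\le F'(3,\tfrac13)<1$) is sound. The gap is the interval $(\tfrac13,\gamma]$ for the $\gamma$ the paper actually needs. This Fact is not free-standing: the same $\gamma$ must make \Cref{fact:2} true, and the proof of \Cref{lem:large-edge-bound} forces $\gamma>(1+\sqrt{3})/3\approx 0.9107$ (the paper takes $\gamma=0.911$). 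For that range your proposal offers only ``let $\gamma$ be the least zero of $\psi(v)=2F(3,v)-3v+1$ in $(\tfrac13,1)$, numerically $\approx 0.9$,'' and you yourself defer the verification that this zero exceeds $(1+\sqrt{3})/3$ as ``the load-bearing step.'' But that verification is exactly the content of the Fact as used: the crossing of $\phi$ with $\tfrac12$ sits near $0.912$, barely above $0.9107$, so a margin of order $10^{-3}$ is at stake and ``$\approx 0.9$'' does not decide it.

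Note also that your monotonicity device does not extend to the interval that matters: $F'(3,v)<1$ fails already around $v\approx 0.7$ (indeed $F'(3,0.911)\approx 1.3$), so $\psi'<-1$ is false there; on $(\tfrac13,0.911]$ you would need a different bound (e.g.\ $F'(3,v)<\tfrac32$ up to $\gamma$) or some other control of $\phi$ on the whole interval. This is precisely what the paper's \Cref{lem:1/2-approx} supplies: the derivative of $g(v)=\phi(v)$ is $h(v,v)$, and since the relevant hypergeometric series are even and increasing in $|v|$, one has $h(v,v)\le h(v,\gamma)$ for all $v\in(-\tfrac13,\gamma]$ once $\gamma\ge\tfrac13$; because the sign of $h(v,\gamma)$ is governed by a numerator depending on $\gamma$ alone, a single sign evaluation at $\gamma=0.911$ together with the endpoint check $\phi(0.911)\ge\tfrac12$ certifies $\phi\ge\tfrac12$ on all of $(-\tfrac13,0.911]$. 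Your argument needs this step (or an equivalent quantitative certificate on $(\tfrac13,0.911]$) to establish the Fact for the $\gamma$ that the rest of the paper requires.
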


\begin{fact}[by \Cref{lem:large-edge-bound}]\label{fact:2}
    If $H$ is a QMC Hamiltonian, and if an edge $ij$ has $v_{ij} > \gamma$ then any adjacent edge $ik$ has $v_{ik} < \frac{1}{3}$.
\end{fact}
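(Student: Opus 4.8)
The plan is to read \Cref{fact:2} off the nonlinear triangle inequality of \Cref{lem:informal-triangle} --- in the strengthened form (mentioned right after that lemma) that applies to the possibly non-positive $\las_2$ pseudo-density $\psuedo{\rho}$, not merely to genuine density matrices. First I would translate from QMC values to SWAP expectations. For the triple of qubits $i,j,k$ (with $ij,ik\in E$; the pair $jk$ need not be an edge of $G$, but $\las_2$ still supplies its moment), set $a:=s_{ij}$, $b:=s_{ik}$, $c:=s_{jk}$ with $s_{ij}:=\tr[S_{ij}\psuedo{\rho}]$ as in \Cref{lem:informal-triangle}. Since $S_{ij}=(\Id+X_iX_j+Y_iY_j+Z_iZ_j)/2$, comparing with \Cref{rem:qmc_values} gives the dictionary $s_{ij}=(1-3v_{ij})/2$, i.e.\ $v_{ij}=(1-2s_{ij})/3$. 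In these variables the target conclusion $v_{ik}<1/3$ is exactly $b>0$, and I will aim to show that the hypothesis $v_{ij}>\gamma$ with $\gamma=(1+\sqrt3)/3$ --- equivalently $a<-\sqrt3/2$ --- forces $b>0$.

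Next I would invoke \Cref{lem:informal-triangle} for $a,b,c$, giving $a+b+c\ge 0$ and $a^2+b^2+c^2+2(a+b+c)-2(ab+bc+ca)\le 3$, and eliminate $c$. Viewing the nonlinear bound as a quadratic inequality in $c$ with leading coefficient $1$, its discriminant simplifies to $16(1-a)(1-b)$ (nonnegative, as a feasible real $c$ exists) and its larger root is $c_+=(a+b-1)+2\sqrt{(1-a)(1-b)}$, so every admissible $c$ satisfies $c\le c_+$. Combined with $c\ge -(a+b)$ this forces $2\sqrt{(1-a)(1-b)}\ge 1-2a-2b$; on the branch $1-2a-2b\le 0$ this is vacuous but already yields $a+b\ge 1/2$, while on the branch $1-2a-2b>0$ squaring reduces it to the clean two-variable inequality $4(a^2+ab+b^2)\le 3$.

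Finally I would show that $b\le 0$, together with either $a+b\ge 1/2$ or $4(a^2+ab+b^2)\le 3$, forces $a\ge -\sqrt3/2$. The first alternative gives $a\ge 1/2$ outright. For the second, solving $4a^2+4ab+(4b^2-3)\le 0$ for $a$ gives $a\ge \tfrac12\bigl(-b-\sqrt3\sqrt{1-b^2}\bigr)$, and the right-hand side --- as a function of $b\in[-1,0]$, the range being automatic since the inequality is solvable in $a$ --- is decreasing, with minimum value $-\sqrt3/2$ at $b=0$; hence $a\ge -\sqrt3/2$. Taking the contrapositive and translating back: $v_{ij}>(1+\sqrt3)/3\Rightarrow s_{ij}<-\sqrt3/2\Rightarrow s_{ik}>0\Rightarrow v_{ik}<1/3$. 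This establishes \Cref{fact:2} with $\gamma=(1+\sqrt3)/3$, hence for every $\gamma\in[(1+\sqrt3)/3,1]$; the actual value of $\gamma$ is then chosen so as to also satisfy \Cref{fact:1}, which requires $\gamma$ below the threshold where standard product-state rounding degrades to ratio $1/2$ (governed by $F(3,\cdot)$), so one must check that this threshold exceeds $(1+\sqrt3)/3\approx 0.911$.

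The main difficulty I anticipate is bookkeeping rather than conceptual content: carefully eliminating $c$ to confirm that the two triangle inequalities cut out a region whose projection onto $(a,b)$ is exactly $\{4(a^2+ab+b^2)\le 3\}\cup\{a+b\ge 1/2\}$, correctly handling the degenerate branch $1-2a-2b\le 0$, and verifying --- via the explicit extremal $3$-qubit state realizing $(s_{ij},s_{ik},s_{jk})=(-\sqrt3/2,\,0,\,\sqrt3/2)$, at which both inequalities are tight --- that the constant $(1+\sqrt3)/3$ is optimal, so that a single $\gamma$ can simultaneously meet \Cref{fact:1} and \Cref{fact:2}. Once \Cref{lem:informal-triangle} is in hand, everything downstream is elementary algebra on three real numbers.
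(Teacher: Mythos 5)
Your argument is correct, and it reaches the same sharp threshold $(1+\sqrt3)/3$ as the paper, but by a genuinely different final step. Both routes rest on the same underlying object established in \Cref{sec:nonlinear-triangle-inequality}: positivity of the symmetrized Gram matrix $M$ of $\ket{\Id},\ket{S_{ij}},\ket{S_{ik}},\ket{S_{jk}}$ coming from $\las_2$ (so, as you correctly note, the pseudo-density strengthening of \Cref{lem:informal-triangle} is what must be invoked, and $jk$ need not be an edge). From there the paper deliberately sidesteps the nonlinear inequality and proves \Cref{fact:2} via \Cref{lem:large-edge-bound}: it sets up the SDP $\min\{p : M\succeq 0,\ q\le 0\}$ and exhibits an explicit dual feasible certificate showing $p\ge-\sqrt3/2$, then takes the contrapositive and translates through $s=(1-3v)/2$ exactly as you do. You instead start from \Cref{eq:linear-triangle,eq:nonlinear-triangle}, eliminate $c$ (your discriminant $16(1-a)(1-b)$, the bound $c\le(a+b-1)+2\sqrt{(1-a)(1-b)}$ combined with $c\ge-(a+b)$, and the two branches yielding $a+b\ge\tfrac12$ or $4(a^2+ab+b^2)\le3$ all check out), and then show $b\le 0\Rightarrow a\ge-\sqrt3/2$ by minimizing $\tfrac12(-b-\sqrt3\sqrt{1-b^2})$ on $[-1,0]$; note you only need the easy direction (the actual $\las_2$ point lies in the projection), so your worry about proving the projection is \emph{exactly} that union is unnecessary. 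What each buys: your elimination argument is elementary and self-contained, requiring no guess of a dual solution, at the cost of a small branch analysis; the paper's duality certificate is mechanically verifiable in one line and pinpoints the extremal point $(p,q,r)=(-\sqrt3/2,0,\sqrt3/2)$ via complementary slackness (the same tight configuration you cite for optimality of the constant). Your closing remark that $\gamma=0.911>(1+\sqrt3)/3$ must also satisfy \Cref{fact:1} matches the paper's choice.
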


\Cref{fact:1} states that if an edge has sufficiently small value then Gharibian-Parekh product rounding~\cite{G19} has performance $\geq \frac{1}{2}$ on that edge.  \Cref{fact:2} states that edges with sufficiently large value have sufficiently small values on neighboring edges.  This is useful because if $v_{ik} < \frac{1}{3}$, then independent random assignments on qubits $i$ and $k$ yield a $\frac{1}{2}$-approximation on $ik$.  Thus our analysis enables us to only focus on edges with $v_{ij} > \gamma$, and as previously mentioned, for such edges we explicitly assign a product state earning $\frac{1}{2}$ of the maximum energy. For reference, $\gamma=0.911$ in the context of this paper.  

\Cref{fact:2} can be seen as a monogamy of entanglement bound on pseudo-densities.  An edge which is highly entangled (has large value) forces adjacent edges to have small value (low entanglement). It is a tightening of the star bounds found in~\cite{A20} and~\cite{PT21}.  Indeed, these bounds only imply that when $v_{ij}>0.911$, then $v_{ik}<\frac{4}{3}-0.911 \approx 0.42$. 


\paragraph{Positive Hamiltonians.} We generalize our QMC results to strictly quadratic $2$-Local Hamiltonian (\Cref{def:strictly-quadratic-2-LH}) by extending an approach of Parekh and Thompson~\cite{PT20} for approximating this problem via product states.  The main difficulty is the number of parameters involved in such more general $2$-Local Hamiltonian instances.  We prove that it suffices to consider 3 parameters and perform numerical search and experiments to justify a $\frac{1}{2}$-approximation for this case.

\paragraph{Organization.} In the \Cref{sec:lasserre}, we introduce the quantum Lasserre hierarchy and establish some fundamental properties.  Our treatment differs in perspective than previous approaches.  In \Cref{sec:product-state-approx-QMC} we establish our main results for the more simple case of Quantum Max Cut.  These results are generalized to the positive Hamiltonian case in \Cref{sec:positive-hamiltonians}. 
\section{Product-state Approximation for Quantum Max Cut}
\label{sec:product-state-approx-QMC}

Our goal in this section is to prove our main result.
\begin{theorem} \label{thm:main}
There exists a $\gamma \in (-1/3,1]$ such that \Cref{alg:main} is a $1/2$-approximation for QMC Hamiltonians.
\end{theorem}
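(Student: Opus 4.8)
The plan is to reduce, via \Cref{prop:approx} applied in expectation, to showing $\mathbb{E}[\tr[H_{ij}\rho]]\geq\tfrac12\mu_{ij}$ on every edge $ij$, where $\rho$ is the (mixed) product state returned by \Cref{alg:main}; summing over edges with weights $w_{ij}$ then yields $\mathbb{E}[\tr[H\rho]]\geq\tfrac12\nu_2(H)\geq\tfrac12\lambda_{\max}(H)$, and a standard conditional-expectation derandomization (the only randomness that matters being the Gaussian rounding inside \Cref{alg:QMC}, together with pairwise-independent signs when one purifies) turns this into the stated guarantee. The target threshold is $\gamma=\tfrac{1+\sqrt3}{3}\approx 0.911$. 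Under the dictionary $s_{ij}=\tfrac{1-3v_{ij}}{2}$ between the traceless value $v_{ij}$ of \Cref{rem:qmc_values} and the SWAP expectation $s_{ij}$ of \Cref{lem:informal-triangle} (so $v_{ij}=\tfrac13\Leftrightarrow s_{ij}=0$ and $v_{ij}=\gamma\Leftrightarrow s_{ij}=-\tfrac{\sqrt3}{2}$), this $\gamma$ is exactly where the quadratic inequality of \Cref{lem:informal-triangle} becomes binding: for a triple $ijk$ with $|s_{ij}|>\tfrac{\sqrt3}{2}$, substituting into the second inequality and minimizing the resulting upward parabola in $s_{jk}$ over the range permitted by the first (linear) inequality forces $s_{ik}>0$, i.e.\ $v_{ik}<\tfrac13$ on every edge $ik$ adjacent to $ij$. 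This is \Cref{fact:2}, and it also immediately makes $L$ a matching (\Cref{rem:matching}), since two edges of $L$ sharing a vertex would give $v_{ij}>\gamma>\tfrac13$, contradicting it.

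With \Cref{fact:1} and \Cref{fact:2} in hand for this $\gamma$, I partition the edges using the matching $L=\{ij:v_{ij}>\gamma\}$ and the set $B$ of vertices missed by $L$, and verify the per-edge bound in three cases. If $ij\in L$, the reduced state on $\{i,j\}$ is $(\Id-Z_iZ_j)/4=\tfrac12\ketbra{01}{01}+\tfrac12\ketbra{10}{10}$, and a one-line computation gives $\tr[H_{ij}\rho]=\tfrac12\geq\tfrac12\mu_{ij}$ since $\mu_{ij}\leq 1$. If $ij\notin L$ but, say, $i$ is incident to a large edge, then qubit $i$'s reduced state in the mixed output is $\Id/2$, so all of its Pauli expectations vanish and $\tr[H_{ij}\rho]=\tfrac14$; since \Cref{fact:2} gives $v_{ij}<\tfrac13$, we have $\mu_{ij}=\tfrac{1+3v_{ij}}{4}<\tfrac12$, so $\tfrac14\geq\tfrac12\mu_{ij}$. (When purifying to a pure product state, qubit $i$ instead carries Bloch vector $\pm\hat z$ equiprobably; pairwise independence of the $L$-signs keeps $\mathbb{E}[\hat n_i\cdot\hat n_j]=0$, so the bound survives in expectation.) Finally, if both endpoints of $ij$ lie in $B$ then $ij\notin L$, hence $v_{ij}\leq\gamma$, the state on $\{i,j\}$ is exactly the product of the two single-qubit states output by \Cref{alg:QMC}, and \Cref{fact:1} gives $\mathbb{E}[\tr[H_{ij}\rho]]\geq\tfrac12\mu_{ij}$. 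Combining the three cases completes the proof.

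Two ingredients must still be established for the argument to close with this single $\gamma$. The first is \Cref{fact:1}: by \Cref{lem:Briet-et-al} the expected energy of \Cref{alg:QMC} on an edge of value $v$ is $\tfrac14(1+F(3,v))$ (using that $F(3,\cdot)$ is odd), so its ratio against $\mu=\tfrac{1+3v}{4}$ is $\geq\tfrac12$ iff $g(v):=1+2F(3,v)-3v\geq 0$; one checks, as a single-variable estimate on the hypergeometric $F(3,\cdot)$ of \Cref{eq:F_with_r}, that $g\geq 0$ on $[-\tfrac13,\tfrac{1+\sqrt3}{3}]$ (it vanishes only at $v=1$ outside this range), so the threshold demanded by \Cref{fact:2} is still safe for product rounding. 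The second, and by far the main obstacle, is \Cref{lem:informal-triangle} at the level of $\las_2$: it must hold not merely for honest density matrices but for the (not necessarily positive) pseudo-densities produced by the SDP. I expect this to be the crux. A $\las_2$ solution attaches up to nine vectors per edge (one per ordered pair of nontrivial Paulis), and the plan is the change of basis hinted at in the overview, after which the data on a triangle collapses to three scalars and the quadratic inequality is read off as positivity of a small Gram/permutation-operator matrix among the three qubits; pinning down that this collapse is valid, and that it produces exactly the constant $\tfrac{1+\sqrt3}{3}$ at which \Cref{fact:1} and \Cref{fact:2} meet, is where the real work lies.
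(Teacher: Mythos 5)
Your overall architecture is the same as the paper's: solve $\las_2$, threshold at $\gamma\approx(1+\sqrt3)/3$, treat the large edges $L$ (a matching) with $(\Id-Z_iZ_j)/4$ earning $1/2\ge\mu_{ij}/2$, use \Cref{fact:2} to get $\mu_{ij}<1/2$ on $\delta(L)$ where the output earns exactly $1/4$, and use \Cref{fact:1} on $E(B)$; the per-edge bounds are then combined via \Cref{prop:approx}. Your route to \Cref{fact:2} — eliminating $s_{jk}$ from the quadratic inequality of \Cref{lem:informal-triangle} subject to the linear constraint — is sound and equivalent to what the paper does: eliminating $r$ gives feasibility exactly when $p^2+pq+q^2\le 3/4$, which with $q\le0$ forces $p\ge-\sqrt3/2$, matching \Cref{lem:large-edge-bound} (the paper instead exhibits an explicit SDP dual certificate for $M\succeq0$, which it separately shows is equivalent to \Cref{eq:linear-triangle,eq:nonlinear-triangle}). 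That difference is cosmetic.

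The genuine gap is the one you flag yourself: you never establish \Cref{lem:informal-triangle} at the level of $\las_2$ pseudo-solutions, and this is precisely the paper's central technical contribution, not a routine verification. The paper's proof (\Cref{sec:nonlinear-triangle-inequality}) forms the Gram matrix of the vectors $\ket{\Id},\ket{S_{12}},\ket{S_{13}},\ket{S_{23}}$ obtained from the $\las_2$ solution via \Cref{def:polynomial-vector}, symmetrizes to a real $4\times4$ PSD matrix, and uses \Cref{lem:lasserre-polynomial-product} to compute its entries: the key point is that although $S_{ij}S_{ik}$ contains $3$-local Pauli terms, these cancel in the anticommutator $S_{ij}S_{ik}+S_{ik}S_{ij}$, so every entry is a function of $s_{12},s_{13},s_{23}$ alone and the whole triangle collapses to the three scalars $p,q,r$; positivity of that matrix is then exactly \Cref{eq:linear-triangle,eq:nonlinear-triangle} (via a Schur complement), or is exploited directly by duality. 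Your proposal only gestures at "the change of basis hinted at in the overview" and explicitly defers "where the real work lies," so the proof does not close as written. A secondary, smaller gap: \Cref{fact:1} is asserted ("one checks" that $1+2F(3,v)-3v\ge0$ on $[-1/3,\gamma]$) without an argument; the paper's \Cref{lem:1/2-approx} supplies one, bounding the derivative of $(1+F(3,v))/(1+3v)$ to reduce the claim to a single evaluation at $v=\gamma$. Your case analysis and the derandomization remarks are otherwise correct and match the paper.
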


We specialize the Lasserre hierarchy, $\las_k$ for Quantum Max Cut (QMC).  In this case we express the objective as 
\begin{equation*}
\nu_k := \sum_{ij \in E} w_{ij}\frac{1 - \braket{X_iX_j} - \braket{Y_iY_j} - \braket{Z_iZ_j}}{4} = \sum_{ij \in E} w_{ij}\frac{1 - \braket{X_i|X_j} - \braket{Y_i|Y_j} - \braket{Z_i|Z_j}}{4},
\end{equation*}
where the equality is by \Cref{def:term-value}.
Since $H = \sum_{ij \in E} w_{ij}(\Id - X_iX_j - Y_iY_j - Z_iZ_j)/4$ is the corresponding Hamiltonian, we have $\nu_k \geq \lambda_{\max}(H)$ by \Cref{lem:relaxation-bound}.  We will primarily focus on $\las_2$ but will need to refer to $\las_1$ in order to appeal to previous work on approximating QMC using product states.  For consistency with prior work we will consider the real version of the hierarchy defined in~\cite{G19}.  The real solution can be obtained from the complex one via \Cref{rem:real-lasserre} of \Cref{rem:lasserre}.  In the case of QMC, this relaxation specializes to that of \Cref{def:QMC-relaxation}.  The relaxation employs all vectors $\ket{\tau} \in \mathcal{P}_n(1)$ except $\ket{\Id}$.  The latter does not appear in the objective, and omitting it does not affect the value of the relaxation.  The symmetries of the Quantum Max Cut problem render it an excellent instance of $2$-Local Hamiltonian for developing new approximation techniques.  In particular, it suffices to focus on only a single scalar value, derived from an $\las_k$ solution, for each edge $ij$.  It will be helpful to consider some affine transformations of this value, outlined below.

\begin{definition}[Quantum Max Cut edge values from $\las_k$] \label{def:QMC-values}
Given a solution $\{\ket{\tau}\}_{\tau \in \mathcal{P}_k(n)}$ to $\las_k$, we define, for each edge $ij \in E$:
\begin{align*}
    \mu_{ij} &:= \frac{1-\braket{X_iX_j} - \braket{Y_iY_j} - \braket{Z_iZ_j}}{4},\\
    v_{ij} &:= -\frac{\braket{X_iX_j} + \braket{Y_iY_j} + \braket{Z_iZ_j}}{3} = \frac{4\mu_{ij}-1}{3},\\
    s_{ij} &:= \frac{1 + \braket{X_iX_j} + \braket{Y_iY_j} + \braket{Z_iZ_j}}{2} = \frac{1-3v_{ij}}{2}.
\end{align*}
Note that $\mu_{ij}$, $v_{ij}$ above are consistent with \Cref{def:relaxed-objective} when specialied to QMC as in \Cref{rem:qmc_values}.
\end{definition}

\begin{lemma} \label{lem:QMC-values-ranges}
For the values of \Cref{def:QMC-values} from $\las_k$, for $k \geq 2$ we have
\begin{equation*}
    \mu_{ij} \in [0,1],\ v_{ij} \in [-1/3,1],\text{ and } s_{ij} \in [-1,1].
\end{equation*}
\end{lemma}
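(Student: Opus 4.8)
The plan is to reduce the three range claims to a single one. Inspecting \Cref{def:QMC-values}, the quantities are linked by the affine bijections $v_{ij}=(4\mu_{ij}-1)/3$ and $s_{ij}=(1-3v_{ij})/2$, under which the relevant endpoints correspond: $\mu_{ij}=0 \leftrightarrow v_{ij}=-1/3 \leftrightarrow s_{ij}=1$ and $\mu_{ij}=1 \leftrightarrow v_{ij}=1 \leftrightarrow s_{ij}=-1$. Hence the three assertions $\mu_{ij}\in[0,1]$, $v_{ij}\in[-1/3,1]$, $s_{ij}\in[-1,1]$ are equivalent, and it is enough to prove $\mu_{ij}\in[0,1]$.

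For that I would switch from the vector-program form of $\las_k$ to the pseudo-density form of \Cref{prob:lasserre-psuedo-density}: take an optimal $\psuedo{\rho}\in\psuedo{\D}_n(k)$, so that $\braket{\tau}=\tr[\psuedo{\rho}\,\tau]$ for every Pauli term $\tau$ of degree $\leq 2$ (this identification is exactly the equivalence of the two $\las_k$ formulations proved in \Cref{sec:lasserre}). Then $\mu_{ij}=\tr[\psuedo{\rho}\,H_{ij}]$ with $H_{ij}=(\Id-X_iX_j-Y_iY_j-Z_iZ_j)/4$ the QMC local term. By \Cref{rem:QMC-term}, $H_{ij}$ is a rank-$1$ projector (onto the singlet on $i,j$), hence $H_{ij}^2=H_{ij}$, and it is a real Pauli polynomial of degree $2$. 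Since $k\geq 2$, the positivity requirement \Cref{item:psuedo-positivity} of \Cref{def:psuedo-density} applied with the test polynomial $P:=H_{ij}$ gives $\mu_{ij}=\tr[\psuedo{\rho}H_{ij}^2]\geq 0$. For the upper bound I would use that $\Id-H_{ij}$ is likewise a projector and a real degree-$2$ Pauli polynomial: applying \Cref{item:psuedo-positivity} with $P:=\Id-H_{ij}$ yields $1-\mu_{ij}=\tr[\psuedo{\rho}(\Id-H_{ij})^2]\geq 0$, i.e. $\mu_{ij}\leq 1$. Pushing this through the affine maps recovers the ranges for $v_{ij}$ and $s_{ij}$; alternatively one can obtain $s_{ij}\in[-1,1]$ directly by writing $s_{ij}=\tr[\psuedo{\rho}S_{ij}]$ for the SWAP operator $S_{ij}$ and testing with $(\Id\pm S_{ij})/2$, since $(\Id\pm S_{ij})^2 = 2(\Id\pm S_{ij})$.

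I do not expect a genuine obstacle here; the content is simply that the QMC term and its complement are degree-$2$ projectors and that $\las_2$ certifies non-negativity of squares of degree-$2$ polynomials. The two points needing care are the bookkeeping identification $\braket{\tau}=\tr[\psuedo{\rho}\,\tau]$ and the use of $k\geq 2$ (rather than $k\geq 1$), which is essential: in $\las_1$ one may set the single-qubit vectors at $i$ equal to those at $j$, forcing $\braket{X_iX_j}+\braket{Y_iY_j}+\braket{Z_iZ_j}=3$ and hence $\mu_{ij}=-1/2$ (equivalently $v_{ij}=-1$, $s_{ij}=2$), so the lower bound on $\mu_{ij}$ genuinely distinguishes $\las_2$ from $\las_1$ for QMC.
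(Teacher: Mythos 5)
Your proof is correct. The paper itself does not reprove the bound: its proof of \Cref{lem:QMC-values-ranges} is a one-line citation of Lemma~12 from~\cite{PT21} for the $\las_2$ case, plus the observation (\Cref{item:lasserre-(k-1)} of \Cref{rem:lasserre}) that a $\las_k$ solution restricts to a feasible $\las_2$ solution to cover $k>2$. You instead give a self-contained argument: pass to the $k$-positive pseudo-density via the equivalence in \Cref{sec:lasserre}, note that $H_{ij}$ and $\Id-H_{ij}$ are real degree-$2$ Pauli polynomials that are projectors, and apply \Cref{item:psuedo-positivity} of \Cref{def:psuedo-density} to get $0\le\mu_{ij}\le 1$; the $v_{ij}$ and $s_{ij}$ ranges then follow from the affine identities (your endpoint bookkeeping is right), and the SWAP variant with $(\Id\pm S_{ij})/2$ is equally valid since $S_{ij}^2=\Id$. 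One could even skip the pseudo-density detour and argue directly in the vector formulation, since $\braket{H_{ij}}=\braket{H_{ij}^2}=\braket{H_{ij}|H_{ij}}\ge 0$ by \Cref{lem:lasserre-polynomial-product}, which is presumably close to the content of the cited Lemma~12. What your route buys: it is self-contained, it treats all $k\ge 2$ uniformly (degree-$2$ test polynomials are admissible for every $k\ge2$, so no restriction remark is needed), and it makes explicit exactly where $k\ge2$ is used — your $\las_1$ example with coinciding single-qubit vectors, giving $\mu_{ij}=-1/2$, is consistent with the paper's own statement that $v_{ij}\in[-1,1]$ at level $1$ and is a good sanity check. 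What the paper's route buys is brevity, at the cost of deferring the actual argument to prior work.
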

\begin{proof}
Lemma 12 from \cite{PT21} shows that $\mu_{ij} \in [0,1]$ for $\las_2$, from which the other values follow.  For the $k>2$ case see \Cref{item:lasserre-(k-1)} of \Cref{rem:lasserre}.
\end{proof}

\subsection{A Threshold for 1/2-Approximability}
The primary value of interest will be $v_{ij}$, since the performance of the $\las_1$ product-state rounding algorithm of Gharibian and Parekh~\cite{G19} on each edge depends only on $v_{ij}$.  In particular, they show that for the product state $\rho$ their algorithm generates:
\begin{equation} \label{eq:GP-expectation}
\tr\left[\frac{\Id-X_iX_j-Y_iY_j-Z_iZ_j}{4} \rho\right] = \frac{1-F(3, -v_{ij})}{4} = \frac{1+F(3, v_{ij})}{4},
\end{equation}
where
\begin{equation} \label{def:F[v]}
    F(3, v) = \frac{8}{3 \pi } v \,_2 F_1\left(\frac{1}{2}, \frac{1}{2};\frac{5}{2}; v^2\right),
\end{equation}
as used by Bri{\"e}t, de Oliveira Filho, and Vallentin~\cite{BJ10} and Parekh and Gharibian (see \Cref{lem:Briet-et-al}). The last equality of \Cref{eq:GP-expectation} holds because $_2 F_1(1/2, 1/2;5/2; v^2) \geq 0$ for $v \in [-1,1]$. The value earned by $\las_k$ on edge $ij$ is $\mu_{ij} = (1+3v_{ij})/4$, hence if for each edge one has:
\begin{equation*}
1 + F(3, v_{ij}) \geq \alpha (1 + 3v_{ij}),
\end{equation*}
then one obtains an $\alpha$-approximation by \Cref{prop:approx}.  For an $\las_1$ solution, $v_{ij} \in [-1,1]$; however, if $v_{ij} \leq -1/3$ then $\las_1$ earns a nonpositive value on $ij$, while the approximation always earns a nonnegative value.  Thus the worst-case approximation on an edge is given by:
\begin{equation*}
\min_{v \in (-1/3,1]} \frac{1 + F(3, v)}{1 + 3v} \geq 0.498,    
\end{equation*}
which occurs at $v \approx 0.97$.  Our goal is a $1/2$-approximation for QMC using product states, and it is edges of large $v_{ij}$ value that impose a barrier to this using the algorithm of~\cite{G19}.  Specifically, there is a relatively large $\gamma$ such that if $v_{ij} \leq \gamma$ for every edge, then the Gharibian-Parekh algorithm is a $1/2$-approximation.  Such worst-case approximations over a parameter interval are typically derived by numerical search.  For our setting, we are able to give an analytic argument that entails checking the approximation ratio at a singe value of the parameter.
\begin{lemma}\label{lem:1/2-approx}
Define
\begin{equation*}
    h(v,\gamma) := \frac{8 (3 \gamma+1) \gamma^2 \, _2F_1\left(\frac{3}{2},\frac{3}{2};\frac{7}{2};\gamma^2\right)+40 \, _2F_1\left(\frac{1}{2},\frac{1}{2};\frac{5}{2};\gamma^2\right)-45 \pi }{15 \pi  (3 v+1)^2}.
\end{equation*}
For any $\gamma \in [1/3, 1)$\footnote{This lemma can be simpliy modified to handle $\gamma <1/3$, but it is sufficient for us to consider $\gamma \in [1/3, 1]$.} such that $h(v,\gamma) < 0$ for all $v \in (-1/3,\gamma]$ and $\frac{1 + F(3, \gamma)}{1 + 3\gamma} \geq 1/2$, we have 
\begin{equation*}
\min_{v \in (-1/3,\gamma]} \frac{1 + F(3, v)}{1 + 3v} \geq \frac{1}{2}.
\end{equation*}
\end{lemma}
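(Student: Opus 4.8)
The plan is to reduce the claim that $\min_{v \in (-1/3,\gamma]} \frac{1+F(3,v)}{1+3v} \geq \tfrac12$ to a single-point check at $v = \gamma$, by showing that the ratio $g(v) := \frac{1+F(3,v)}{1+3v}$ is \emph{decreasing} on the interval $(-1/3,\gamma]$. Once monotonicity is established, the minimum is attained at the right endpoint, and the hypothesis $\frac{1+F(3,\gamma)}{1+3\gamma} \geq \tfrac12$ closes the argument immediately. So the entire content is: for which $v$ is $g'(v) \leq 0$?

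First I would compute $g'(v)$ explicitly. Writing $F(3,v) = \frac{8}{3\pi} v \,_2F_1(\tfrac12,\tfrac12;\tfrac52;v^2)$ from \Cref{def:F[v]}, one has $g'(v) = \frac{F'(3,v)(1+3v) - 3(1+F(3,v))}{(1+3v)^2}$. The sign of $g'(v)$ is the sign of the numerator $N(v) := F'(3,v)(1+3v) - 3 - 3F(3,v)$. Using the derivative formula for $_2F_1$ (namely $\frac{d}{dz}\,_2F_1(a,b;c;z) = \frac{ab}{c}\,_2F_1(a{+}1,b{+}1;c{+}1;z)$) together with the chain rule for the $v^2$ argument, $F'(3,v)$ expands into a combination of $_2F_1(\tfrac12,\tfrac12;\tfrac52;v^2)$ and $_2F_1(\tfrac32,\tfrac32;\tfrac72;v^2)$ terms. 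Substituting and simplifying, the numerator $N(v)$ should collect precisely into $(1+3v)^2$ times the displayed function $h(v,\gamma)$ evaluated at $v$ --- i.e.\ I expect to verify the algebraic identity $g'(v) = h(v,v)$, or more precisely that $h(v,\gamma)$ as written equals $\frac{N(v)}{(1+3v)^2}$ once one notices that (after the simplification) the $\gamma$-dependent numerator terms actually only involve $\gamma$ through a part that is the same function of whichever variable one plugs in. The slight mismatch between the two arguments $v$ and $\gamma$ in $h(v,\gamma)$ is the key structural point: the numerator of $g'(v)$ factors as (something depending only on $v$, appearing in the numerator of $h$) over $(1+3v)^2$, and the hypothesis asks that this hold for $\gamma$ replaced by \emph{any} point in the range, hence the two-variable form. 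I would pin down exactly how $h$ arises so that the hypothesis $h(v,\gamma) < 0$ for all $v \in (-1/3,\gamma]$ translates to $g'(v) < 0$ there.

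The main obstacle is precisely this bookkeeping: correctly differentiating the hypergeometric function, applying the contiguous-function / derivative identities, and verifying that the resulting expression matches the closed form for $h(v,\gamma)$ given in the lemma statement --- in particular understanding why the bound is phrased with two separate arguments rather than $h(v,v)$. Once $g' < 0$ on $(-1/3,\gamma]$ is in hand, the rest is a one-line monotonicity argument: $g$ is continuous on $(-1/3,\gamma]$ (the denominator $1+3v$ is bounded away from $0$ there and $F(3,\cdot)$ is analytic), strictly decreasing, so $\min_{v\in(-1/3,\gamma]} g(v) = g(\gamma) \geq \tfrac12$ by the remaining hypothesis. I would also note that positivity of $_2F_1(\tfrac12,\tfrac12;\tfrac52;v^2)$ on $[-1,1]$ (already used for \Cref{eq:GP-expectation}) guarantees $g$ is well-defined and the endpoint evaluation is legitimate. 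No numerical search over the interval is needed --- only the check of a sign condition on $h$, which the lemma takes as a hypothesis.
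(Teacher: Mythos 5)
Your overall strategy is the paper's: show $g(v) := \frac{1+F(3,v)}{1+3v}$ is decreasing on $(-1/3,\gamma]$, conclude the minimum is at the right endpoint, and invoke the hypothesis $g(\gamma)\geq \tfrac12$. Your derivative computation is also right: using $\frac{d}{dz}\,_2F_1(a,b;c;z)=\frac{ab}{c}\,_2F_1(a{+}1,b{+}1;c{+}1;z)$ one indeed gets $g'(v)=h(v,v)$, exactly as in the paper.

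However, there is a genuine gap at the step you flag as unresolved bookkeeping, namely ``why the bound is phrased with two separate arguments rather than $h(v,v)$.'' The hypothesis $h(v,\gamma)<0$ for all $v\in(-1/3,\gamma]$ constrains only the \emph{numerator evaluated at the single point $\gamma$} (the $v$-dependence of $h(v,\gamma)$ sits entirely in the positive denominator $15\pi(3v+1)^2$), so it does not, by itself, say anything about the sign of $g'(v)=h(v,v)$ at other points of the interval. Your proposal treats the passage from $h(\cdot,\gamma)<0$ to $g'(v)<0$ as an algebraic identity to be verified (``the $\gamma$-dependent numerator terms actually only involve $\gamma$ through a part that is the same function of whichever variable one plugs in''), but it is an \emph{inequality} that needs an argument: one must show $h(v,v)\leq h(v,\gamma)$ for all $v\in(-1/3,\gamma]$. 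The paper does this by noting that for $\gamma\geq 1/3$ one has $v\leq |v|\leq\gamma$ on the interval, and then, using the series representation $\,_2F_1(a,b;c;t^2)=\sum_{j\geq 0}\frac{(a)_j(b)_j}{(c)_j\,j!}t^{2j}$ with nonnegative coefficients, each numerator term $8(3t+1)t^2\,_2F_1(\tfrac32,\tfrac32;\tfrac72;t^2)$ and $40\,_2F_1(\tfrac12,\tfrac12;\tfrac52;t^2)$ is nondecreasing as $t$ goes from $v$ to $\gamma$ (all factors nonnegative since $3v+1>0$). This is also precisely where the restriction $\gamma\in[1/3,1)$ is used — your write-up never invokes it, which is a symptom of the missing step. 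With that comparison in hand, $g'(v)=h(v,v)\leq h(v,\gamma)<0$, and the rest of your monotonicity-plus-endpoint argument goes through; this is also what makes the lemma a single-point check rather than a condition over the whole interval.
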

\begin{proof}
Define 
\begin{equation}
    g(v):=\frac{1 + F(3, v)}{1 + 3v}.
\end{equation}

The derivative of $g(v)$ is easily found with the quotient rule, and the well known form of $d/dz \,_2 F_1 (a, b;c ;z)$ \cite{A48}.  It is:
\begin{equation}\label{eq:der_of_approx}
    h(v, v)=\frac{d}{dv} \left[ \frac{1 + F(3, v)}{1 + 3v}\right]=\frac{8 (3 v+1) v^2 \, _2F_1\left(\frac{3}{2},\frac{3}{2};\frac{7}{2};v^2\right)+40 \, _2F_1\left(\frac{1}{2},\frac{1}{2};\frac{5}{2};v^2\right)-45 \pi }{15 \pi  (3 v+1)^2}.
\end{equation}
We can upper bound $v\leq |v|\leq \gamma$ for any $v\in (-1/3, \gamma]$ as long as $\gamma\geq 1/3$.  Hence, by the series representation:
\begin{equation}\label{eq:hyper_geo}
    \, _2F_1(a, b;c;z^2)=\sum_{j=0}^\infty \frac{(a)_j (b)_j}{(c)_j} z^{2j},
\end{equation}
we can upper bound $h(v, v)\leq h(v, \gamma)$ for any $v\in(-1/3, \gamma]$.

For any $\gamma$ that satisfies the hypothesis, $h(v,\gamma)$ is negative for all choices of $v$. Hence, $g(v)$ is decreasing and we can write:
\begin{equation*}
\min_{v \in (-1/3,\gamma]} \frac{1 + F(3, v)}{1 + 3v} \geq  \frac{1 + F(3, \gamma)}{1 + 3\gamma}.
\end{equation*}
Further, we may assume the R.H.S. is $\geq 1/2$ by hypothesis.

\end{proof}

\begin{definition}[The value $\gamma$]
For any fixed $\gamma$, one may numerically evaluate the numerator of $h(v,\gamma)$ to check that it satisfies the hypothesis of \Cref{lem:1/2-approx}.  For the remainder of this paper we set $\gamma := 0.911$, which does so.  For our analysis we are free to pick any value which satisfies both \Cref{lem:1/2-approx} (\Cref{fact:1}) and \Cref{fact:2}.  
\end{definition}

\subsection{Performance of Rounding Algorithm}

\begin{definition}[Edge types for \Cref{alg:main}]
There are three types of edges in reference to \Cref{alg:main}:
\begin{enumerate}
    \item The large edges: $L = \{ij \in E \mid v_{ij} > \gamma\}$
    \item Edges adjacent to $L$ (but not in $L$): $\delta(L)$ (see \Cref{sec:preliminaries})
    \item The remaining edges, those in $E(B) := \{ij \in E \mid i \in B\text{ and }j \in B\}$.
\end{enumerate}
\end{definition}

We have a $1/2$-approximation, satisfying \Cref{prop:approx}, for any edge in $L$:
\begin{equation} \label{eq:1/2-approx-L}
  \tr\left[\frac{\Id - X\otimes X - Y \otimes Y - Z \otimes Z}{4}\rho\right] = \frac{1}{2} \geq \frac{1}{2}\mu_{ij},
\end{equation}
where $\mu_{ij} \in [0,1]$ by \Cref{lem:QMC-values-ranges}, and $\rho = (\Id - Z \otimes Z)/4$. We also have a $1/2$-approximation for edges in $E(B)$ by \Cref{fact:1} (\Cref{lem:1/2-approx}).


This leaves edges in $\delta(L)$.  We show that such edges cannot have a large value.  Suppose $ij \in L$ and $ik \in E$.  The Lieb-Mattis bound used by Anshu, Gosset, and Morenz~\cite{A20} and Parekh and Thompson~\cite{PT21} yields that $v_{ij} + v_{ik} \leq 4/3$ (see Equation~(32) in~\cite{PT21}).  Although this bound can be tight (e.g., when $v_{ij} = v_{ik} = 2/3$), it is not strong enough for our purposes because it can only be tight for limited values of $v_{ij}$.  We take a bounding approach that takes the region of feasible $v_{ij}$ and $v_{ik}$ into account.


\begin{remark}[$L$ is a matching]
\label{rem:matching}
A consequence of \Cref{fact:2} is that the edges in $L$ must be a matching, since we take $\gamma = 0.911 \geq 1/3$. 
\end{remark}

First we assume \Cref{fact:2} and wrap up the proof of \Cref{thm:main}.
\begin{proof}[Proof of \Cref{thm:main}]
By \Cref{eq:1/2-approx-L} and \Cref{fact:1}, it remains to demonstrate a $1/2$-approximation for edges in $\delta(L)$.  Choose $ij \in \delta(L)$.  By \Cref{fact:2}, $v_{ij} < 1/3$ since $ij$ is adjacent to some edge in $L$.  By \Cref{def:QMC-values}, this means $1/3 > v_{ij} = (4\mu_{ij}-1)/3$, which implies $\mu_{ij} < 1/2$. 

Now let $i$ be the endpoint of $ij$ adjacent to $ik \in L$.  In the solution $\rho$ returned by \Cref{alg:main}, the only nontrivial Pauli term involving $i$ or $k$ is a single term proportional to $Z_iZ_k$, since $L$ is a matching.  There may be nontrivial Pauli terms involving $j$ that are tensored with the $Z_iZ_k$ term; however, this cannot result in a degree-$2$ term on $ij$.  Thus we have
\begin{equation*}
    \tr\left[\frac{\Id - X_iX_j - Y_iY_j - Z_iZ_j}{4}\rho\right] = \frac{1}{4} > \frac{1}{2}\mu_{ij}.
\end{equation*}
\end{proof}

\subsection{Monogamy of Entanglement on a Triangle}
\label{sec:nonlinear-triangle-inequality}
We now turn to the proof of \Cref{fact:2}.  We will analyze an arbitrary triangle of edges and assume without loss of generality that we are considering the three edges on qubits 1, 2, and 3.  In analyzing an arbitrary $\las_2$ solution for QMC, we will find it convenient to work with a different operator than the usual Paulis.  We define the unitary SWAP operator on the edge $ij$:
\begin{equation*}
    S_{ij}=\frac{\mathbb{I} + X_i X_j + Y_i Y_j + Z_i Z_j}{2}.
\end{equation*}

Consider a Gram matrix using the vectors $\ket{\mathbb{I}}$, $\ket{S_{12}}$, $\ket{S_{13}}$, $\ket{S_{23}}$, which are derived from a feasible $\las_2$ solution via \Cref{def:polynomial-vector}.  In other words, we define the matrix $M_{\C} \in \C^{4 \times 4}$ as $M_{\C}(P,Q) := \braket{P|Q}$ with $P,Q \in \{\Id,S_{12},S_{13},S_{23}\}$.  We will find that considering only the real part of $M_{\C}$ offers additional simplicity.  In particular we define $M := (M_{\C}+M_{\C}^T)/2 \in \R^{4 \times 4}$.  We have $M_{\C} \succeq 0$ by construction, which implies that $M \succeq 0$ as well.  We get
\begin{equation*}
    M(P,Q) = \frac{\braket{P|Q} + \braket{Q|P}}{2} = \frac{\braket{PQ}+\braket{QP}}{2} = \frac{\braket{PQ + QP}}{2},
\end{equation*}
by \Cref{lem:lasserre-polynomial-product} and \Cref{def:polynomial-value}.
Letting $s_{ij} = \braket{S_{ij}} \in [-1,1]$ (\Cref{def:QMC-values}) and using the above equation, the entries of $M$ are derived as:
\begin{equation*}
M(S_{ij},S_{ij}) = \frac{\braket{2S_{ij}^2}}{2} = \braket{\Id} = 1,
\end{equation*}
and when $i,j,k$ are distinct,
\begin{alignat*}{2}
 M(S_{ij},S_{ik})
&= \mathrlap{\frac{1}{2} \left\langle \frac{\Id + X_iX_j + Y_iY_j + Z_iZ_j}{2}  \frac{\Id + X_iX_k + Y_iY_k + Z_iZ_k}{2} + \right.} \\
& \mathrlap{\phantom{=\left\langle\frac{1}{2}\right.} \left. \frac{\Id + X_iX_k + Y_iY_k + Z_iZ_k}{2} \frac{\Id + X_iX_j + Y_iY_j + Z_iZ_j}{2} \right\rangle} \\
&= \mathrlap{\frac{1}{2}\left\langle \frac{\Id + X_iX_j + Y_iY_j +Z_iZ_j + X_iX_k + Y_iY_k + Z_iZ_k + X_jX_k + Y_jY_k + Z_jZ_k}{2} \right\rangle}\\
&= \frac{\braket{S_{ij}} + \braket{S_{ik}} + \braket{S_{jk}} - \braket{\Id}}{2} \qquad && \text{[ by \Cref{def:polynomial-value} ]}\\
&= \frac{s_{ij} + s_{ik} + s_{jk} - 1}{2}
  && \text{[ by \Cref{def:QMC-values} ]}.
\end{alignat*}

\begin{remark}[$\las_2$ and SWAP operators]
In light of the connection between $\las_2$ and psuedo-density operators, we may equivalently view the above as computing expectation values of a pseudo-density operator with respect to SWAP operators rather than Pauli terms. 

While $S_{ij}S_{ik}$ has 3-local terms, these cancel in $S_{ij}S_{ik} + S_{ik}S_{ij}$.  Our approach can be used to prove the same monogamy of entanglement bounds using $\las_2$ as~\cite{PT21}; however, using SWAP operators it suffices to only consider a single vector per edge rather than three per edge as~\cite{PT21}.  Moreover, SWAP operators induce symmetries in $M$ that are not explicit in traditional Pauli-based $\las_2$ analysis.  This is a key feature that enables our results.
\end{remark}

For further notational convenience, we let $p:=s_{12}$, $q:=s_{13}$, $r:=s_{23}$, and $s:=p+q+r$. We now have:
\begin{equation*}
M = \begin{bmatrix}
1 & p & q & r\\
p & 1 & \frac{s-1}{2} & \frac{s-1}{2}\\
q & \frac{s-1}{2} & 1 & \frac{s-1}{2}\\
r & \frac{s-1}{2} & \frac{s-1}{2} & 1
\end{bmatrix} \succeq 0.
\end{equation*}
We will derive our desired bounds by analyzing $M$.

\paragraph{Nonlinear triangle inequality.}  First we show that the condition $M \succeq 0$ can be captured by a quadratic polynomial inequality in $p$, $q$, and $r$.
\begin{lemma}For $M$ as defined above, $M \succeq 0$ is equivalent to
\begin{gather}
\label{eq:linear-triangle} 0 \leq p+q+r \leq 3\\
\label{eq:nonlinear-triangle} p^2 + q^2 + r^2 + 2(p+q+r) - 2(pq+pr+qr) \leq 3
\end{gather}
\end{lemma}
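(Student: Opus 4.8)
The statement is an "iff" between positive-semidefiniteness of the explicit $4\times 4$ matrix $M$ and the pair of polynomial inequalities \eqref{eq:linear-triangle}--\eqref{eq:nonlinear-triangle}. The natural tool is Sylvester-type / principal-minor analysis, but a cleaner route exploits the $S_3$ symmetry built into $M$: the lower-right $3\times 3$ block $A$ has all diagonal entries $1$ and all off-diagonal entries equal to $t:=\tfrac{s-1}{2}$, so it is a circulant (indeed a rank-one perturbation of the identity), $A = (1-t)\Id_3 + t J$, where $J$ is the all-ones matrix. Its eigenvalues are $1+2t = s$ (eigenvector $(1,1,1)$) and $1-t = \tfrac{3-s}{2}$ (multiplicity $2$, on the subspace orthogonal to $(1,1,1)$). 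The plan is to diagonalize this block first, then handle the coupling to the first coordinate.

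**Key steps.** First I would change basis on the last three coordinates by an orthogonal matrix sending $(1,1,1)/\sqrt3$ to a coordinate axis; in the new basis $A$ becomes $\operatorname{diag}(s,\tfrac{3-s}{2},\tfrac{3-s}{2})$ and the vector $(p,q,r)^T$ coupling to the first coordinate becomes $\big(\tfrac{p+q+r}{\sqrt3}, *, *\big) = \big(\tfrac{s}{\sqrt3}, a, b\big)$ for some $a,b$ with $a^2+b^2 = p^2+q^2+r^2 - \tfrac{s^2}{3}$ (this identity is the only "calculation", and it is just $\|v\|^2 = \|\text{projection}\|^2 + \|\text{rest}\|^2$). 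So $M$ is orthogonally equivalent to
\[
M' = \begin{bmatrix} 1 & \tfrac{s}{\sqrt3} & a & b \\ \tfrac{s}{\sqrt3} & s & 0 & 0 \\ a & 0 & \tfrac{3-s}{2} & 0 \\ b & 0 & 0 & \tfrac{3-s}{2} \end{bmatrix}.
\]
Second, I would analyze $M' \succeq 0$. A necessary condition is $s \geq 0$ and $3-s \geq 0$, i.e.\ \eqref{eq:linear-triangle}. Assuming those, if $s=0$ then we need $p=q=r=0$ (forced, since $a^2+b^2 \le s^2/3 = 0$ would be needed — actually from the $2\times 2$ minor with the second row) and one checks both inequalities hold; if $s=3$ similarly. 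In the generic case $0 < s < 3$, I would take the Schur complement of the block $\operatorname{diag}(s, \tfrac{3-s}{2}, \tfrac{3-s}{2})$ (which is positive definite) in the top-left corner: $M' \succeq 0 \iff 1 - \tfrac{s^2/3}{s} - \tfrac{a^2}{(3-s)/2} - \tfrac{b^2}{(3-s)/2} \ge 0$, i.e.
\[
1 - \frac{s}{3} - \frac{2(a^2+b^2)}{3-s} \ge 0 \iff 1 - \frac{s}{3} - \frac{2(p^2+q^2+r^2 - s^2/3)}{3-s} \ge 0.
\]
Third, I would clear denominators (legitimate since $3-s>0$): this becomes $(3-s)(1 - s/3) \ge 2(p^2+q^2+r^2) - 2s^2/3$, i.e.\ $\tfrac{(3-s)^2}{3} + \tfrac{2s^2}{3} \ge 2(p^2+q^2+r^2)$, and expanding $s^2 = (p+q+r)^2 = p^2+q^2+r^2 + 2(pq+pr+qr)$ I would check this rearranges exactly to \eqref{eq:nonlinear-triangle}: namely $3 - 2s + s^2 + 2s^2 \ge 6(p^2+q^2+r^2)$ wait — let me just say the routine algebra collapses to $p^2+q^2+r^2 + 2(p+q+r) - 2(pq+pr+qr) \le 3$. (I won't grind the arithmetic here; it is a one-line expansion.) Finally I would note the boundary cases $s\in\{0,3\}$ are consistent with these inequalities, so no generality is lost, and conversely that \eqref{eq:linear-triangle}--\eqref{eq:nonlinear-triangle} imply the Schur complement is nonnegative and the pivot block positive semidefinite, giving $M'\succeq 0$ hence $M\succeq 0$.

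**Main obstacle.** The only real subtlety is the degenerate cases $s = 0$ and $s = 3$, where the pivot block $\operatorname{diag}(s,\tfrac{3-s}{2},\tfrac{3-s}{2})$ is singular and the Schur complement argument does not directly apply; there one must argue separately (e.g.\ $M \succeq 0$ with a zero diagonal-adjacent structure forces the corresponding off-diagonal entries to vanish, which pins down $p,q,r$) and verify the polynomial inequalities degenerate gracefully. Everything else is bookkeeping: choosing the orthogonal change of basis, tracking the norm-splitting identity for $a^2+b^2$, and the final polynomial expansion. I expect the diagonalization-plus-Schur-complement skeleton to make the whole argument short, with the symmetry of $M$ doing most of the work — which is presumably why the authors emphasized that SWAP operators "induce symmetries in $M$."
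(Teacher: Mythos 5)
Your skeleton is essentially the paper's proof: the paper also reduces $M\succeq 0$ by Schur-complementing against the symmetric $3\times 3$ block, using exactly the eigenstructure you identify (eigenvalue $s$ on $(1,1,1)$ and eigenvalue $\tfrac{3-s}{2}$ with multiplicity two), so that block-positivity gives the linear inequality and the complement condition $1-\tfrac{s}{3}-\tfrac{2}{3-s}\bigl(p^2+q^2+r^2-\tfrac{s^2}{3}\bigr)\ge 0$ rearranges to the quadratic one. The only cosmetic difference is that you diagonalize by an explicit orthogonal change of basis, while the paper invokes the generalized Schur complement with the pseudoinverse $C^+$, which has the advantage of treating the boundary cases $s\in\{0,3\}$ inside one uniform set of conditions (namely $C\succeq 0$, $(\Id-CC^+)b=0$, $1-b^TC^+b\ge 0$). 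Your generic-case algebra is correct.

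The genuine flaw is in your sketch of the degenerate case $s=0$: it is \emph{not} true that $M\succeq 0$ with $s=0$ forces $p=q=r=0$. For example $p=\tfrac12$, $q=-\tfrac12$, $r=0$ gives $s=0$ and a positive semidefinite $M$. The point is that when $s=0$ the offending diagonal zero in your $M'$ sits in the \emph{second} row, whose off-diagonal entry is $s/\sqrt3=0$, so that row and column vanish identically and impose no constraint on $p,q,r$; your heuristic "a zero diagonal pins down $p,q,r$" is right for $s=3$ (where the zeros sit in rows $3,4$ and force $a=b=0$, hence $p=q=r=1$) but wrong for $s=0$. The correct handling at $s=0$ is to delete the zero row and column and Schur-complement the remaining $\operatorname{diag}\bigl(\tfrac32,\tfrac32\bigr)$ block, giving $p^2+q^2+r^2\le\tfrac32$, which is precisely what the quadratic inequality specializes to when $p+q+r=0$; this is how the paper's pseudoinverse conditions resolve it. So your proof is repairable within your own framework, but as written the $s=0$ branch would fail in both directions of the equivalence.
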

\begin{proof}
First, we write $M$ as a block matrix
\begin{equation*}
M = \begin{bmatrix}
1 & b^T\\
b & C 
\end{bmatrix}\text{ with }
C = \begin{bmatrix}
1 & \frac{s-1}{2} & \frac{s-1}{2}\\
\frac{s-1}{2} & 1 & \frac{s-1}{2}\\
\frac{s-1}{2} & \frac{s-1}{2} & 1
\end{bmatrix}\text{ and }
b^T = (p,q,r).
\end{equation*}
We will employ a Schur complement to reduce the positivity of $M$ to simpler conditions.  By Theorem 4.3 in~\cite{G10} we have
\begin{equation*}
    M \succeq 0 \quad\Longleftrightarrow\quad \text{(i) } C \succeq 0,\ \text{(ii) } (\Id - CC^+)b = \vec{0},\ \text{(iii) } 1 - b^T C^+ b \geq 0, 
\end{equation*}
where $C^+$ is the pseudoinverse of $C$. To employ (i)-(iii), we obtain an eigendecomposition of $C$.  One may verify that we may pick the following eigenvectors for $C$, with associated eigenvalues:
\begin{alignat*}{2}
    u_1 &= (1,1,1)^T, &&\lambda_1 = s,\\
    u_2 &= (-1,1,0)^T,\quad&&\lambda_2 = \frac{3-s}{2},\\
    u_3 &= (1,1,-2)^T, &&\lambda_3 = \frac{3-s}{2}.
\end{alignat*}
From the above, we derive projectors onto the subspaces corresponding to the two distinct eigenvalues; this allows us to express $C = \lambda_1 \Pi_1 + \lambda_2 \Pi_2$, where
\begin{equation*}
\Pi_1 = \frac{1}{3}\begin{bmatrix}
1 & 1 & 1\\
1 & 1 & 1\\
1 & 1 & 1
\end{bmatrix}\text{ and }
\Pi_2 = \frac{1}{3}\begin{bmatrix}
2 & -1 & -1\\
-1 & 2 & -1\\
-1 & -1 & 2
\end{bmatrix}.
\end{equation*}
The matrix $\Pi_1$ is a rank-$1$ projector, and $\Pi_2$ must be the desired rank-$2$ projector since $\Pi_2 = \Id - \Pi_1$.  

Property (i) is equivalent to $\lambda_1,\lambda_2 \geq 0$, which in turn is equivalent to $s \in [0,3]$ (i.e., \Cref{eq:linear-triangle}).  If $s \in (0,3)$ then $C$ is full rank, and (ii) becomes vacuous.  In this case, scaling (iii) by a factor of 3 yields 
\begin{equation}\label{eq:triangle-polynomial}
    0 \leq 3 - \frac{3}{\lambda_1}b^T\Pi_1 b - \frac{3}{\lambda_2}b^T \Pi_2 b = 3 - \frac{1}{s}(p+q+r)^2 - \frac{4}{3-s}(p^2 + q^2 + r^2 - pq - pr - qr).
\end{equation}
Algebraic manipulation demonstrates that the above is indeed equivalent to \Cref{eq:nonlinear-triangle}.

The remaining cases are $s \in \{0,3\}$.  When $s=3$, $C^+ = \frac{1}{3}\Pi_1$ and $CC^+ = \Pi_1$.  Consequently, (ii) is equivalent to $b = (1,1,1)$ since $s = 3$.  In this case adding $0=(3-p-q-r)^2$ to the left-hand side of \Cref{eq:nonlinear-triangle} shows that the latter is equivalent to $p^2 + q^2 + r^2 \leq 3$.  Since $s=3$, we must have $p=q=r=1$.  Thus both $M\succeq 0$ and \Cref{eq:linear-triangle,eq:nonlinear-triangle} are equivalent to $p=q=r=1$.

Finally, when $s=0$, $\Id-CC^+ = \Pi_1$ and consequently (ii) is equivalent to $s=0$.  In this case, for (iii), the second term of \Cref{eq:triangle-polynomial} is absent, yielding $\frac{4}{3}(p^2 + q^2 + r^2 - pq - pr - qr) \leq 3$.  Subtracting $0 = \frac{1}{3}(p+q+r)^2$ from the left-hand side shows that this is equivalent to \Cref{eq:nonlinear-triangle}.

\end{proof}

\Cref{eq:nonlinear-triangle} captures the the precise relationship between the values $p$ and $q$ on two adjacent edges we require to prove \Cref{fact:2}.  However, below we opt for a more direct analysis of $M$, allowing us to employ SDP duality to extract the desired relationship.

\begin{lemma} \label{lem:large-edge-bound}
For $M$ as defined above, if $q \leq 0$, then $p \geq -\sqrt{3}/2$.
\end{lemma}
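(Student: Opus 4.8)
The plan is to prove the bound by exhibiting an explicit dual certificate for the semidefinite constraint $M\succeq 0$, as the surrounding text foreshadows. Concretely, I would fix two vectors $e,f\in\R^4$ (indices ordered $\Id,S_{12},S_{13},S_{23}$) and a positive constant $c$ such that the scalar identity $c\,e^{T}Me+f^{T}Mf=4\sqrt{3}\,p+2\sqrt{3}\,q+6$ holds identically in $p,q,r$ (recall $M=M(p,q,r)$ is affine in these three variables). Since $M\succeq 0$ forces $e^{T}Me\ge 0$ and $f^{T}Mf\ge 0$, the left-hand side is nonnegative whenever $c>0$; dividing by $2\sqrt 3$ then gives $2p+q+\sqrt 3\ge 0$, and the hypothesis $q\le 0$ yields $2p+\sqrt 3\ge -q\ge 0$, i.e.\ $p\ge -\sqrt{3}/2$.

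To find $e$, $f$, and $c$, I would first locate where the bound is tight. Minimizing $p$ subject to $M\succeq 0$ (equivalently \Cref{eq:linear-triangle} and \Cref{eq:nonlinear-triangle}) and $q\le 0$, the optimum occurs at $p=-\sqrt{3}/2$, $q=0$, $r=\sqrt{3}/2$, where $s=0$ and \Cref{eq:nonlinear-triangle} is tight. At this point $M$ is singular with a two-dimensional kernel, and one checks directly that it is spanned by $e=(0,1,1,1)^{T}$ and $f=(\sqrt{3},1,0,-1)^{T}$. Complementary slackness suggests the certificate $N$ should have column space inside this kernel, so one looks for $N=c_1\,ee^{T}+c_2\,ff^{T}$; requiring that the coefficient of $r$ in $\langle M,N\rangle$ vanish and that the coefficient of $q$ be nonnegative pins down $c_1=\tfrac{2\sqrt 3+1}{3}$ and $c_2=1$ (any positive rescaling is fine). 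Using the one-line expansions $e^{T}Me=3(p+q+r)$ and $f^{T}Mf=6+(2\sqrt 3-1)p-q-(2\sqrt 3+1)r$, the weighted sum $\tfrac{2\sqrt 3+1}{3}\,e^{T}Me+f^{T}Mf$ collapses (the $r$-terms cancel) to $4\sqrt 3\,p+2\sqrt 3\,q+6$, which is the identity above.

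In the actual writeup I would simply: (i) define $e$ and $f$; (ii) record the two quadratic forms $e^{T}Me$ and $f^{T}Mf$; (iii) add them with weights $\tfrac{2\sqrt 3+1}{3}$ and $1$, invoking $M\succeq 0$; (iv) conclude using $q\le 0$. The only genuinely non-mechanical step — and the heart of the ``SDP duality'' remark — is discovering the certificate $(e,f,c)$, i.e.\ recognizing which combination of quadratic forms annihilates the $r$-dependence while leaving a nonnegative $q$-coefficient; once that combination is written down, the proof is a routine algebraic identity plus two sign observations. (An alternative, slightly less clean route would avoid the certificate entirely: fix $q\le 0$, regard \Cref{eq:nonlinear-triangle} as a quadratic inequality in $r$, and use solvability to extract a constraint between $p$ and $q$; but the certificate approach is the one the surrounding exposition is set up for.)
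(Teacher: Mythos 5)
Your certificate is correct --- the identity $\tfrac{2\sqrt{3}+1}{3}\,e^{T}Me+f^{T}Mf=2\sqrt{3}\,(2p+q+\sqrt{3})$ checks out, so $M\succeq 0$ gives $2p+q+\sqrt{3}\ge 0$ and $q\le 0$ then forces $p\ge -\sqrt{3}/2$ --- and this is essentially the paper's argument, which likewise exhibits an explicit rank-two PSD dual certificate for the SDP minimizing $p$ subject to $M\succeq 0$ and $q\le 0$, concluding by weak duality. The only difference is packaging: you compress the certificate into a single quadratic-form identity and invoke $q\le 0$ at the end, whereas the paper writes out the primal--dual SDP pair and verifies feasibility of its $X=\alpha_1 x_1x_1^{T}+\alpha_2 x_2x_2^{T}$.
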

\begin{proof}We seek to show that the minimum feasible value of $p$, while constraining $q \leq 0$, is at least $-\sqrt{3}/2$.  This is captured by an SDP:
\begin{alignat*}{2}
\beta := \min\ p&\\
s.t. \qquad
    pM_1 + qM_2 + rM_3 + B &\succeq 0\\
    q &\leq 0, 
\end{alignat*}
where (zeroes omitted for readability):
\begin{equation*}
M_1 = \begin{bmatrix}
 & 1 &  & \\
1 &  & \frac{1}{2} & \frac{1}{2}\\
 & \frac{1}{2} &  & \frac{1}{2}\\
 & \frac{1}{2} & \frac{1}{2} & 
\end{bmatrix},\ 
M_2 = \begin{bmatrix}
 &  & 1 & \\
 &  & \frac{1}{2} & \frac{1}{2}\\
1 & \frac{1}{2} &  & \frac{1}{2}\\
 & \frac{1}{2} & \frac{1}{2} & 
\end{bmatrix},\ 
M_3 = \begin{bmatrix}
 &  &  & 1\\
 &  & \frac{1}{2} & \frac{1}{2}\\
 & \frac{1}{2} &  & \frac{1}{2}\\
1 & \frac{1}{2} & \frac{1}{2} & 
\end{bmatrix}, \text{ and }
B = \begin{bmatrix}
1 &  &  & \\
 & 1 & -\frac{1}{2} & -\frac{1}{2}\\
 & -\frac{1}{2} & 1 & -\frac{1}{2}\\
 & -\frac{1}{2} & -\frac{1}{2} & 1 
\end{bmatrix}.
\end{equation*}
The dual SDP is:
\begin{align*}
\beta' := \max\ \tr[-B X]&\\
s.t. \qquad 
 \tr[M_1 X] &= 1\\
 \tr[M_2 X] &\geq 0\\
 \tr[M_3 X] &= 0\\
 X &\succeq 0.
\end{align*}
Due to the primal inequality constraint, $q \leq 0$, this pair of SDPs is not in standard form~\cite{BV96}.  However, dropping the inequality in the primal and replacing the matrices with $M_1':=0 \oplus M_1$, $M_2' := -1 \oplus M_2$, $M_3' := 0 \oplus M_3$, and $B' := 0 \oplus B$ offers an equivalent standard pair of SDPs.  

We construct an explicit dual solution certifying the desired bound. Let 
\begin{equation*}
x_1 := [\sqrt{3},2,0,1]^T,\ x_2 := [-\sqrt{3},-1,1,0]^T,\ \alpha_1 = \frac{\sqrt{3}}{4}-\frac{1}{3},\text{ and }\alpha_2 = \frac{1}{12+6\sqrt{3}}.
\end{equation*}
We set $X := \alpha_1 \cdot x_1x_1^T + \alpha_2 \cdot x_2x_2^T$.  One may verify that $\tr[B X] = \sqrt{3}/2$, $\tr[M_1 X]=1$, $\tr[M_2 X] = 1/2 \geq 0$, and $\tr[M_3 X] = 0$, hence $X$ is a dual feasible solution.  We have $-\sqrt{3}/2 = \beta' \leq \beta \leq p$, by weak duality.

\end{proof}

\begin{remark}  
One may wonder why orthogonal $x_1$ and $x_2$ were not chosen above. The above values were chosen to simplify verification; orthogonalization results in significantly more complicated expressions.  Likewise, a rank-1 solution $X$ exists but is unwieldy to express exactly.  In case the reader is curious, an optimal primal solution, satisfying complementary slackness with $X$, is $p=-\sqrt{3}/2$, $q=0$, and $r=\sqrt{3}/2$.  
\end{remark}

We are now in a position to prove \Cref{fact:2}.
\begin{proof}[Proof of \Cref{fact:2}]
The contrapositive of \Cref{lem:large-edge-bound} establishes that if $p_{ij} < -\sqrt{3}/2$ then $p_{ik} > 0$. 
From the relationships in \Cref{def:QMC-values}, we see that $p_{ij} = (1-3v_{ij})/2 < -\sqrt{3}/2$ is equivalent to $v_{ij} > (1+\sqrt{3})/3$. Likewise, we have $p_{ik} = (1-3v_{ik})/2 > 0$ is equivalent to $v_{ik} < 1/3$.  Thus if $v_{ij} > \gamma = 0.911 > (1+\sqrt{3})/3$, then $v_{ik} < 1/3$.  
\end{proof}

\section*{Acknowledgements}
We thank John Wright for suggesting that product-state approximations improving Gharibian and Parekh~\cite{G19} might be possible.  Sandia National Laboratories is a multimission laboratory managed and operated by National Technology and Engineering Solutions of Sandia, LLC., a wholly owned subsidiary of Honeywell International, Inc., for the U.S. Department of Energy’s National Nuclear Security Administration under contract DE-NA-0003525. This work was supported by the U.S. Department of Energy, Office of Science, Office of Advanced Scientific Computing Research, Accelerated Research in Quantum Computing. O.P.~was also supported by U.S. Department of Energy, Office of Science, National Quantum Information Science Research Centers.

\bibliography{mybib}

@article{H21,
	author = {Hwang, Yeongwoo and Neeman, Joe and Parekh, Ojas and Thompson, Kevin and Wright, John},
	date-added = {2022-06-15 19:06:37 -0600},
	date-modified = {2022-06-15 19:07:00 -0600},
	journal = {arXiv preprint arXiv:2111.01254},
	title = {Unique Games hardness of {Q}uantum {M}ax-{C}ut, and a vector-valued {B}orell's inequality},
	year = {2021}}

@article{L62,
	author = {Lieb, Elliott and Mattis, Daniel},
	date-added = {2021-11-04 14:22:45 -0600},
	date-modified = {2021-11-04 14:22:45 -0600},
	journal = {Journal of Mathematical Physics},
	number = {4},
	pages = {749--751},
	publisher = {American Institute of Physics},
	title = {Ordering energy levels of interacting spin systems},
	volume = {3},
	year = {1962}}

@article{G16,
	author = {Gamel, Omar},
	journal = {Physical Review A},
	number = {6},
	pages = {062320},
	publisher = {APS},
	title = {Entangled Bloch spheres: Bloch matrix and Two-Qubit State Space},
	volume = {93},
	year = {2016}}

@inproceedings{G19,
	annote = {Keywords: Approximation algorithm, Max-Cut, local Hamiltonian, QMA-hard, Heisenberg model, product state},
	author = {Sevag Gharibian and Ojas Parekh},
	booktitle = {Approximation, Randomization, and Combinatorial Optimization. Algorithms and Techniques (APPROX/RANDOM 2019)},
	doi = {10.4230/LIPIcs.APPROX-RANDOM.2019.31},
	isbn = {978-3-95977-125-2},
	issn = {1868-8969},
	pages = {31:1--31:17},
	series = {Leibniz International Proceedings in Informatics (LIPIcs)},
	title = {{Almost Optimal Classical Approximation Algorithms for a Quantum Generalization of Max-Cut}},
	url = {http://drops.dagstuhl.de/opus/volltexte/2019/11246},
	urn = {urn:nbn:de:0030-drops-112463},
	volume = {145},
	year = {2019},
	bdsk-url-1 = {http://drops.dagstuhl.de/opus/volltexte/2019/11246},
	bdsk-url-2 = {https://doi.org/10.4230/LIPIcs.APPROX-RANDOM.2019.31}}

@inproceedings{A20,
	address = {Dagstuhl, Germany},
	annote = {Keywords: Approximation algorithms, Quantum many-body systems},
	author = {Anurag Anshu and David Gosset and Karen Morenz},
	booktitle = {15th Conference on the Theory of Quantum Computation, Communication and Cryptography (TQC 2020)},
	doi = {10.4230/LIPIcs.TQC.2020.7},
	editor = {Steven T. Flammia},
	isbn = {978-3-95977-146-7},
	issn = {1868-8969},
	pages = {7:1--7:15},
	publisher = {Schloss Dagstuhl--Leibniz-Zentrum f{\"u}r Informatik},
	series = {Leibniz International Proceedings in Informatics (LIPIcs)},
	title = {{Beyond Product State Approximations for a Quantum Analogue of Max Cut}},
	url = {https://drops.dagstuhl.de/opus/volltexte/2020/12066},
	urn = {urn:nbn:de:0030-drops-120660},
	volume = {158},
	year = {2020},
	bdsk-url-1 = {https://drops.dagstuhl.de/opus/volltexte/2020/12066},
	bdsk-url-2 = {https://doi.org/10.4230/LIPIcs.TQC.2020.7}}

@article{A21,
	author = {Anshu, Anurag and Gosset, David and Morenz Korol, Karen J and Soleimanifar, Mehdi},
	date-modified = {2022-06-15 18:48:52 -0600},
	journal = {arXiv preprint arXiv:2105.01193},
	title = {Improved approximation algorithms for bounded-degree local {H}amiltonians},
	year = {2021}}

@inproceedings{H20,
	author = {Hallgren, Sean and Lee, Eunou and Parekh, Ojas},
	booktitle = {Approximation, Randomization, and Combinatorial Optimization. Algorithms and Techniques (APPROX/RANDOM 2020)},
	organization = {Schloss Dagstuhl-Leibniz-Zentrum f{\"u}r Informatik},
	title = {An approximation algorithm for the {MAX-2-Local Hamiltonian} problem},
	year = {2020}}

@article{B19,
	author = {Bravyi, Sergey and Gosset, David and K{\"o}nig, Robert and Temme, Kristan},
	journal = {Journal of Mathematical Physics},
	number = {3},
	pages = {032203},
	publisher = {AIP Publishing LLC},
	title = {Approximation Algorithms for Quantum Many-Body Problems},
	volume = {60},
	year = {2019}}

@article{B16,
	author = {Brand{\~a}o, Fernando GSL and Harrow, Aram W},
	journal = {Communications in Mathematical Physics},
	number = {1},
	pages = {47--80},
	publisher = {Springer},
	title = {Product-State Approximations to Quantum States},
	volume = {342},
	year = {2016}}

@article{G12,
	author = {Gharibian, Sevag and Kempe, Julia},
	date-modified = {2022-06-15 18:54:16 -0600},
	journal = {SIAM Journal on Computing},
	number = {4},
	pages = {1028--1050},
	publisher = {SIAM},
	title = {Approximation Algorithms for {QMA}-Complete Problems},
	volume = {41},
	year = {2012}}

@article{G95,
	author = {Goemans, Michel X and Williamson, David P},
	journal = {Journal of the ACM (JACM)},
	number = {6},
	pages = {1115--1145},
	publisher = {ACM New York, NY, USA},
	title = {Improved Approximation Algorithms for {Maximum Cut} and Satisfiability Problems using Semidefinite Programming},
	volume = {42},
	year = {1995}}

@inproceedings{B10,
	author = {Bri{\"e}t, Jop and de Oliveira Filho, Fernando M{\'a}rio and Vallentin, Frank},
	booktitle = {International Colloquium on Automata, Languages, and Programming},
	date-modified = {2022-06-15 18:49:31 -0600},
	organization = {Springer},
	pages = {31--42},
	title = {The Positive Semidefinite {G}rothendieck Problem with Rank Constraint},
	year = {2010}}

@article{BJ10,
	author = {Bri{\"e}t, Jop and de Oliveira Filho, Fernando M{\'a}rio and Vallentin, Frank},
	doi = {10.4086/toc.2014.v010a004},
	journal = {Theory of Computing},
	number = {4},
	pages = {77--105},
	publisher = {Theory of Computing},
	title = {Grothendieck Inequalities for Semidefinite Programs with Rank Constraint},
	url = {http://www.theoryofcomputing.org/articles/v010a004},
	volume = {10},
	year = {2014},
	bdsk-url-1 = {http://www.theoryofcomputing.org/articles/v010a004},
	bdsk-url-2 = {https://doi.org/10.4086/toc.2014.v010a004}}

@article{H96,
	author = {Horodecki, Ryszard and Horodecki, Micha{\l}},
	journal = {Physical Review A},
	number = {3},
	pages = {1838},
	publisher = {APS},
	title = {Information-Theoretic Aspects of Inseparability of Mixed States},
	volume = {54},
	year = {1996}}

@book{A48,
	author = {Abramowitz, Milton and Stegun, Irene A},
	publisher = {US Government printing office},
	title = {Handbook of Mathematical Functions with Formulas, Graphs, and Mathematical Tables},
	volume = {55},
	year = {1948}}

@inproceedings{C04,
	author = {Charikar, Moses and Wirth, Anthony},
	booktitle = {45th Annual IEEE Symposium on Foundations of Computer Science},
	date-modified = {2022-06-15 18:50:02 -0600},
	organization = {IEEE},
	pages = {54--60},
	title = {Maximizing Quadratic Programs: Extending {G}rothendieck's Inequality},
	year = {2004}}

@inproceedings{R12,
	author = {Raghavendra, Prasad and Tan, Ning},
	booktitle = {Proceedings of the twenty-third annual ACM-SIAM symposium on Discrete Algorithms},
	organization = {SIAM},
	pages = {373--387},
	title = {Approximating {CSPs} with Global Cardinality Constraints using {SDP} Hierarchies},
	year = {2012}}

@inproceedings{D08,
	author = {Doherty, Andrew C and Liang, Yeong-Cherng and Toner, Ben and Wehner, Stephanie},
	booktitle = {2008 23rd Annual IEEE Conference on Computational Complexity},
	organization = {IEEE},
	pages = {199--210},
	title = {The quantum moment problem and bounds on entangled multi-prover games},
	year = {2008}}

@article{P10,
	author = {Pironio, Stefano and Navascu{\'e}s, Miguel and Acin, Antonio},
	journal = {SIAM Journal on Optimization},
	number = {5},
	pages = {2157--2180},
	publisher = {SIAM},
	title = {Convergent relaxations of polynomial optimization problems with noncommuting variables},
	volume = {20},
	year = {2010}}

@inproceedings{PT20,
	annote = {Keywords: Quantum Approximation Algorithms, Local Hamiltonian},
	author = {Parekh, Ojas and Thompson, Kevin},
	booktitle = {29th Annual European Symposium on Algorithms (ESA 2021)},
	doi = {10.4230/LIPIcs.ESA.2021.74},
	isbn = {978-3-95977-204-4},
	issn = {1868-8969},
	pages = {74:1--74:18},
	series = {Leibniz International Proceedings in Informatics (LIPIcs)},
	title = {{Beating Random Assignment for Approximating Quantum 2-Local Hamiltonian Problems}},
	url = {https://drops.dagstuhl.de/opus/volltexte/2021/14655},
	urn = {urn:nbn:de:0030-drops-146554},
	volume = {204},
	year = {2021},
	bdsk-url-1 = {https://drops.dagstuhl.de/opus/volltexte/2021/14655},
	bdsk-url-2 = {https://doi.org/10.4230/LIPIcs.ESA.2021.74}}

@inproceedings{PT21,
	annote = {Keywords: Quantum Max Cut, Quantum Approximation Algorithms, Lasserre Hierarchy, Local Hamiltonian, Heisenberg model},
	author = {Parekh, Ojas and Thompson, Kevin},
	booktitle = {48th International Colloquium on Automata, Languages, and Programming (ICALP 2021)},
	doi = {10.4230/LIPIcs.ICALP.2021.102},
	isbn = {978-3-95977-195-5},
	issn = {1868-8969},
	pages = {102:1--102:20},
	series = {Leibniz International Proceedings in Informatics (LIPIcs)},
	title = {{Application of the Level-2 Quantum Lasserre Hierarchy in Quantum Approximation Algorithms}},
	url = {https://drops.dagstuhl.de/opus/volltexte/2021/14171},
	urn = {urn:nbn:de:0030-drops-141718},
	volume = {198},
	year = {2021},
	bdsk-url-1 = {https://drops.dagstuhl.de/opus/volltexte/2021/14171},
	bdsk-url-2 = {https://doi.org/10.4230/LIPIcs.ICALP.2021.102}}

@article{BV96,
	author = {Vandenberghe, Lieven and Boyd, Stephen},
	journal = {SIAM review},
	number = {1},
	pages = {49--95},
	publisher = {SIAM},
	title = {Semidefinite programming},
	volume = {38},
	year = {1996}}

@article{LW06,
	author = {Michael Luby and Avi Wigderson},
	doi = {10.1561/0400000009},
	issn = {1551-305X},
	journal = {Foundations and Trends{\textregistered} in Theoretical Computer Science},
	number = {4},
	pages = {237-301},
	title = {Pairwise Independence and Derandomization},
	url = {http://dx.doi.org/10.1561/0400000009},
	volume = {1},
	year = {2006},
	bdsk-url-1 = {http://dx.doi.org/10.1561/0400000009}}

@article{K07,
	author = {Khot, Subhash and Kindler, Guy and Mossel, Elchanan and O'Donnell, Ryan},
	journal = {SIAM Journal on Computing},
	number = {1},
	pages = {319--357},
	publisher = {SIAM},
	title = {Optimal inapproximability results for {MAX-CUT} and other 2-variable {CSPs}?},
	volume = {37},
	year = {2007}}

@inproceedings{BRS11,
	author = {Barak, Boaz and Raghavendra, Prasad and Steurer, David},
	booktitle = {2011 ieee 52nd annual symposium on foundations of computer science},
	organization = {IEEE},
	pages = {472--481},
	title = {Rounding semidefinite programming hierarchies via global correlation},
	year = {2011}}

@inproceedings{R08,
	author = {Raghavendra, Prasad},
	booktitle = {Proceedings of the fortieth annual ACM symposium on Theory of computing},
	pages = {245--254},
	title = {Optimal algorithms and inapproximability results for every {CSP}?},
	year = {2008}}

@inproceedings{RS09,
	author = {Raghavendra, Prasad and Steurer, David},
	booktitle = {2009 50th Annual IEEE Symposium on Foundations of Computer Science},
	organization = {IEEE},
	pages = {586--594},
	title = {How to round any CSP},
	year = {2009}}

@article{FS02,
	author = {Feige, Uriel and Schechtman, Gideon},
	date-modified = {2022-06-15 18:55:22 -0600},
	journal = {Random Structures \& Algorithms},
	number = {3},
	pages = {403--440},
	publisher = {Wiley Online Library},
	title = {On the optimality of the random hyperplane rounding technique for {MAX} {CUT}},
	volume = {20},
	year = {2002}}

@unpublished{G10,
	author = {Gallier, Jean H},
	date-modified = {2022-06-15 18:53:50 -0600},
	note = {Available at \url{https://repository.upenn.edu/cis_papers/601}},
	title = {Notes on the {S}chur complement},
	year = {2010}}

@article{NPA08,
	author = {Navascu{\'e}s, Miguel and Pironio, Stefano and Ac{\'\i}n, Antonio},
	journal = {New Journal of Physics},
	number = {7},
	pages = {073013},
	publisher = {IOP Publishing},
	title = {A convergent hierarchy of semidefinite programs characterizing the set of quantum correlations},
	volume = {10},
	year = {2008}}

@article{lee2024improved,
  title={An improved Quantum Max Cut approximation via matching},
  author={Lee, Eunou and Parekh, Ojas},
  journal={arXiv preprint arXiv:2401.03616},
  year={2024}
}

@article{jorquera2024monogamy,
  title={Monogamy of entanglement bounds and improved approximation algorithms for qudit hamiltonians},
  author={Jorquera, Zackary and Kolla, Alexandra and Kordonowy, Steven and Sandhu, Juspreet Singh and Wayland, Stuart},
  journal={arXiv preprint arXiv:2410.15544},
  year={2024}
}

@article{ju2025improved,
  title={Improved approximation algorithms for the EPR Hamiltonian},
  author={Ju, Nathan and Nagda, Ansh},
  journal={arXiv preprint arXiv:2504.10712},
  year={2025}
}

@article{tao2025refined,
  title={A Refined Algorithm For the EPR model},
  author={Tao, Wenxuan and Zuo, Fen},
  journal={arXiv preprint arXiv:2506.08547},
  year={2025}
}

@article{apte2025improved,
  title={Improved algorithms for quantum maxcut via partially entangled matchings},
  author={Apte, Anuj and Lee, Eunou and Marwaha, Kunal and Parekh, Ojas and Sud, James},
  journal={arXiv preprint arXiv:2504.15276},
  year={2025}
}

@article{huber2024second,
  title={Second order cone relaxations for quantum Max Cut},
  author={Huber, Felix and Thompson, Kevin and Parekh, Ojas and Gharibian, Sevag},
  journal={arXiv preprint arXiv:2411.04120},
  year={2024}
}

@article{gribling2025improved,
  title={Improved approximation ratios for the Quantum Max-Cut problem on general, triangle-free and bipartite graphs},
  author={Gribling, Sander and Sinjorgo, Lennart and Sotirov, Renata},
  journal={arXiv preprint arXiv:2504.11120},
  year={2025}
}

@article{apte2025conjectured,
  title={Conjectured Bounds for 2-Local Hamiltonians via Token Graphs},
  author={Apte, Anuj and Parekh, Ojas and Sud, James},
  journal={arXiv preprint arXiv:2506.03441},
  year={2025}
}

@article{lee2022optimizing,
  title={Optimizing quantum circuit parameters via SDP},
  author={Lee, Eunou},
  journal={arXiv preprint arXiv:2209.00789},
  year={2022}
}

@InProceedings{kallaugher_et_al:LIPIcs.ITCS.2025.63,
  author =	{Kallaugher, John and Parekh, Ojas and Thompson, Kevin and Wang, Yipu and Yirka, Justin},
  title =	{{Complexity Classification of Product State Problems for Local Hamiltonians}},
  booktitle =	{16th Innovations in Theoretical Computer Science Conference (ITCS 2025)},
  pages =	{63:1--63:32},
  series =	{Leibniz International Proceedings in Informatics (LIPIcs)},
  ISBN =	{978-3-95977-361-4},
  ISSN =	{1868-8969},
  year =	{2025},
  volume =	{325},
  editor =	{Meka, Raghu},
  publisher =	{Schloss Dagstuhl -- Leibniz-Zentrum f{\"u}r Informatik},
  address =	{Dagstuhl, Germany},
  URL =		{https://drops.dagstuhl.de/entities/document/10.4230/LIPIcs.ITCS.2025.63},
  URN =		{urn:nbn:de:0030-drops-226910},
  doi =		{10.4230/LIPIcs.ITCS.2025.63},
  annote =	{Keywords: quantum complexity, quantum algorithms, local hamiltonians}
}

@article{apte20250,
  title={A 0.8395-approximation algorithm for the EPR problem},
  author={Apte, Anuj and Lee, Eunou and Marwaha, Kunal and Parekh, Ojas and Sinjorgo, Lennart and Sud, James},
  journal={arXiv preprint arXiv:2512.09896},
  year={2025}
}

@article{piddock2025quantum,
  title={Quantum Max-Cut is NP hard to approximate},
  author={Piddock, Stephen},
  journal={arXiv preprint arXiv:2510.07995},
  year={2025}
}

@article{carlson2023approximation,
  title={Approximation Algorithms for Quantum Max-$ d $-Cut},
  author={Carlson, Charlie and Jorquera, Zackary and Kolla, Alexandra and Kordonowy, Steven and Wayland, Stuart},
  journal={arXiv preprint arXiv:2309.10957},
  year={2023}
}

@article{parekh2024constrained,
  title={Constrained local Hamiltonians: quantum generalizations of Vertex Cover},
  author={Parekh, Ojas and Rayudu, Chaithanya and Thompson, Kevin},
  journal={arXiv preprint arXiv:2409.04433},
  year={2024}
}

@inproceedings{culf2024approximation,
  title={Approximation algorithms for noncommutative CSPs},
  author={Culf, Eric and Mousavi, Hamoon and Spirig, Taro},
  booktitle={2024 IEEE 65th Annual Symposium on Foundations of Computer Science (FOCS)},
  pages={920--929},
  year={2024},
  organization={IEEE}
}
\bibliographystyle{alpha}

\appendix 
\section{The Quantum Lasserre Hierarchy}\label{sec:lasserre}

 We find it clean and convenient to express the Laserre hierarchy from the perspective of a vector program.  The connection to the psuedo-density formulation above will be made explicit at the end of the section.

\onote{Give informal explanation of vector version of Lasserre here}

\begin{mdframed}
\begin{definition}[Laserre vector solution] \label{def:vector-solution}
A feasible solution to $\las_k$, denoting level $k$ of the Lasserre hierarchy, is expressed as a collection of complex unit vectors:
\begin{equation*}
\ket{\tau} \in \C^d,\text{ for each }\tau \in \mathcal{P}_n(k).
\end{equation*}
We employ bra-ket notation for vectors to streamline notation and suggest a connection to moments of quantum states.  We extend the above definition to Pauli polynomials rather than just Pauli monomials by linearity.  For any Pauli polynomial of degree at most $k$,
\begin{equation*}
    P = \sum_{\tau \in \mathcal{P}_n(k)} p(\tau)\cdot \tau,\text{ with }p(\tau) \in \R\ \forall \tau,
\end{equation*}
we define
\begin{equation} \label{def:polynomial-vector}
    \ket{P} := \sum_{\tau \in \mathcal{P}_n(k)} p(\tau)\ket{\tau}.
\end{equation}
Thus we view a $\las_k$ solution as providing vectors for Pauli polynomials of degree at most $k$. 
\end{definition}
\end{mdframed}

\begin{mdframed}
\begin{definition}[Value of a Lasserre vector solution]
\label{def:vector-solution-value}
From the vectors of a $\las_k$ solution, we may derive relaxed values for Pauli polynomials of degree at most $2k$.  Express $\phi \in \mathcal{P}_n(2k)$ as a product of non-trivial Paulis acting on single qubits, $\phi = \phi_{i_1}\phi_{i_2}\cdots\phi_{i_l}$ with $i_1 < \cdots < i_l$. Let $h = \min\{k,l\}$; then the \emph{left part} and \emph{right part} of $\phi$ are 
\begin{align*}
\phi_{L} &:= \phi_{i_1}\ldots\phi_{i_h}, \text{ and }\\
\phi_{R} &:= \phi_{i_{h+1}}\ldots\phi_{i_l},
\end{align*}
respectively.  We may now define the \emph{value} of $\phi$ as 
\begin{equation} \label{def:term-value}
    \braket{\phi} := \braket{\phi_L | \phi_R}.
\end{equation}
We extend the notion of value, by linearity, to Pauli polynomials of degree at most $2k$.  In this case we allow complex coefficients for convenience. For
\begin{equation*}
    Q = \sum_{\phi \in \mathcal{P}_n(2k)} q(\phi)\cdot \phi,\text{ with }q(\phi) \in \C\ \forall \phi,
\end{equation*}
we define
\begin{equation} \label{def:polynomial-value}
    \braket{Q} := \sum_{\phi \in \mathcal{P}_n(2k)} q(\phi)\braket{\phi}.
\end{equation}
If $Q \in \herm_n$ then we may assume $q(\phi) \in \R\ \forall \phi$, hence $\braket{Q} \in \R$, while if $Q \notin \herm_n$ then $\braket{Q} \in \C$.
\end{definition}
\end{mdframed}

\begin{mdframed}
\begin{definition}[Quantum Lasserre hierarchy] \label{def:lasserre}
We are given as input $H = \sum_{\phi \in \mathcal{P}_n(2k)} c(\phi) \phi$, with $H \in \herm_n$.  Level $k$ of the Quantum Lasserre hierarchy, denoted $\las_k$, is defined by the following vector program:
\begin{alignat}{2}
\nonumber 
    \nu_k(H) := \max \sum_{\phi \in \mathcal{P}_n(2k)} & c(\phi) \braket{\phi}\\
    s.t. \qquad
    \label{cons:lasserre-unit}
    \braket{\tau | \tau} &= 1
        \quad && \forall \tau \in \mathcal{P}_n(k) \\
    \label{cons:lasserre}
    \braket{\tau | \sigma} &= \braket{\tau\sigma} 
        \quad && \forall \tau,\sigma \in \mathcal{P}_n(k) \\
\nonumber
    \ket{\tau} &\in \C^d
        \quad && \forall \tau \in \mathcal{P}_n(k),
\end{alignat}
for any integer $d \geq |\mathcal{P}_n(k)|$.
\end{definition}
\end{mdframed}

\begin{remark}[Quantum Lasserre hierarchy] \label{rem:lasserre}
Some remarks regarding \Cref{def:lasserre} are in order:
\begin{enumerate}
    \item The intuition driving $\las_k$ is that $\ket{\tau}$ is meant to approximate $\tau\ket{\psi}$ for a true quantum state $\ket{\psi}$.  This connection is made more precise in \Cref{lem:relaxation-bound}, that establishes $\las_k$ is indeed a relaxation.

    \item \label{item:lasserre-(k-1)} A feasible solution for $\las_{k-1}$ may be extracted from $\las_{k}$ by considering only the vectors $\ket{\tau}$ with $\tau \in \mathcal{P}_n(k-1)$.

    \item The vector program $\las_k$ may be solved as a complex SDP by replacing the vectors $\ket{\tau}$ with the matrix $M \in \C^{\mathcal{P}_n(k) \times \mathcal{P}_n(k)}$ where $M(\tau,\sigma) := \braket{\tau | \sigma}$.  The entries of $M \succeq 0$ are variables of the SDP.  The constraints \eqref{cons:lasserre} and \eqref{cons:lasserre-unit} become linear equalities, with complex coefficients, in the entries of $M$.  Finally, the objective is linear in these entries, hence $\las_k$ is equivalent to a complex SDP.  Complex SDPs may be solved as real SDPs by doubling the dimensions of the matrices.  This gives us a real SDP over matrices of size $O(n^k) \times O(n^k)$. See~\cite{PT21} for more details.
    
    \item Although in \Cref{def:lasserre} we do not specify a value of $d$, we may always construct a vector solution using a Cholesky decomposition of the $|\mathcal{P}_n(k)| \times |\mathcal{P}_n(k)|$ SDP solution matrix $M$ described above.  Thus we may convert any feasible solution into one with $d = |\mathcal{P}_n(k)|$, such that the values are preserved.
    
    \item \label{rem:real-lasserre}
    One may also define a real version of $\las_k$ in which the $\ket{\tau}$ are real, and Constraint~\eqref{cons:lasserre} is modified so $\braket{\tau\sigma} = 0$, if $\tau\sigma \notin \herm_n$.  The objective function remains the same since $\mathcal{P}_n(2k) \subset \herm_n$.  A $2d$-dimensional real solution $\{\ket{\tau'}\}$ may be obtained from a $d$-dimensional complex one $\{\ket{\tau}\}$ by setting $\ket{\tau'} := \Re(\ket{\tau}) \oplus \Im(\ket{\tau})$, where $\Re$ and $\Im$ refer to real and imaginary parts.  From the moment matrix perspective, this is equivalent to setting $M' := (M+M^T)/2$.  Real solutions obtained in this way preserve the objective value.
\end{enumerate}
\end{remark}

\subsection{Properties of the Quantum Lasserre Hierarchy}

We warm up with some fundamental properties of $\las_k$ solutions.  First we demonstrate that the objective $\nu_k(H)$ given in \Cref{def:relaxed-objective} is indeed a relaxation of $\lambda_{\max}(H)$.
\begin{lemma} \label{lem:relaxation-bound}
Let $H \in \herm_n$.  Then $\las_k$ gives a relaxation of $\lambda_{\max}(H)$, namely $\nu_k(H) \geq \lambda_{\max}(H)$.  
\end{lemma}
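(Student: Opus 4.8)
\emph{Proof plan.} The plan is to exhibit an explicit feasible solution of $\las_k$ whose objective value equals $\lambda_{\max}(H)$; since $\nu_k(H)$ is defined as a maximum over feasible solutions, this immediately yields $\nu_k(H) \geq \lambda_{\max}(H)$. First I would fix a unit eigenvector $\ket{\psi} \in (\C^2)^{\otimes n}$ with $H\ket{\psi} = \lambda_{\max}(H)\ket{\psi}$ and set $\ket{\tau} := \tau\ket{\psi}$ for each $\tau \in \mathcal{P}_n(k)$. These are vectors in $\C^{2^n}$; to meet the ambient-dimension requirement of \Cref{def:lasserre} one passes to the Gram matrix $M(\tau,\sigma) := \braket{\psi|\tau^\dagger\sigma|\psi}$, which is positive semidefinite by construction, and recovers a vector solution in $\C^{|\mathcal{P}_n(k)|}$ (hence in $\C^d$ for any $d \geq |\mathcal{P}_n(k)|$) via the Cholesky step recorded in item~4 of \Cref{rem:lasserre}, which preserves all inner products.

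Next I would verify the two families of constraints. Every Pauli monomial is unitary and Hermitian, so $\tau^\dagger\tau = \Id$ and $\braket{\tau|\tau} = \braket{\psi|\psi} = 1$, giving \Cref{cons:lasserre-unit}. For \Cref{cons:lasserre}, write $\tau\sigma = \zeta\,\phi$, where $\phi \in \mathcal{P}_n(2k)$ is a Pauli monomial and $\zeta \in \{\pm 1, \pm i\}$ is the phase accumulated from multiplying $\tau_i\sigma_i$ qubit by qubit. On the one hand $\braket{\tau|\sigma} = \braket{\psi|\tau^\dagger\sigma|\psi} = \braket{\psi|\tau\sigma|\psi} = \zeta\braket{\psi|\phi|\psi}$. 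On the other hand, the formal value from \Cref{def:vector-solution-value} is $\braket{\tau\sigma} = \zeta\braket{\phi} = \zeta\braket{\phi_L|\phi_R} = \zeta\braket{\psi|\phi_L^\dagger\phi_R|\psi}$; since $\phi_L$ and $\phi_R$ act on disjoint sets of qubits, they commute and $\phi_L^\dagger\phi_R = \phi_L\phi_R = \phi$ with no sign introduced, so $\braket{\tau\sigma} = \zeta\braket{\psi|\phi|\psi} = \braket{\tau|\sigma}$. The same disjoint-support observation shows more generally that $\braket{\phi} = \braket{\psi|\phi|\psi}$ for every $\phi \in \mathcal{P}_n(2k)$.

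Finally, summing over the terms of $H = \sum_{\phi \in \mathcal{P}_n(2k)} c(\phi)\phi$ gives $\sum_{\phi} c(\phi)\braket{\phi} = \braket{\psi| H |\psi} = \lambda_{\max}(H)$, so the constructed solution is feasible with objective value $\lambda_{\max}(H)$, whence $\nu_k(H) \geq \lambda_{\max}(H)$. I do not expect any serious obstacle here: the only point needing care is the phase and support bookkeeping in the middle paragraph, namely checking that the left/right split of a monomial recombines to the monomial with no sign change and that the same phase $\zeta$ appears in both $\braket{\tau|\sigma}$ and the formal value $\braket{\tau\sigma}$, which is routine once the monomials are written out on disjoint qubits.
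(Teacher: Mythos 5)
Your proposal is correct and follows essentially the same route as the paper's proof: plug in $\ket{\tau} := \tau\ket{\psi}$ for a maximizing eigenvector $\ket{\psi}$, check Constraints~\eqref{cons:lasserre-unit} and~\eqref{cons:lasserre} via the identity $\braket{\phi} = \braket{\psi|\phi|\psi}$ with the same phase bookkeeping, and read off the objective value $\braket{\psi|H|\psi} = \lambda_{\max}(H)$. Your extra Gram-matrix/Cholesky remark about the ambient dimension is a harmless added technicality already covered by \Cref{rem:lasserre}, not a different argument.
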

\begin{proof}
Pick any unit $\ket{\psi} \in (\C^2)^{\otimes n}$ such that $\braket{\psi|H|\psi} = \lambda_{\max}(H)$.  We show that setting $\ket{\tau} := \tau \ket{\psi}$ for each $\tau \in \mathcal{P}_n(k)$ yields a feasible solution for $\las_k$.

First we observe that for any Pauli term $\phi \in \mathcal{P}_n(2k)$, applying \Cref{def:term-value} with respect to our $\{\ket{\tau}\}$ implies
\begin{equation} \label{eq:psi-bar}
\braket{\psi|\phi|\psi}
= \braket{\psi|\phi_L \phi_R|\psi} 
= \braket{\phi_L | \phi_R}
= \braket{\phi}.
\end{equation}

Constraint~\eqref{cons:lasserre-unit} is satisfied since $\braket{\tau |\tau} = \braket{\psi|\tau^2 |\psi} = \braket{\psi|\psi} = 1$.  To check Constraint~\eqref{cons:lasserre}, suppose $\tau,\sigma \in \mathcal{P}_n(k)$ with $\tau\sigma = c \phi$, where $c \in \{\pm 1, \pm i\}$ and $\phi \in \mathcal{P}_n(2k)$.  We get
\begin{alignat*}{2}
    \braket{\tau | \sigma} = \braket{\psi |\tau \sigma |\psi} = c\braket{\psi|\phi |\psi} &= c\braket{\phi} \qquad && \text{[ by \Cref{eq:psi-bar} ]}\\
    &= \braket{\tau\sigma} \qquad && \text{[ by \Cref{def:polynomial-value} ]}.
\end{alignat*}

Finally, for the objective value express $H$ as $\sum_{\phi \in \mathcal{P}_n(2k)} c(\phi)\cdot \phi$.  Another application of \Cref{eq:psi-bar} yields
\begin{equation*}
\braket{\psi|H|\psi} = 
\sum_{\phi \in \mathcal{P}_n(2k)} c(\phi) \braket{\psi|\phi|\psi} =
\sum_{\phi \in \mathcal{P}_n(2k)} c(\phi) \braket{\phi}.
\end{equation*}
Since we constructed a feasible solution with objective value $\braket{\psi|H|\psi}$, we have $\nu_k(H) \geq \braket{\psi|H|\psi} = \lambda_{\max}(H)$.
\end{proof}

Next we show that Constraint~\eqref{cons:lasserre} extends more generally to Pauli polynomials rather than just Pauli terms.
\begin{lemma} \label{lem:lasserre-polynomial-product}
For a feasible $\las_k$ solution $\{\ket{\tau}\}_{\tau \in \mathcal{P}_n(k)}$ and Pauli polynomials $P,Q$ of degree at most $k$, we have $\braket{P|Q} = \braket{PQ}$.
\end{lemma}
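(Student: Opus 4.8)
The plan is to reduce the claimed identity to Constraint~\eqref{cons:lasserre} by expanding both sides in the Pauli basis and matching them term by term. Write $P = \sum_{\tau \in \mathcal{P}_n(k)} p(\tau)\,\tau$ and $Q = \sum_{\sigma \in \mathcal{P}_n(k)} q(\sigma)\,\sigma$ with $p(\tau),q(\sigma)\in\R$, using the (unique) representations in the Pauli basis.

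First I would expand the left-hand side. By \Cref{def:polynomial-vector}, $\ket{P} = \sum_\tau p(\tau)\ket{\tau}$ and likewise $\ket{Q} = \sum_\sigma q(\sigma)\ket{\sigma}$; since the $p(\tau)$ are real we also have $\bra{P} = \sum_\tau p(\tau)\bra{\tau}$, so the vector inner product is real-bilinear in $(P,Q)$ and
\begin{equation*}
\braket{P|Q} = \sum_{\tau,\sigma \in \mathcal{P}_n(k)} p(\tau)\,q(\sigma)\,\braket{\tau|\sigma}.
\end{equation*}
Next I would expand the right-hand side. The operator product $PQ = \sum_{\tau,\sigma} p(\tau)q(\sigma)\,\tau\sigma$ is a Pauli polynomial of degree at most $2k$, since $\deg(\tau\sigma) \le \deg\tau + \deg\sigma \le 2k$; writing each $\tau\sigma = c_{\tau\sigma}\,\phi_{\tau\sigma}$ with $c_{\tau\sigma}\in\{\pm1,\pm i\}$ and $\phi_{\tau\sigma}\in\mathcal{P}_n(2k)$, and using that $\braket{\cdot}$ is the linear extension of $\phi \mapsto \braket{\phi}$ to degree-$\le 2k$ polynomials (\Cref{def:polynomial-value}), which is well defined because the Pauli terms are linearly independent, we get
\begin{equation*}
\braket{PQ} = \sum_{\tau,\sigma \in \mathcal{P}_n(k)} p(\tau)\,q(\sigma)\,\braket{\tau\sigma}.
\end{equation*}

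Comparing the two displays, it remains only to observe that $\braket{\tau|\sigma} = \braket{\tau\sigma}$ for every $\tau,\sigma \in \mathcal{P}_n(k)$, which is precisely Constraint~\eqref{cons:lasserre}; hence $\braket{P|Q} = \braket{PQ}$. The only subtlety — and the closest thing to an obstacle — is the bookkeeping around the two linear extensions: one must confirm that $\ket{P}$ depends linearly on $P$ and that $\braket{PQ}$ depends bilinearly on $(P,Q)$, so that checking the identity on single Pauli terms suffices. The first holds by \Cref{def:polynomial-vector} together with the reality of the coefficients (so that forming $\bra{P}$ introduces no conjugation issues), and the second holds by \Cref{def:polynomial-value} and linear independence of the Pauli basis. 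Everything else is direct substitution.
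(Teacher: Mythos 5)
Your proof is correct and follows essentially the same route as the paper's: expand $P$ and $Q$ in the Pauli basis, use bilinearity of the inner product and of the value map, and then apply Constraint~\eqref{cons:lasserre} term by term, with \Cref{def:polynomial-vector} and \Cref{def:polynomial-value} handling the linear extensions. Your extra care about the reality of the coefficients and the well-definedness of the two linear extensions is sound but adds nothing beyond what the paper's shorter computation already uses implicitly.
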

\begin{proof}
Let $P = \sum_{\tau \in \mathcal{P}_n(k)} p(\tau)\cdot \tau$, and $Q = \sum_{\sigma \in \mathcal{P}_n(k)} q(\sigma)\cdot \sigma$. Then
\begin{alignat*}{2}
\braket{P|Q} &= \left(\sum_{\tau \in \mathcal{P}_n(k)} p(\tau)\bra{\tau} \right) \left(\sum_{\sigma \in \mathcal{P}_n(k)} q(\sigma)\ket{\sigma} \right)
\qquad && \text{[ by \Cref{def:polynomial-vector} ]}\\
&= \sum_{\tau,\sigma \in \mathcal{P}_n(k)} p(\tau)q(\sigma)\braket{\tau\sigma} && \text{[ by Constraint~\eqref{cons:lasserre} ]}\\
&= \braket{PQ}. && \text{[ by \Cref{def:polynomial-value} ]}
\end{alignat*}
\end{proof}

Finally, we show that a $\las_k$ solution may be ``rotated'' by tensor products of single-qubit unitaries to produce another feasible $\las_k$ solution.  The rotation corresponds to a change of the Pauli basis for each qubit.

\begin{lemma}\label{lem:rotation_L2} Let $U \in \mathcal{U}_n$ be a tensor product of single-qubit unitaries.  For a feasible $\las_k$ solution $\{\ket{\tau}\}_{\tau \in \mathcal{P}_n(k)}$, let $\ket{\tau'}:=\ket{U^\dagger \tau U}\ \forall \tau$.  Then $\{\ket{\tau'}\}_{\tau \in \mathcal{P}_n(k)}$ is a feasible solution for $\las_k$.
\end{lemma}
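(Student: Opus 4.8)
The plan is to verify directly that $\{\ket{\tau'}\}_{\tau \in \mathcal{P}_n(k)}$ satisfies Constraints~\eqref{cons:lasserre-unit} and~\eqref{cons:lasserre} of \Cref{def:lasserre}. The one structural fact that makes this go through is that conjugating a Pauli polynomial by a \emph{tensor product of single-qubit} unitaries cannot raise its degree. Concretely, for a term $\tau = \bigotimes_i \tau_i$ we have $U^\dagger \tau U = \bigotimes_i U_i^\dagger \tau_i U_i$, where $U_i^\dagger \Id U_i = \Id$ and $U_i^\dagger \sigma U_i$ is a traceless Hermitian $2 \times 2$ matrix for $\sigma \in \{X,Y,Z\}$, hence a real combination of $X_i, Y_i, Z_i$. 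So $U^\dagger \tau U$ is a real linear combination of Pauli terms each supported within $\mathrm{supp}(\tau)$, i.e.\ a Pauli polynomial of degree at most $\deg \tau \le k$, and $\ket{\tau'} := \ket{U^\dagger \tau U}$ is a legitimate vector per \Cref{def:vector-solution}, living in the same $\C^d$. This is exactly where the hypothesis on $U$ is used: a generic $n$-qubit unitary would take a degree-$1$ term to a degree-$n$ polynomial, outside the reach of $\las_k$.

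For Constraint~\eqref{cons:lasserre-unit}, I would apply \Cref{lem:lasserre-polynomial-product} to the degree-$\le k$ polynomials $U^\dagger \tau U$ and $U^\dagger \tau U$:
\[
\braket{\tau' | \tau'} = \braket{(U^\dagger \tau U)(U^\dagger \tau U)} = \braket{U^\dagger \tau^2 U} = \braket{U^\dagger \Id U} = \braket{\Id} = 1,
\]
the last step being Constraint~\eqref{cons:lasserre-unit} for the original solution applied to $\Id \in \mathcal{P}_n(k)$.

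For Constraint~\eqref{cons:lasserre}, fix $\tau, \sigma \in \mathcal{P}_n(k)$ and write $\tau\sigma = c\,\phi$ with $c \in \{\pm 1, \pm i\}$ and $\phi \in \mathcal{P}_n(2k)$; let $\phi_L, \phi_R$ be the left and right parts of $\phi$ from \Cref{def:vector-solution-value}, which each have degree at most $k$. On one side, \Cref{lem:lasserre-polynomial-product} applied to $U^\dagger \tau U$ and $U^\dagger \sigma U$ gives $\braket{\tau'|\sigma'} = \braket{(U^\dagger \tau U)(U^\dagger \sigma U)} = \braket{U^\dagger \tau\sigma U} = c\,\braket{U^\dagger \phi U}$. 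On the other side, the value the new vectors assign to the polynomial $\tau\sigma$ is, by \Cref{def:term-value} and linearity, $c\,\braket{(\phi_L)' | (\phi_R)'}$, and a second application of \Cref{lem:lasserre-polynomial-product}, now to $U^\dagger \phi_L U$ and $U^\dagger \phi_R U$, yields $\braket{(\phi_L)'|(\phi_R)'} = \braket{(U^\dagger \phi_L U)(U^\dagger \phi_R U)} = \braket{U^\dagger \phi_L\phi_R U} = \braket{U^\dagger \phi U}$. The two sides coincide, so Constraint~\eqref{cons:lasserre} holds and $\{\ket{\tau'}\}$ is feasible.

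I do not anticipate a genuine obstacle: the argument is essentially bookkeeping, and the only points requiring care are (i) confirming that $U^\dagger \tau U$, $U^\dagger \phi_L U$, and $U^\dagger \phi_R U$ all have degree at most $k$ so that \Cref{lem:lasserre-polynomial-product} is applicable, and (ii) correctly carrying the phase $c$ when $\tau\sigma \notin \herm_n$. As a conceptual sanity check, one may note the same statement from the pseudo-density viewpoint: the rotated solution corresponds to $U \psuedo{\rho} U^\dagger$, and $\widetilde{\D}_n(k)$ is closed under conjugation by such $U$ since $\tr[U\psuedo{\rho}U^\dagger P^2] = \tr[\psuedo{\rho}\,(U^\dagger P U)^2] \ge 0$ for every Pauli polynomial $P$ of degree at most $k$, because $U^\dagger P U$ is again one.
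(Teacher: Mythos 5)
Your proof is correct and follows essentially the same route as the paper's: degree preservation of $U^\dagger \tau U$ under tensor products of single-qubit unitaries, followed by \Cref{lem:lasserre-polynomial-product} to verify Constraints~\eqref{cons:lasserre-unit} and~\eqref{cons:lasserre}. Your double application of the lemma (tracking the phase $c$ and the left/right parts of $\tau\sigma$) just spells out more explicitly what the paper's terser statement $\braket{\tau'|\sigma'}=\braket{\tau'\sigma'}$ compresses, and the pseudo-density remark is a nice consistency check beyond what the paper includes.
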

\begin{proof} Each $\tau'$ is Hermitian and may represented by a Pauli polynomial, so each $\ket{\tau'}$ is well defined by \Cref{def:polynomial-vector}.  Moreover, since $U$ is a tensor product of single-qubit unitaries, $\tau'$ has the same degree as $\tau$, so each $\tau'$ has degree at most $k$.  By \Cref{lem:lasserre-polynomial-product}, $\braket{\tau'|\sigma'} = \braket{\tau'\sigma'}$, and $\braket{\tau'|\tau'} = \braket{U^\dagger \tau UU^\dagger \tau U} = \braket{\Id} = \braket{\Id | \Id} = 1$; hence, the constraints of $\las_k$ are satisfied.
\end{proof}

We could replace $U$ above by an arbitrary unitary in $\mathcal{U}_n$.  In this case, the resulting solution would be feasible for the Lasserre hierarchy, but not necessarily for level $k$, since the degree of $\tau'$ could be larger than $k$. 
 
\paragraph{Pseudo-densities.}  We close our introduction to $\las_k$ by sketching the connection to psuedo-densities (\Cref{def:psuedo-density}).  We may extract a psuedo-density operator $\psuedo{\rho}$ from a $\las_k$ solution as
\begin{equation*}
    \psuedo{\rho} := \frac{1}{2^n}\sum_{\phi \in \mathcal{P}_n(2k)} \braket{\phi} \phi,
\end{equation*}
so that $\braket{\phi} = \tr[\psuedo{\rho}\,\phi], \forall \phi$.  The operator $\psuedo{\rho}$ is indeed a $k$-positive pseudo-density because $\tr[\psuedo{\rho}] = \braket{\Id} = \braket{\Id | \Id} =  1$, and $\tr[\psuedo{\rho}P^2] = \braket{P^2} = \braket{P | P} \geq 0$, by \Cref{lem:lasserre-polynomial-product}, for any Pauli polynomial $P$ of degree at most $k$.

Conversely, given a $k$-positive psuedo-density operator $\psuedo{\rho}$, define a moment matrix $M \in \C^{\mathcal{P}_n(k) \times \mathcal{P}_n(k)}$ with $M(\tau,\sigma) := \tr[\psuedo{\rho}\,\tau\sigma]$, for all $\tau,\sigma \in \mathcal{P}_n(k)$.  For any Pauli polynomial $P$ of degree at most $k$, define a vector $p \in \R^{\mathcal{P}_n(k)}$ with $p(\tau) := \tr[P\,\tau]$.  Then one may verify $p^T M p = \tr[\psuedo{\rho}P^2]$ (e.g., see below Definition 3 in~\cite{PT21}).  Since $\tr[\psuedo{\rho}P^2] \geq 0$, we get $M \succeq 0$.  A solution for $\las_k$ may then be obtained by a Cholesky decomposition of $M$.

Thus we may view $\las_k$ as arising from either of the equivalent optimization problems of \Cref{prob:lasserre-psuedo-density} or \Cref{def:lasserre}.  From the discussion above and \Cref{lem:n-positivity}, we have the following.
\begin{lemma}For a Hamiltonian $H \in \herm_n$, the objective of $\las_n$ is $\nu_n(H) = \lambda_{\max}(H)$.  Moreover, the values of a $\las_n$ solution, for $\tau \in \mathcal{P}_n(n)$ are $\braket{\tau} = \tr[\rho\,\tau]$ for some $\rho \in \D_n$.  
\end{lemma}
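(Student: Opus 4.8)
The plan is to assemble the pseudo-density correspondence sketched just above the statement together with \Cref{lem:n-positivity} and \Cref{lem:relaxation-bound}. First I would observe that \Cref{lem:relaxation-bound} already supplies the inequality $\nu_n(H) \geq \lambda_{\max}(H)$, so what remains is the reverse inequality together with the ``moreover'' clause, and both follow at once from a single claim: every feasible $\las_n$ solution is induced by a genuine density matrix.

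To prove that claim, I would take an arbitrary feasible $\las_n$ solution $\{\ket{\tau}\}_{\tau \in \mathcal{P}_n(n)}$ and form the operator $\psuedo{\rho} := 2^{-n}\sum_{\phi \in \mathcal{P}_n(2n)} \braket{\phi}\,\phi$, as in the pseudo-density discussion. The structural point that makes the level $k=n$ case special is that on $n$ qubits every Pauli term has degree at most $n$, so $\mathcal{P}_n(2n) = \mathcal{P}_n(n)$ is exactly the full Pauli basis; hence $\psuedo{\rho}$ is a well-defined Hermitian operator satisfying $\tr[\psuedo{\rho}\,\phi] = \braket{\phi}$ for each Pauli term $\phi$ (by orthogonality of Paulis under the trace inner product), and by linearity $\tr[\psuedo{\rho}\,O] = \braket{O}$ for every $O \in \herm_n$. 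I would then verify the two defining conditions of an $n$-positive pseudo-density: $\tr[\psuedo{\rho}] = \braket{\Id} = \braket{\Id|\Id} = 1$ by Constraint~\eqref{cons:lasserre-unit}, and $\tr[\psuedo{\rho}\,P^2] = \braket{P^2} = \braket{P|P} \geq 0$ for any Pauli polynomial $P$ of degree at most $n$, via \Cref{lem:lasserre-polynomial-product}. Thus $\psuedo{\rho} \in \widetilde{\D}_n(n)$.

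Now \Cref{lem:n-positivity} gives $\widetilde{\D}_n(n) = \D_n$, so $\rho := \psuedo{\rho}$ is an honest density matrix with $\braket{\tau} = \tr[\rho\,\tau]$ for all $\tau \in \mathcal{P}_n(n)$, which is precisely the ``moreover'' clause. Specializing to an optimal $\las_n$ solution, the objective satisfies $\nu_n(H) = \braket{H} = \tr[\rho H] \leq \lambda_{\max}(H)$; combined with \Cref{lem:relaxation-bound} this yields $\nu_n(H) = \lambda_{\max}(H)$.

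I do not expect a genuine obstacle here: the substantive content is already contained in \Cref{lem:n-positivity}, and the rest is the bookkeeping for the pseudo-density extraction. The only points meriting care are the elementary facts that on $n$ qubits $\mathcal{P}_n(2n) = \mathcal{P}_n(n)$ is the entire Pauli basis, so a ``Pauli polynomial of degree at most $n$'' is nothing other than an arbitrary Hermitian operator — which is exactly what causes the pseudo-positivity condition of \Cref{def:psuedo-density} to collapse to genuine positive semidefiniteness — and the orthogonality relation $\tr[\phi\psi] = 2^n\delta_{\phi\psi}$ underlying $\tr[\psuedo{\rho}\,\phi] = \braket{\phi}$.
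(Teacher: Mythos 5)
Your proposal is correct and follows exactly the route the paper intends: the paper states this lemma with the remark that it follows ``from the discussion above and \Cref{lem:n-positivity}'', i.e.\ the pseudo-density extraction $\psuedo{\rho} = 2^{-n}\sum_{\phi}\braket{\phi}\phi$ with $\tr[\psuedo{\rho}]=1$ and $\tr[\psuedo{\rho}P^2]=\braket{P|P}\geq 0$ via \Cref{lem:lasserre-polynomial-product}, combined with $\widetilde{\D}_n(n)=\D_n$ and \Cref{lem:relaxation-bound} for the opposite inequality. You have simply written out the same argument in full, including the correct observation that at level $k=n$ the degree restriction becomes vacuous.
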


\section{Analysis of \Cref{alg:reg_graph} on Unweighted Graphs with Bounded Minimum Degree}

Recall we are given a min-degree-$d$ unweighted graph $G=(V, E)$ and we are interested in proving some approximation $\alpha(d)$ for the QMC instance $H$ corresponding to $G$.  The analysis proceeds by writing $H$ as a sum over star Hamiltonians then by demonstrating that \Cref{alg:reg_graph} has good performance on each star hence it also has good performance overall.  We calculate $\alpha(d)$ by minimizing the approximation factor over the possible values of the edges (\Cref{def:QMC-values}) on the star.  The minimal value leads to a uniform lower bound on uniform lower bound for the approximation factor on each local term and hence yields a lower bound on the approximation factor for the entire Hamiltonian.  The fact that the parameters of the rounding algorithm are coming from $\las_2$ rather than $\las_1$ will imply a crucial constraint which leads to the approximation factor $\alpha(d)$.

The main technical lemma will demonstrate that when optimizing the ``worst-case'' performance occurs when the values on the edges of the star are all the same.  At this point computing $\alpha(d)$ reduces to a simple one parameter optimization which is easily evaluated with calculus.

\onote{$F[x]$ vs $F(x)$ inconsistency between here and elsewhere.  Sorry, I changed to $F(x)$ at some point because I couldn't justify using $F[x]$.}

\begin{proposition}\label{prop:helper}
 Let $F(3, x)$ be as defined in \Cref{def:F[v]}.
\begin{enumerate}
\item\label{item:helper_lemma_item1}  $x-F(3, x) < 2/3 \,\, \forall x\in [-1/3, 1]$.
\item\label{item:helper_lemma_item2}  $F(3, x)$ is convex $\forall x\in [0, 1]$.
\item\label{item:helper_lemma_item3}  $F(3, x)/3x \leq 1/3$ $\forall x\in [-1/3, 0)$.
\item\label{item:helper_lemma_item4}  For $d\geq 2$ an integer, $$
\min_{x\in (-1/3, 1/3+2/(3 d)]} \,\,\frac{1+F(3, x)}{1+3 x}=\frac{1+F(3, 1/3+2/(3 d))}{2+2/d}.
$$
\item\label{item:helper_lemma_item5} Let 
$$
\alpha(d)=\frac{1+F(3, 1/3+2/(3 d))}{2+2/d}.
$$
Then $\alpha(d+1)\geq \alpha(d)$ for $d\geq 2$.  
\end{enumerate}
\end{proposition}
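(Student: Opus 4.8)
The plan is to prove the five claims roughly in the order (2), (3), (1), (4), (5), since the later ones rest on the earlier. Write $G(z):=\,{}_2F_1(\tfrac12,\tfrac12;\tfrac52;z)$, so that $F(3,x)=\tfrac{8}{3\pi}\,x\,G(x^2)$ by definition (\Cref{def:F[v]}); two facts I would use repeatedly are that $G$ is a power series with strictly positive coefficients — hence positive and nondecreasing on $[0,1]$ — and that Gauss's summation gives $G(1)=\tfrac{\Gamma(5/2)\Gamma(3/2)}{\Gamma(2)^2}=\tfrac{3\pi}{8}$. For (2): $F(3,x)=\tfrac{8}{3\pi}\sum_{j\ge0}\tfrac{(1/2)_j(1/2)_j}{(5/2)_j\,j!}\,x^{2j+1}$ is a power series in $x$ with nonnegative coefficients on odd powers, so term-by-term differentiation (valid for $x\in[0,1)$) gives $F''(3,x)\ge0$ there; since $G(1)<\infty$, $F(3,\cdot)$ is continuous at $1$ and convexity extends to all of $[0,1]$. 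For (3): the factor $x$ cancels, $\tfrac{F(3,x)}{3x}=\tfrac{8}{9\pi}\,G(x^2)$ for $x\ne0$, and since $G$ is nondecreasing on $[0,1]$ with $G(1)=\tfrac{3\pi}{8}$, for $|x|\le1$ we get $\tfrac{F(3,x)}{3x}\le\tfrac{8}{9\pi}G(1)=\tfrac13$, which in particular handles $x\in[-\tfrac13,0)$; a byproduct I will reuse is $F(3,t)\le t$ for $t\in[0,1]$. For (1): if $x\in[0,1]$ then $G(x^2)\ge G(0)=1$, so $F(3,x)\ge\tfrac{8}{3\pi}x$ and $x-F(3,x)\le(1-\tfrac{8}{3\pi})x\le1-\tfrac{8}{3\pi}<\tfrac23$ (using $\tfrac{8}{3\pi}>\tfrac13$); if $x\in[-\tfrac13,0)$ then oddness plus the byproduct of (3) give $F(3,x)=-F(3,-x)\ge x$, hence $x-F(3,x)\le0<\tfrac23$.

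For (4) and (5): set $g(x):=\tfrac{1+F(3,x)}{1+3x}$ on $(-\tfrac13,1]$ and $x_0:=\tfrac13+\tfrac{2}{3d}$, and note $1+3x_0=2+\tfrac2d$ and $x_0\in(\tfrac13,\tfrac23]$ for $d\ge2$, so that the claimed minimum value is exactly $g(x_0)$ and $\alpha(d)=g(x_0)$. On $(-\tfrac13,0]$ I would show $g\ge1\ge g(x_0)$: from $F(3,x)\ge x$ (part (3)) and $1+3x>0$ we get $g(x)\ge\tfrac{1+x}{1+3x}\ge1$, while from $F(3,x_0)\le x_0$ (part (3)) and $x_0\ge0$ we get $g(x_0)\le\tfrac{1+x_0}{1+3x_0}\le1$. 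It then remains to show $g(x)\ge g(x_0)$ for $x\in[0,x_0]$: here I would use the supporting-line inequality for the convex function $F(3,\cdot)$ at $x_0$, namely $F(3,x)\ge F(3,x_0)+F'(3,x_0)(x-x_0)$, and cross-multiply; after simplification this reduces $g(x)\ge g(x_0)$ (for $x\le x_0$) to the single inequality $N(x_0)\le0$, where $N(x):=F'(3,x)(1+3x)-3\bigl(1+F(3,x)\bigr)$. Since $N'(x)=F''(3,x)(1+3x)\ge0$ on $[0,1]$ by (2), $N$ is nondecreasing, so $N(x_0)\le N(\tfrac23)$, and $N(\tfrac23)<0$ is the numerical check $F'(3,\tfrac23)<1+F(3,\tfrac23)$ (left side $\approx1.0$, right side $\approx1.6$ from the hypergeometric series). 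This gives (4). For (5): the same bound gives $N(x)\le N(\tfrac23)<0$ on $[0,\tfrac23]$, and since $F$ is strictly convex on $(0,1)$ the supporting-line argument shows $g$ is strictly decreasing on $[0,\tfrac23]$; as $x_0(d)=\tfrac13+\tfrac{2}{3d}$ is strictly decreasing in $d$ and stays in $[0,\tfrac23]$ for $d\ge2$, we conclude $\alpha(d)=g(x_0(d))<g(x_0(d+1))=\alpha(d+1)$.

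The step I expect to be the main obstacle is (4): controlling $g$ over the whole interval rather than just at a test point. The negative part of the interval is cheap (every value of $g$ there already exceeds $g(x_0)$), but on $[0,x_0]$ one genuinely needs monotonicity of $g$, and $g$ is not monotone on all of $(-\tfrac13,1]$ — it dips to its global minimum near $x\approx0.97$ — so the argument must stop short of there. Convexity of $F$ (item (2)) is precisely what makes this tractable: it lets me underestimate $F$ by its tangent at $x_0$ and, at the same time, makes $N$ monotone, collapsing the whole interval to the single inequality $N(\tfrac23)<0$. (Alternatively one could simply invoke \Cref{lem:1/2-approx}, which already certifies that $g$ is decreasing on $(-\tfrac13,0.911]\supseteq(-\tfrac13,\tfrac23]$, after which (4) and (5) follow immediately.)
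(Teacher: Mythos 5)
Your proposal is correct, and all five items are established with valid arguments; the numerical check it ultimately rests on, $F'(3,\tfrac23)<1+F(3,\tfrac23)$ (equivalently $N(\tfrac23)<0$), is the same sign condition the paper verifies (its numerator $\approx -84$ is, up to the positive factor $15\pi$, exactly your $N(\tfrac23)$). The overall skeleton also matches the paper: items (4) and (5) come down to showing $g(x)=\frac{1+F(3,x)}{1+3x}$ attains its minimum at the right endpoint of $(-\tfrac13,\tfrac13+\tfrac{2}{3d}]$, certified by one numerical evaluation at $x=\tfrac23$. Where you diverge is in how that reduction is justified, and in items (1) and (3). The paper differentiates $g$ and asserts, via the series \Cref{eq:hyper_geo}, that the numerator of $g'$ is increasing in $\abs{x}$ over the whole interval $(-\tfrac13,\tfrac23]$, so it suffices to evaluate at $x=\tfrac23$; you instead split the interval at $0$, dispose of $(-\tfrac13,0]$ with the cheap bounds $g(x)\ge\frac{1+x}{1+3x}\ge 1\ge\frac{1+x_0}{1+3x_0}\ge g(x_0)$ (using $F(3,t)\le t$ and oddness), and on $[0,x_0]$ use the tangent-line inequality for the convex $F(3,\cdot)$ together with $N'(x)=F''(3,x)(1+3x)\ge0$ to collapse everything to $N(\tfrac23)<0$. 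This buys you a cleaner treatment of the negative part of the interval (the paper's monotone-in-$\abs{x}$ claim is its least transparent step, since the numerator contains the factor $1+3x$, which is not a function of $\abs{x}$), at the cost of a slightly longer argument; the paper's route gives monotonicity of $g$ on the whole interval in one stroke, which it reuses verbatim for item (5), whereas you need the small extra observation that $N\le N(\tfrac23)<0$ on $[0,\tfrac23]$ makes $g$ decreasing there (which you supply). Your items (1) and (3) are also streamlined relative to the paper: Gauss's evaluation ${}_2F_1(\tfrac12,\tfrac12;\tfrac52;1)=\tfrac{3\pi}{8}$ gives $\frac{F(3,x)}{3x}=\frac{8}{9\pi}\,{}_2F_1(\tfrac12,\tfrac12;\tfrac52;x^2)\le\tfrac13$ exactly, and the byproduct $F(3,t)\le t$ lets you prove (1) without the paper's separate numerical estimate $1-F(3,\tfrac23)\approx 0.4$. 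One minor caution: your parenthetical shortcut via \Cref{lem:1/2-approx} is not self-contained, since that lemma only certifies $g$ decreasing under a hypothesis ($h(v,\gamma)<0$ on the interval) that is itself verified numerically at $\gamma=0.911$; your main argument, which does not rely on it, is the one to keep.
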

\begin{proof}

First we demonstrate \Cref{item:helper_lemma_item1}.  Note that
$$
|8/(3 \pi)\,_2F_1[1/2, 1/2;5/2;x^2]-1| \leq \max\{8/(3 \pi)\,_2F_1[1/2, 1/2;5/2;1]-1, 1\}=1,
$$
since $\,_2 F_1$ is increasing with $|x|$ by \Cref{eq:hyper_geo}.  Hence,
$$
|x-F(3, x)| \leq |x|,
$$
and \Cref{item:helper_lemma_item1} holds when $x < 2/3$.  For $x\geq 2/3$ we can bound as:
$$
x-F(3, x) \leq 1-F(3, 2/3)\approx 0.4,
$$
hence we have \Cref{item:helper_lemma_item1}.

For \Cref{item:helper_lemma_item2} note that by~\cite{A48}:
$$
\frac{d^2 F}{dx^2}=\left(\frac{56 x \,_2 F_1[3/2, 3/2;7/2;x^2]+24 x^3 \,_2 F_1[5/2, 5/2;9/2;x^2]}{35\pi} \right).
$$
For non-negative $x$ the second derivative is clearly non-negative where defined.  In our interval of interest, the second derivative fails only to be defined at $x=1$.  Continuity of the function itself in the interval $[0, 1]$ and convexity on $[0, 1)$ imply convexity on $[0, 1]$.

For \Cref{item:helper_lemma_item3} we calculate:
$$
\frac{d}{dx}\left( \frac{F(3, x)}{x}\right) =\frac{8 x \,_2 F_1 [3/2, 3/2;7/2;x^2]}{45 \pi}.
$$
This is non-positive in the interval $x\in [-1/3, 0)$ by \Cref{eq:hyper_geo} hence it is maximized at $x=-1/3$.
$$
\frac{F(3, x)}{3x}\bigg\rvert_{x=-1/3} \leq 1/3
$$

To demonstrate \Cref{item:helper_lemma_item4} we calculate:
\begin{equation}
    \frac{d}{dx} \left( \frac{1+F(3, x)}{1+3 x} \right) =\frac{-45 \pi +40 \,_2 F_1[1/2, 1/2;5/2;x^2]+8 x^2(1+3 x) \,_2 F_1[3/2, 3/2;7/2;x^2]}{15\pi (1+3 x^2)}.
\end{equation}
Note that the sign of the derivative only depends on the sign of the numerator.  In the interval $(-1/3, 1/3+2/(3d)] \subseteq (-1/3, 2/3]$ (recall $d\geq 2$) we can upper bound the numerator by taking $x=2/3$ since the numerator is increasing with increasing $|x|$.  The upper bound is:
\begin{equation}
    -45 \pi +40 \,_2F_1[1/2, 1/2;5/2;4/9]+8\cdot{}\frac{4}{9}(1+2) \,_2 F_1[3/2, 3/2;7/2;4/9] \approx -84.
\end{equation}
Hence the derivative is always negative in the interval of interest and the minimum must be the right-most point.  

To prove \Cref{item:helper_lemma_item5} note that the proof of \Cref{item:helper_lemma_item4} demonstrates that $(1+F(3, x))/(1+3 x)$ is non-increasing for $x$ increasing.  Hence it must be non-decreasing for $x$ decreasing.  Taking $x=1/3+2/(3d)$ we can see that $x$ is decreasing as $d$ increases hence $\alpha(d)$ must be non-decreasing for increasing $d$.
\end{proof}

\begin{lemma}\label{lem:reg_main_tech}
Let $d\geq 2$ and define:
\begin{equation}\label{eq:def_of_g}
g(v_1, ..., v_d)=\frac{\sum_{j=1}^d 1+F(3, v_j)}{\sum_{j=1}^d 1+3 v_j}.
\end{equation}
Let:
\begin{align}
\nonumber \alpha(d)=\min g(v_1, ..., v_d)\\
\label{eq:each_v} s.t.\quad -1/3 \leq v_j \leq 1 \,\,\,\forall j,\\
\label{eq:star_bound}\sum_{j=1}^d (1+3 v_j) \leq 2(d+1).
\end{align}
Then,
\begin{equation}
    \alpha(d)=\frac{1+F(3, 1/3+2/(3d))}{2+2/d}
\end{equation}

\end{lemma}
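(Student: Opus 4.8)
The plan is to locate a minimizer of $g$ over the feasible region, show that at any minimizer every coordinate $v_j$ is nonnegative, and then collapse the $d$-variable ratio to the one-variable bound already isolated in \Cref{prop:helper}.

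Write $N(v):=\sum_{j=1}^d(1+F(3,v_j))$ and $D(v):=\sum_{j=1}^d(1+3v_j)$, so $g=N/D$. The region $K$ cut out by \Cref{eq:each_v} and \Cref{eq:star_bound} is compact and nonempty (it contains $0$, where $g=1$), $g$ is continuous on $\{v\in K: D(v)>0\}$, and $g(v)\to+\infty$ as $v$ approaches the unique point $(-1/3,\dots,-1/3)$ where $D$ vanishes; hence $g$ attains its minimum at some $v^\star$ with $D(v^\star)>0$. Two elementary observations will be used throughout. First, \Cref{item:helper_lemma_item1} gives $1+F(3,v_j)>\tfrac13(1+3v_j)$ for each $j$, so $N(v)>\tfrac13 D(v)$ and hence $g(v)>\tfrac13$ at every feasible point with $D(v)>0$. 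Second, if $v'$ is obtained from $v$ (with $D(v)>0$) by strictly increasing both $N$ and $D$ while keeping the incremental ratio $\frac{N(v')-N(v)}{D(v')-D(v)}$ strictly below $g(v)$, then $g(v')<g(v)$, by the mediant inequality.

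To prove $v_j^\star\ge 0$ I would argue by contradiction from minimality. Suppose $v_1^\star<0$. If the slack in \Cref{eq:star_bound} is large enough that replacing $v_1^\star$ by $0$ (other coordinates fixed) stays in $K$, do so: both $N$ and $D$ strictly increase, and the incremental ratio equals $\frac{F(3,0)-F(3,v_1^\star)}{3(0-v_1^\star)}=\frac{F(3,v_1^\star)}{3v_1^\star}\le\tfrac13<g(v^\star)$ by \Cref{item:helper_lemma_item3} together with $g(v^\star)>\tfrac13$, so $g$ strictly decreases, contradicting minimality. Otherwise raising $v_1^\star$ to $0$ would violate \Cref{eq:star_bound}, which forces $\sum_{j\ge 2}v_j^\star>\frac{d+2}{3}$; by averaging, some leaf value, say $v_2^\star$, satisfies $v_2^\star>\frac{d+2}{3(d-1)}>\tfrac13\ge|v_1^\star|$ (in particular $d\ge 3$ here, as $d=2$ would force $v_2^\star>\tfrac43>1$). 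Now perturb $v_1^\star\mapsto v_1^\star+\eps$, $v_2^\star\mapsto v_2^\star-\eps$ with small $\eps>0$: this keeps $D$ fixed and stays in $K$, while the $\eps$-derivative of $N$ at $\eps=0$ equals $F'(3,v_1^\star)-F'(3,v_2^\star)$, which is negative because $F'(3,\cdot)$ is even and, by the strict convexity of $F(3,\cdot)$ on $[0,1]$ (visible in the second-derivative formula in the proof of \Cref{item:helper_lemma_item2}), strictly increasing on $[0,1]$, with $|v_1^\star|\le\tfrac13<v_2^\star\le1$. Hence $g$ strictly decreases for small $\eps$, again a contradiction. With $v^\star\in[0,1]^d$ in hand, set $S:=\sum_j v_j^\star$ and $v:=S/d\in[0,\tfrac13+\tfrac{2}{3d}]$, the upper bound being exactly \Cref{eq:star_bound}. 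Convexity of $F(3,\cdot)$ on $[0,1]$ (\Cref{item:helper_lemma_item2}) and Jensen give $N(v^\star)\ge d\,(1+F(3,v))$ while $D(v^\star)=d\,(1+3v)$, so $g(v^\star)\ge\frac{1+F(3,v)}{1+3v}\ge\frac{1+F(3,\,1/3+2/(3d))}{2+2/d}$ by \Cref{item:helper_lemma_item4} (whose minimizing argument $1/3+2/(3d)$ lies in $[0,\tfrac13+\tfrac{2}{3d}]$). Conversely, taking all $v_j=1/3+2/(3d)$ is feasible and makes \Cref{eq:star_bound} tight, achieving $g=\frac{1+F(3,\,1/3+2/(3d))}{2+2/d}$, so the minimum equals this value, as claimed.

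The main obstacle I anticipate is the nonnegativity step: since $F(3,\cdot)$ is concave on $[-1/3,0]$ and only convex on $[0,1]$, neither a single use of Jensen nor a Karamata/majorization argument survives the sign change at $0$, so the argument must split on whether the star-bound slack permits pushing a negative coordinate all the way to $0$ — using the secant-from-the-origin estimate \Cref{item:helper_lemma_item3} in the first branch and the evenness and strict monotonicity of $F'(3,\cdot)$ on $[0,1]$ in the second. Everything after $v^\star$ is forced into $[0,1]^d$ is the routine Jensen-plus-one-variable-calculus reduction already packaged in \Cref{prop:helper}.
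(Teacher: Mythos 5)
Your proposal is correct, and its overall skeleton matches the paper's: show an optimizer has all coordinates nonnegative, symmetrize via convexity of $F(3,\cdot)$ on $[0,1]$ (Jensen), and finish with the one-variable evaluation in \Cref{item:helper_lemma_item4} of \Cref{prop:helper}. Where you genuinely diverge is the nonnegativity step. The paper takes an optimal solution with the fewest negative entries and argues that otherwise raising a negative coordinate to $0$ would yield an optimal solution with fewer negatives, from which it extracts $\alpha<1/3$ via \Cref{item:helper_lemma_item3} and then a contradiction with \Cref{item:helper_lemma_item1}; this implicitly assumes the modified point (with $v_1=0$) is still feasible, which can fail for $d\ge 3$ when the remaining coordinates nearly saturate \Cref{eq:star_bound}. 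Your two-branch local-improvement argument addresses exactly that case: branch A (push the negative coordinate to $0$) is essentially the mediant form of the paper's computation, playing \Cref{item:helper_lemma_item3} against the pointwise bound $g>1/3$ that you get from \Cref{item:helper_lemma_item1}, while branch B transfers mass from a forced large coordinate ($v_2^\star>1/3$) to the negative one, keeping the denominator and \Cref{eq:star_bound} fixed and strictly decreasing the numerator because $F'(3,\cdot)$ is even and strictly increasing on $[0,1]$ (strict positivity of the second derivative appears in the proof of \Cref{item:helper_lemma_item2}). The cost is a few extra ingredients the paper does not invoke (evenness and strict monotonicity of $F'$, an explicit attainment argument, and the explicit achievability check at $v_j\equiv 1/3+2/(3d)$, which the paper obtains instead by observing that the averaged point remains feasible); the benefit is that your treatment is airtight on the feasibility point that the paper's "fewest negative entries" phrasing glosses over.
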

\begin{proof}
First observe that the optimization problem is well-defined.  The only setting of variables which leads to an undefined objective is $v_j=-1/3$ for all $j$.  Near this point it is clear the objective diverges to $+\infty$ since $1+F(3, -1/3)>0$.

The strategy of the proof will be first to show that there exists an optimal solution to $g$ in which all $v_i\geq 0$ and further than in this optimal solution we can assume all $v_i$ have the same value.  After this the value of $\alpha(d)$ will fall from the technical facts (\Cref{prop:helper}) and the constraint \Cref{eq:star_bound}.

Suppose by contradiction that no optimal solution $(v_1^*, ..., v_d^*)$ has $v_i^* \geq 0$ for all $i$.  Of all the optimal solutions let $(v_1^*, ..., v_d^*)$ be a solution with the fewest negative entries.  Assume WLOG that $v_1^* < 0$.  Define:
\begin{align}
C_1=\sum_{j=2}^d 1+F(3, v_j^*)\\
C_2=\sum_{j=2}^d 1+3 v_j^*.
\end{align}

If $C_2=0$, then $v_2^*=v_3^*=...=v_d^*=-1/3$ and the optimizing $\alpha$ satisfies:
\begin{equation}
\alpha=\frac{(d-1) (1+F(3, -1/3))+1+F(3, v_1^*)}{1+F(3, v_1^*)} > 1, 
\end{equation}
since $(1+F(3, -1/3))\approx 0.7$.  However, it is simple to find a better assignment by setting $v_j^*=0$ for all $j$.  Hence we can assume $C_2 >0$.

Denote $v_1^*$ simply as $v$.  By hypothesis,
\begin{equation}
\alpha=\frac{C_1+1+F(3, v)}{C_2+1+3 v} < \frac{C_1+1}{C_2+1}, 
\end{equation}
since otherwise there exists an optimal solution with fewer negative entries.  Since $C_2>0$ and $v<0$, we can rearrange this equation as:
$$
\frac{C_1+1}{C_2+1} < \frac{F(3, v)}{3v}.
$$
So, applying item 3 from \Cref{prop:helper} we derive $\alpha<1/3$.  Using the definition of $g$, this implies that
\begin{equation}
\sum_{j=1}^d 1+F(3, v_j^*) < 1/3 \sum_{j=1}^d 1+3 v_j^*
\end{equation}

Now we can simplify this expression to:
\begin{align}
\frac{2}{3}d < \sum_{j=1}^d v_j^*-F(3, v_j^*)
\end{align}
This implies in particular that there is some $v_j^*$ with $v_j^*-F(3, v_j^*) >2/3$, but this contradicts item $1$ from the lemma.  So we see that there must always be an optimal assignment with $v_i^* \geq 0$ for all $j$. 

We can then infer all $v_i^*$ can be assumed to be the same using convexity of the numerator and linearity of the denominator.  Define $v=\frac{1}{d}\sum_i v_i^*$.  Then,
\begin{align*}
    g(v_1^*, ..., v_d^*)=\frac{d+\sum_j F(3, v_j^*)}{d+3\sum_j v_j^*}=\frac{d+d \sum_j\frac{1}{d} F(3, v_j^*)}{d+3d\sum_j \frac{v_j^*}{d}}\geq \frac{d+d \sum_j F(3, v)}{d+3dv}=\frac{1+F(3, v)}{1+3v},
\end{align*}
where the inequality follows from convexity of $F(3, x)$ (\Cref{item:helper_lemma_item2} of \Cref{prop:helper}).  Note that if $(v_1^*, ..., v_d^*)$ satisfies constraints \Cref{eq:each_v} and \Cref{eq:star_bound} then so does $(v, v, ..., v)$.  Hence, we can assume that $\alpha(d)=g(v, ..., v)$ for some $v$.  For $(v_1, ..., v_d)=(v, ..., v)$ constraint \Cref{eq:each_v} simply translates to $v\in [-1/3, 1]$ while constraint \Cref{eq:star_bound} becomes $v \leq \frac{1}{3}+\frac{2}{3d}$.  So, 
\begin{equation}
    \alpha(d)=\min_{v\in [-1/3, 1/3+2/(3d)]} \,\, \frac{1+F(3, v)}{1+3v}=\frac{1+F(3, 1/3+2/(3d))}{2+2/d},
\end{equation}
where the second equality follows from \Cref{item:helper_lemma_item4} of \Cref{prop:helper}.
\end{proof}

\begin{proof}[Proof of \Cref{thm:approx_reg}]
We will treat the $d\geq 2$ and the $d=1$ cases seperately.  Suppose $d\geq 2$ and let us write the Hamiltonian as $H=1/2 \cdot{}\sum_{i\in V} h_i$ where each $h_i$ is an unweighted star Hamiltonian with $q\geq d$ leaves rooted at $i$.  Let us denote $\rho$ as the expected outcome of \Cref{alg:reg_graph} and for each $i\in V$ let us denote $\{v_1^i, ..., v_q^i\}$ as the values obtained via \Cref{def:QMC-values} for the edges adjacent to $i$. We will demonstrate that $\mathbb{E}[\tr[\rho h_i]] \geq \alpha(d) \sum_j (1+3 v_j^i)/4$, which then implies
\begin{align*}
    \mathbb{E}[\tr[H \rho]]=\frac{1}{2} \sum_{i\in V} \mathbb{E}[\tr[h_i \rho]] \geq \frac{1}{2}\alpha(d) \sum_i \sum_{j\in N(i)}\frac{1+3 v_j^i}{4}=\alpha(d) \sum_{ij\in E} \frac{1+3 v_{ij}}{4} \geq \alpha(d) \lambda_{max}(H).
\end{align*}

By the star bound of~\cite{PT21}, since the SDP relaxation is $\las_2$, for each $i$ $\sum_{j} (1+3 v_j^i) \leq 2(q+1)$.  Additionally we know that $-1/3\leq v_j^i \leq 1$ by \Cref{lem:QMC-values-ranges}.  Note, by \Cref{lem:Briet-et-al},
\begin{equation}\label{eq:reg_last}
    \frac{\mathbb{E}[\tr[\rho h_i]]}{\sum_j (1+3 v_j^i)/4}=\frac{\sum_j 1+F(3, v_j^i)}{\sum_j 1+3 v_j^i}.
\end{equation}
Since the R.H.S. of \Cref{eq:reg_last} is of the same form as \Cref{eq:def_of_g}, and since we have already verified the constraints we may directly apply \Cref{lem:reg_main_tech} and \Cref{prop:helper} to conclude
\begin{equation}
    \frac{\mathbb{E}[\tr[\rho h_i]]}{\sum_j (1+3 v_j^i)/4} \geq \alpha(q)\geq \alpha(d),
\end{equation}
and hence the theorem.  

For the case $d=1$ note that the graph corresponds to a matching.  In this case the SDP will (optimally) choose $v_{ij}=1$ along the edges in the matching.  This means that the Gram vectors from the SDP are anti-aligned across edges in the matching, i.e. $\braket{W_i| W_j}=-1$ if vertices $i$ and $j$ are connected by an edge.  \Cref{alg:reg_graph} will then pick Bloch vectors which are anti aligned and hence achieve objective $1/2$ on each edge.  Hence if there are $m$ edges total the optimal quantum state has energy $m$ and the product state produced by the rounding algorithm has objective $m/2$.  Note that $\alpha(1)=1/2$ so we are achieving exactly the stated approximation factor when $d=1$.  
\end{proof}

\section{Positive Hamiltonians}
\label{sec:positive-hamiltonians}

So far we have demonstrated an approximation algorithm which gets the optimal approximation factor for Quantum Max Cut Hamiltonians.  The next step is generalizing this to the case of $2$-local terms corresponding to rank-$1$ maximally entangled pure states and then to totally generic positive (strictly quadratic) $2$-local terms.  To precisely describe the distinctions between these cases we need to discuss ``cost'' and ``moment'' matrices.  For a particular $2$-local term $H_{ij}$ along some edge $ij$, the cost matrix $C_{ij}$ is defined as the matrix with coefficients corresponding to the Pauli basis elements.  Similarly, the $2$-local moment matrix $M_{ij}$ is set so that it's elements capture the relevant values from $\las_2$:

\begin{mdframed}
\begin{definition}\label{def:cost_matrix_generic}
Given $H=\sum_{ij\in E} w_{ij} H_{ij}$, let $c_{ij} \in \R^{3 \times 3}$ be defined as in \Cref{def:relaxed-objective} for each $ij\in E$.  Define $C_{ij}\in \mathbb{R}^{3 \times 3}$ with rows indexed by $\{X_i, Y_i, Z_i\}$ and columns indexed by $\{X_j, Y_j, Z_j\}$ such that:
\begin{equation}
    C_{ij}(\sigma_i, \tau_j)=c_{ij}(\sigma_i\tau_j) \text{  for $\sigma, \tau \in \{X, Y, Z\}$.}
\end{equation}

Let $\{\ket{\tau}\}_{\tau \in \mathcal{P}_n(2)}$ be some feasible solution to \Cref{def:lasserre}.  For each pair of vertices $i$, $j$ $\in [n]$ with $i\neq j$ define the moment matrix $M_{ij}\in  \mathbb{R}^{3 \times 3}$ with rows indexed by $\{X_i, Y_i, Z_i\}$ and columns indexed by $\{X_j, Y_j, Z_j\}$ such that:
\begin{equation}
    M_{ij}(\sigma_i, \tau_j)= \braket{\sigma_i | \tau_j} \text{  for $\sigma, \tau \in \{X, Y, Z\}$.}
\end{equation}

Let the value for each edge $ij\in E$ be defined as 
\begin{equation}
    v_{ij}=\frac{4 \tr[C_{ij}^T M_{ij}]}{3}
\end{equation}

\end{definition}
\end{mdframed}

In the current context, increasing the generality of the $2$-local Hamiltonian considered corresponds to increasing the generality of the {\it cost matrix} considered.  
\begin{mdframed}
\begin{definition}
Define
\begin{equation}\label{eq:def_S}
    \Sh:=conv\,\,\{(-1, -1, -1), (-1, 1, 1), (1, -1, 1), (1, 1, -1)\}.
\end{equation}
Let $H=\sum_{ij\in E} w_{ij} H_{ij}$ be some $2$-local Hamiltonian with $w_{ij}\geq 0$ for all $ij\in E$.  Let $C_{ij}$ be defined as in \Cref{def:cost_matrix_generic}.  
\begin{enumerate}
    \item  We say that $H$ belongs to the ``rank-$1$'' case if $C_{ij}=-U_{ij}/4$ for $U_{ij}\in SO(3)$ for all $ij$.
    \item  We say that $H$ belongs to the ``positive'' case if $C_{ij}=U_{ij} \Sigma_{ij} V_{ij}/4$ for $U, V\in SO(3)$, and for $\Sigma_{ij}$ diagonal with $diag(\Sigma_{ij})\in \Sh$
\end{enumerate}
\end{definition}
\end{mdframed}

The rank-$1$ case corresponds to $H_{ij}$ proportional to a rank $1$ projector onto a maximally entangled state (see \cite{G16} for some intuition).  The cost matrix can be seen to correspond to a locally rotated Bell state in this case.  The positive case captures all $H_{ij}\semigeq 0$ with only strictly quadratic terms.  Such a local term is the sum of rank-$1$ projectors from the former case, and each $C_{ij}$ in the positive case can be written as a convex combination of cost matrices from the rank-$1$ case.  The decomposition $C_{ij}=U \Sigma V$ can be thought of as a ``signed'' singular value decomposition (SVD) where we first calculate the rotation matrices on either side, then we absorb signs into the diagonal matrix to get $U, V\in SO(3)$ rather than $O(3)$.  

The algorithm is similar to \Cref{alg:main}, with the two main differences being how the large edges are treated and the rounding algorithm used on qubits not in $L$.  Since the local terms $H_{ij}$ may different for each edge $ij$, we must tailor the product state on edge $ij$ to the particular 2-Local term $H_{ij}$.  We can make use of a technique from~\cite{G12}.  For each edge $ij$ we will calculate the largest eigenvector for the local term $H_{ij}$ and use the Schmidt decomposition to construct a good product state.  
\begin{mdframed}
\begin{algorithm}[Product-state approximation algorithm for strictly quadratic Hamiltonians]
\label{alg:generic}
\mbox{}\\[1ex]
\emph{Input}: A strictly quadratic $2$-local Hamiltonian $H=\sum_{ij\in E}w_{ij} H_{ij}$ (\Cref{def:strictly-quadratic-2-LH}), and a threshold $\gamma \in (-1/3,1]$.
\begin{enumerate}
    \item  Solve the instance of level 2 of the quantum Lasserre hierarchy, $\las_2$ (\Cref{def:vector-solution}) for $H$.
    \item  For each edge $ij$ calculate its value $v_{ij}$ from the $\las_2$ solution, according to \Cref{def:cost_matrix_generic} (or \Cref{def:relaxed-objective}).
    \item  Let $L:=\{ij\in E: v_{ij} > \gamma\}$ be the set of ``large'' edges, which are guaranteed to be a matching (\Cref{rem:matching}).  Let $B$ be the set of vertices not adjacent to a large edge: $B=\{i\in [n]: ij \notin L \text{ for any }j\}$.
    \item \label{item:execute-product-state-algorithm} 
    Execute the rounding algorithm of \cite{PT20} for qubits in $B$.  Denote the output Bloch vectors as: $\{(\theta_i^X, \theta_i^Y, \theta_i^Z)\}_{i \in B}$.
    \item  For each $ij\in L$: let $\ket{\psi_{ij}}=\text{argmax} \,\,\{\braket{\phi|H_{ij}|\phi}:\ket{\phi}\in \mathbb{C}^4,\,\braket{\phi|\phi}=1\}$.  Let $\ket{\psi_{ij}}=\sum_k \alpha_k \ket{w_k}\ket{v_k}$ be a Schmidt decomposition.  Define $\eta_{ij}=\sum_k \alpha_k^2\ket{w_k} \bra{w_k}\otimes \ket{v_k} \bra{v_k}$.
    \item  Output the state:
    \begin{equation}
        \rho = \prod_{ij \in L}\eta_{ij} \prod_{k \in B} \left( \frac{\mathbb{I}+\theta_k^X X_k + \theta_k^Y Y_k + \theta_k^Z Z_k}{2}\right)
    \end{equation}
    \end{enumerate}
\end{algorithm}
\end{mdframed}

The required analysis is semantically similar to the QMC case, we simply have to prove the generalizations of \Cref{fact:1} and \Cref{fact:2}.  For \Cref{fact:2} we are able to find rigorous generalizations for both the rank-$1$ and positive cases.  For \Cref{fact:1} we can only give computational evidence.  The issue is that the performance of the rounding algorithm on a single edge is no longer a (relatively) simple function of a single variable.  Instead, the analysis of \cite{PT20} presents it as an infinite series in which each term is a function of several parameters (\Cref{lem:hermite}).  Since we cannot exactly evaluate the series we cannot exactly express the approximation factor.  One option is to take the approach of \cite{PT20} and try to truncate the series to a sufficiently high order and get an approximation.  Unfortunately, since here we are analyzing a tight algorithm, the analysis has very little slack and a truncation will not serve our purposes unless it is of extremely high order (necessitating an intractable analysis).  As such we numerically optimize a very high order expansion to give evidence of the generalization of \Cref{fact:1} in \Cref{sec:numerics}.

\setcounter{factc}{0}
\begin{fact}
(generalized)  If $H=\sum_{ij\in E} w_{ij} H_{ij}$ for $w_{ij}\geq 0$ and $H_{ij}\geq 0$, then $v_{ij}\leq \gamma$ implies product state rounding has approximation factor $\geq 1/2$ on edge $ij$.
\end{fact}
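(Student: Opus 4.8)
The plan is to follow the template of the QMC proof of \Cref{fact:1} (\Cref{lem:1/2-approx}): reduce the per-edge question to minimizing the approximation ratio over a bounded region of parameters, and then verify that this minimum is at least $\tfrac12$. Here ``product-state rounding'' means the rounding of~\cite{PT20} applied on an edge $ij$ with both endpoints in $B$ (\Cref{alg:generic}), and the two complications relative to QMC are that the local term now carries a nontrivial cost matrix $C_{ij}$, and that the per-edge expected energy of~\cite{PT20}'s rounding is no longer the univariate $\tfrac{1+F(3,v_{ij})}{4}$ but a multivariate series (\Cref{lem:hermite}); the price of the latter is that the final minimization is only checkable numerically (\Cref{sec:numerics}).

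The first reduction normalizes the local term. Scaling $H_{ij}$ changes neither $v_{ij}$ nor the ratio $\tr[H_{ij}\rho]/\mu_{ij}$, so normalize $\tr[H_{ij}]=1$ (hence $c_{ij}(\Id)=\tfrac14$). Writing the signed SVD $C_{ij}=U_{ij}\Sigma_{ij}V_{ij}/4$ with $U_{ij},V_{ij}\in SO(3)$ and $\mathrm{diag}(\Sigma_{ij})\in\Sh$, I would use \Cref{lem:rotation_L2} — single-qubit unitaries on $i$ and $j$ act as $SO(3)$ rotations of the Bloch spheres, keep the $\las_2$ solution feasible, and leave~\cite{PT20}'s rounding equivariant — to rotate so that $C_{ij}$ is diagonal with $\mathrm{diag}(C_{ij})\in\Sh/4$, at the cost of replacing $M_{ij}$ by $U_{ij}^{T}M_{ij}V_{ij}^{T}$, still a feasible moment matrix. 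For fixed rotations and fixed $M_{ij}$, both the expected energy and $\mu_{ij}=\tfrac14+\tr[C_{ij}^{T}M_{ij}]$ are affine in $\mathrm{diag}(\Sigma_{ij})$, so the ratio is linear-fractional on the polytope $\Sh$; since the $2$-qubit reduced pseudo-density of a $\las_2$ solution is a genuine density (every $2$-qubit observable is a degree-$\le 2$ Pauli polynomial), $\mu_{ij}=\tr[H_{ij}\tilde\rho_{ij}]\ge 0$ throughout, so the minimum of the ratio over $\Sh$ is attained at a vertex of $\Sh$ (as in \Cref{lem:reg_main_tech}, treating $\mu_{ij}=0$ as the trivial case). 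Each vertex of $\Sh$ corresponds to a rank-$1$ term, and the four vertices correspond to terms equivalent under the rotations of \Cref{lem:rotation_L2}, so one further rotation lets us assume $C_{ij}=-\Id/4$, exactly the QMC cost matrix.

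It then remains to bound~\cite{PT20}'s rounding on a single edge against $\tfrac12\mu_{ij}=\tfrac{1-\tr M_{ij}}{8}$, where $M_{ij}$ is an arbitrary feasible $\las_2$ moment matrix with $v_{ij}=-\tfrac13\tr M_{ij}\le\gamma$. Because~\cite{PT20}'s rounding uses independent Gaussians for the $X$, $Y$, $Z$ directions, the expected energy $\tfrac14\bigl(1-\sum_{k}\mathbb{E}[\theta_i^{(k)}\theta_j^{(k)}]\bigr)$ depends only on the three diagonal entries of $M_{ij}$ (off-diagonal entries enter only cross-correlations that are not summed), leaving a three-parameter minimization in $(M_{ij}(X,X),M_{ij}(Y,Y),M_{ij}(Z,Z))$, constrained by $\las_2$ and in particular to the region forced by $v_{ij}\le\gamma$. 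On this region I would substitute the Hermite-series expression of~\cite{PT20} (\Cref{lem:hermite}), truncate to high order, and numerically minimize $\tr[H_{ij}\rho]/\mu_{ij}$, verifying the minimum is $\ge\tfrac12$.

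The main obstacle is this last, numerical, step: the algorithm is tight — its worst-case ratio is exactly $\tfrac12$ — so there is essentially no slack, and a rigorous bound on the truncation error of the Hermite series would need an order far too high to handle analytically; hence one obtains strong numerical evidence rather than a proof, which is why the statement is phrased as a generalization in this form. A secondary delicate point is the vertex reduction, where one must ensure $\mu_{ij}>0$ along the relevant segments of $\Sh$ or else dispose of the degenerate $\mu_{ij}=0$ configurations separately, exactly as in the regular-graph analysis.
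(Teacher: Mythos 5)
There are two genuine gaps, both in the parameter reductions that set up your final numerical minimization (the numerical character of the last step itself is fine — the paper also only gives computational evidence, via a mesh over three parameters and a 70th-order Hermite expansion with a remainder bound).

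First, your claim that after rotating $C_{ij}$ to diagonal form the expected energy of the rounding of~\cite{PT20} ``depends only on the three diagonal entries of $M_{ij}$'' is false, and it reverses the order of the correct reduction. With a diagonal cost the per-edge expectation is $\sum_k p_k\,\mathbb{E}\bigl[z_k z_k'/(\|z\|\,\|z'\|)\bigr]$ where the cross-covariance of $(z,z')$ is the (rotated) moment matrix; the norms couple all components, so each such term depends on the \emph{full} cross-covariance, not just its diagonal (e.g.\ compare $z'=Rz$ for a rotation $R$, which gives $-\tr(R)/3$ for the QMC cost, with the independent-coordinates moment matrix having the same diagonal, which gives a hypergeometric value $\neq -\tr(R)/3$ in general). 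The valid reduction, which is exactly \Cref{lem:svd} of~\cite{PT20} and what \Cref{sec:numerics} uses, goes the other way: diagonalize the \emph{moment} matrix by a signed SVD (absorbing the rotations into the Gram vectors, legitimately by \Cref{lem:rotation_L2}); once the cross-covariance is diagonal, the off-diagonal entries of the rotated cost contribute zero by a coordinate sign-flip symmetry, and one lands on the six-parameter form $(a,b,c,p,q,r)$ with both triples in $\Sh$ (\Cref{eq:reduced_approx_factor}). That symmetry argument is not available in your order of operations.

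Second, your vertex argument ``the minimum over $\Sh$ is attained at a vertex, so one further rotation lets us assume $C_{ij}=-\Id/4$'' ignores that the hypothesis $v_{ij}\le\gamma$ itself depends on the cost matrix: the constrained minimum is attained at an extreme point of $\Sh\cap\{(p,q,r): ap+bq+cr\le 3\gamma\}$, and the cutting hyperplane introduces extreme points that are \emph{not} vertices of $\Sh$ — this is precisely the paper's \Cref{lem:extreme_points}, which reduces six parameters to three but cannot reduce to the QMC cost. A vertex of $\Sh$ reached by dropping the constraint may have value exceeding $\gamma$, where the QMC ratio dips to $\approx 0.498<1/2$, so the chain of inequalities breaks. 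Indeed, if your reduction to $C_{ij}=-\Id/4$ with $v_{ij}\le\gamma$ were sound, the generalized fact would follow analytically from \Cref{lem:1/2-approx} and the paper would not need \Cref{sec:numerics} at all; the paper's observed near-tight minimum ($\alpha_{ij}^*\approx 0.500296$ at $(a,b,c)=(-\gamma,-\gamma,-\gamma)$, i.e.\ on the boundary of the cut) shows the constrained geometry is where the action is.
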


\begin{fact}
(generalized)  If $H=\sum_{ij\in E} w_{ij} H_{ij}$ for $w_{ij}\geq 0$ and $H_{ij}\geq 0$, then $v_{ij} > \gamma$ implies $v_{ik}< 1/3$.
\end{fact}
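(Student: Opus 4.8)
The plan is to reduce both the rank-$1$ and the positive cases to the Quantum Max Cut bound of \Cref{lem:large-edge-bound} by a single‑qubit change of Pauli basis. Fix a large edge $ij$ and an adjacent edge $ik$; exactly as in the proof of \Cref{fact:2}, I would work with the triangle on qubits $i,j,k$, noting that the edge $jk$ need not lie in $G$ since the vector $\ket{S_{jk}}$ is still supplied by the $\las_2$ solution. Assume first that $H$ is in the rank-$1$ case, so $4C_{ij}=-U_{ij}$ and $4C_{ik}=-U_{ik}$ with $U_{ij},U_{ik}\in SO(3)$. The observation driving the reduction is that the only two features of $H$ entering \Cref{sec:nonlinear-triangle-inequality} are: (i) the positive semidefiniteness and shape of the Gram matrix of $\ket{\Id},\ket{S_{ij}},\ket{S_{ik}},\ket{S_{jk}}$, which is forced by the operator identity $S_{ij}S_{ik}+S_{ik}S_{ij}=S_{ij}+S_{ik}+S_{jk}-\Id$ (read off from the computation of the entries of $M$); and (ii) the correspondence $\langle S_{ij}\rangle=(1-3v_{ij})/2$ between SWAP expectations and edge values. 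Both survive a change of single‑qubit Pauli bases on qubits $j$ and $k$ — precisely the two qubits \emph{not} shared between the edges.

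Concretely, I would pick single‑qubit unitaries $W_j$ on qubit $j$ and $W_k$ on qubit $k$ realizing the Bloch rotations attached to $U_{ij}$ and $U_{ik}$, so that $W_j^\dagger S_{ij}W_j=\tfrac12\big(\Id+\sum_{\sigma,\tau}(U_{ij})_{\sigma\tau}\,\sigma_i\tau_j\big)$ and similarly for $W_k,S_{ik}$; then set $W:=W_jW_k$ (tensored with identity elsewhere) and invoke \Cref{lem:rotation_L2} to pass to the feasible $\las_2$ solution $\{\ket{\tau'}\}$ with $\ket{\tau'}=\ket{W^\dagger\tau W}$. Building the $4\times4$ real Gram matrix $M$ of \Cref{sec:nonlinear-triangle-inequality} from this rotated solution, I would check (using that $W_j,W_k$ act on disjoint qubits together with \Cref{lem:lasserre-polynomial-product}) that $M$ has exactly the matrix form appearing in \Cref{lem:large-edge-bound}, now with $p=\langle W^\dagger S_{ij}W\rangle$, $q=\langle W^\dagger S_{ik}W\rangle$, and $r=\langle W^\dagger S_{jk}W\rangle$ unconstrained. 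By \Cref{def:cost_matrix_generic}, $p=\tfrac12\big(1+\tr[U_{ij}^{T}M_{ij}]\big)=\tfrac12(1-3v_{ij})$ and likewise $q=\tfrac12(1-3v_{ik})$. Applying the contrapositive of \Cref{lem:large-edge-bound} ($p<-\sqrt{3}/2\Rightarrow q>0$), and noting $p<-\sqrt{3}/2\iff v_{ij}>(1+\sqrt{3})/3$, $q>0\iff v_{ik}<1/3$, and $\gamma=0.911>(1+\sqrt{3})/3$, the rank-$1$ case follows.

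For the positive case I would bootstrap by convexity. Since $\Sh$ is the convex hull of the four sign vectors of negative coordinate product, $4C_{ij}=U_{ij}\Sigma_{ij}V_{ij}=\sum_t\lambda_t\,U_{ij}D^{(t)}V_{ij}$ is a convex combination in which each $U_{ij}D^{(t)}V_{ij}$ is orthogonal of determinant $-1$, hence equals $-O^{(t)}$ with $O^{(t)}\in SO(3)$; thus $C_{ij}^{(t)}:=-O^{(t)}/4$ is a rank-$1$ cost matrix and $v_{ij}=\tfrac43\tr[C_{ij}^{T}M_{ij}]=\sum_t\lambda_t v_{ij}^{(t)}$, where $v_{ij}^{(t)}\in[-1/3,1]$ (via positivity of the Gram matrix, as in \Cref{lem:QMC-values-ranges}) is the value of edge $ij$ under cost $C_{ij}^{(t)}$. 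Decomposing $C_{ik}$ likewise, $v_{ik}=\sum_u\mu_u v_{ik}^{(u)}$. If $v_{ij}>\gamma>(1+\sqrt{3})/3$, then since every $v_{ij}^{(t)}\le1$ and their weighted average exceeds $(1+\sqrt{3})/3$, some $v_{ij}^{(t^\star)}>(1+\sqrt{3})/3$; applying the rank-$1$ argument above to the pair of rank-$1$ costs $\big(C_{ij}^{(t^\star)},C_{ik}^{(u)}\big)$ for each $u$ yields $v_{ik}^{(u)}<1/3$, whence $v_{ik}=\sum_u\mu_u v_{ik}^{(u)}<1/3$.

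\textbf{Main obstacle.} The delicate step is (i): verifying that after rotating by $W=W_jW_k$ the SWAP Gram matrix still matches \Cref{lem:large-edge-bound} verbatim. This hinges on conjugating the identity $S_{ij}S_{ik}+S_{ik}S_{ij}=S_{ij}+S_{ik}+S_{jk}-\Id$ by $W$: because $W_j$ touches only qubit $j$ and $W_k$ only qubit $k$, one has $W^\dagger S_{ij}W=W_j^\dagger S_{ij}W_j$, $W^\dagger S_{ik}W=W_k^\dagger S_{ik}W_k$, and $W^\dagger S_{jk}W=W_j^\dagger W_k^\dagger S_{jk}W_kW_j$, so the identity transports to $(W^\dagger S_{ij}W)(W^\dagger S_{ik}W)+(W^\dagger S_{ik}W)(W^\dagger S_{ij}W)=W^\dagger S_{ij}W+W^\dagger S_{ik}W+W^\dagger S_{jk}W-\Id$; taking $\las_2$ expectations via \Cref{lem:lasserre-polynomial-product} reproduces the off‑diagonal entries $(s-1)/2$, while $(W^\dagger S_{ij}W)^2=W^\dagger S_{ij}^2W=\Id$ keeps the diagonal equal to $1$. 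Once this is in hand, everything else is a direct appeal to \Cref{lem:large-edge-bound}, \Cref{def:cost_matrix_generic}, and the convex‑decomposition bookkeeping.
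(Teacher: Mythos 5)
Your proposal is correct and follows essentially the same route as the paper: reduce the rank-$1$ case to the QMC triangle bound (\Cref{lem:large-edge-bound}) by absorbing the $SO(3)$ rotations as single-qubit unitaries on the non-shared qubits $j$ and $k$ (feasibility preserved via \Cref{lem:rotation_L2}), then handle the positive case by decomposing each cost matrix as a convex combination over the extreme points of $\Sh$ and averaging. The only cosmetic difference is that you unwind the SWAP Gram-matrix argument explicitly in the rotated frame, whereas the paper invokes the QMC \Cref{fact:2} as a black box after the rotation.
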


\subsection{Generalization of \Cref{fact:2}}
The main purpose of this section is to prove and motivate the generalization of \Cref{fact:2}.  We will show that \Cref{fact:2} implies the generalization of \Cref{fact:2} to the rank-$1$ case which then implies the generalization to positive cost matrices.  To this end let us state and prove the first generalization.
\begin{theorem}\label{thm:star_bound_rank_1}
Let $C_{ij}=-U/4$ and $C_{ik}=-V/4$ be cost matrices with $U, V \in SO(3)$.  Let $M_{ij}$ and $M_{ik}$ be moment matrices from $\las_2$ with values $v_{ij}$ and $v_{ik}$ defined according to \Cref{def:cost_matrix_generic}.  If $v_{ij} > \gamma$ then $v_{ik} < 1/3$.
\end{theorem}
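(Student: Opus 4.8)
The plan is to reduce \Cref{thm:star_bound_rank_1} to the Quantum Max Cut case, \Cref{fact:2} (equivalently \Cref{lem:large-edge-bound}), by ``rotating away'' the matrices $U$ and $V$ via \Cref{lem:rotation_L2}. First I would unpack the hypotheses: by \Cref{def:cost_matrix_generic}, $C_{ij}=-U/4$ and $C_{ik}=-V/4$ give values $v_{ij}=\tfrac{4}{3}\tr[C_{ij}^{T}M_{ij}]=-\tfrac{1}{3}\tr[U^{T}M_{ij}]$ and $v_{ik}=-\tfrac{1}{3}\tr[V^{T}M_{ik}]$, whereas the Quantum Max Cut value of a $\las_2$ solution on an edge $\ell m$ is $-\tfrac{1}{3}\tr[M_{\ell m}]$, i.e.\ the $U=V=\Id$ special case.

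The key structural input is that the adjoint action of single-qubit unitaries on $\spn\{X,Y,Z\}$ realizes exactly $SO(3)$: for every $R\in SO(3)$ there is a single-qubit unitary $u$ with $u^{\dagger}\sigma_{a}u=\sum_{b}R_{ab}\sigma_{b}$ (with $\sigma_1=X$, $\sigma_2=Y$, $\sigma_3=Z$). This is precisely why the ``signed SVD'' in the definitions keeps the rotations in $SO(3)$ rather than $O(3)$: an orientation-reversing map of the Bloch sphere cannot be obtained by conjugation. I would then pick single-qubit unitaries $u_j,u_k$ realizing $U,V$ respectively, and let $u:=u_j\otimes u_k$ act as the identity on every other qubit (so $u$ is a tensor product of single-qubit unitaries leaving qubit $i$ alone).

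Applying \Cref{lem:rotation_L2} with this $u$ to the given feasible $\las_2$ solution produces a new feasible $\las_2$ solution $\{\ket{\tau'}\}$, $\ket{\tau'}=\ket{u^{\dagger}\tau u}$. Expanding $u^{\dagger}\sigma_{i}u$ and $u^{\dagger}\tau_{j}u$ in the Pauli basis and using bilinearity of $\braket{\cdot|\cdot}$ together with \Cref{lem:lasserre-polynomial-product}, the moment matrices transform by $M'_{ij}=R_{i}M_{ij}R_{j}^{T}$; since we took $R_{i}=\Id$, $R_{j}=U$, $R_{k}=V$, this reads $M'_{ij}=M_{ij}U^{T}$ and $M'_{ik}=M_{ik}V^{T}$. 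Hence the $s$-values (\Cref{def:QMC-values}) of the new solution are $s'_{ij}=\tfrac{1}{2}(1+\tr[M'_{ij}])=\tfrac{1}{2}(1+\tr[U^{T}M_{ij}])=(1-3v_{ij})/2$ and likewise $s'_{ik}=(1-3v_{ik})/2$ --- that is, the new solution attains, in the genuine Quantum Max Cut sense, exactly the values $v_{ij}$ and $v_{ik}$ on the two edges meeting at $i$. Finally, since \Cref{fact:2}/\Cref{lem:large-edge-bound} is a statement about an arbitrary feasible $\las_2$ solution (the $4\times 4$ Gram matrix of $\ket{\Id},\ket{S_{ij}},\ket{S_{ik}},\ket{S_{jk}}$ built from it is automatically positive semidefinite), it applies to the new solution verbatim: $v_{ij}>\gamma>(1+\sqrt{3})/3$ forces the corresponding $s$-value below $-\sqrt{3}/2$, hence the adjacent $s$-value above $0$, i.e.\ $v_{ik}<1/3$. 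The positive case then follows by writing each positive cost matrix as a convex combination of rank-$1$ cost matrices and applying the rank-$1$ bound edgewise, as indicated in the text.

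The hard part will be the bookkeeping in the third step: fixing the $SU(2)$--$SO(3)$ conventions and verifying the transformation rule $M'_{ij}=R_{i}M_{ij}R_{j}^{T}$, and especially confirming that the \emph{shared} qubit $i$ can be left untouched ($R_{i}=\Id$), which is exactly what allows the two non-shared endpoints $j,k$ to be rotated independently so as to absorb $U$ and $V$ simultaneously. Everything after that is immediate from \Cref{fact:2}.
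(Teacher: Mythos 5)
Your proposal is correct and follows essentially the same route as the paper: absorb $U$ and $V$ into the $\las_2$ solution by conjugating with single-qubit unitaries on the non-shared endpoints $j$ and $k$ (the $SU(2)\to SO(3)$ correspondence), use \Cref{lem:rotation_L2} to see the rotated vectors remain a feasible $\las_2$ solution, and then invoke \Cref{fact:2} since the edge values of the new solution are exactly $v_{ij}$ and $v_{ik}$ with QMC cost matrices. The paper's proof is just a terser version of this, citing the Bloch-rotation correspondence rather than spelling out the transformation rule $M'_{ij}=M_{ij}U^{T}$, and your observation that qubit $i$ is left fixed is exactly what makes the simultaneous absorption work there as well.
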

\begin{proof}
We can write
\begin{align}
    3 v_{ij}=4\tr[C_{ij} M_{ij}]=\tr[-U M_{ij}]=\tr[-\mathbb{I} \cdot{}  M_{ij} U]=\tr [-\mathbb{I} \cdot M_{ij}'] \text{  and }\\
    3 v_{ik}=4\tr[C_{ik} M_{ik}]=\tr[-V M_{ik}]=\tr[-\mathbb{I} \cdot{}  M_{ik} V]=\tr [-\mathbb{I} \cdot M_{ik}'].
\end{align}
Note that $M_{ij}'$ and $M_{ik}'$ are moment matrices with $SO(3)$ rotations on $j$ and $k$ respectively.  These correspond to local unitary rotations on $j$ and $k$ by \cite{H96}.  Hence, by \Cref{lem:rotation_L2}, $M_{ij}'$ and $M_{ik}'$ still correspond to moment matrices for a feasible solution to $\las_2$.  Since the cost matrices are the same as QMC, we may directly invoke \Cref{fact:2} (from \Cref{sec:tech_overview}).
\end{proof}
The next step is the generalization to the positive case:

\begin{theorem}
Let $C_{ij}=U_1 \Sigma_1 V_1/4$ and $C_{ik}=U_2 \Sigma_2 V_2/4$ be cost matrices with $U_\ell, V_\ell \in SO(3)$ and with $diag(\Sigma_\ell) \in \Sh$ for $\ell=1,2$.  Let $M_{ij}$ and $M_{ik}$ be moment matrices from $\las_2$.  Then, if $v_{ij} > \gamma$ then $v_{ik} < 1/3$.
\end{theorem}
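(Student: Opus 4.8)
The plan is to reduce the positive case to the already-established rank-1 case (\Cref{thm:star_bound_rank_1}) by decomposing each cost matrix into extreme points of $\Sh$ and exploiting that $v_{ij}$ and $v_{ik}$ are linear in the corresponding cost matrices.

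First I would record the key structural fact about $\Sh$: each of its four vertices $\epsilon$ satisfies $\epsilon_1\epsilon_2\epsilon_3=-1$, so $\mathrm{diag}(\epsilon)$ is orthogonal with determinant $-1$. Hence for any $U,V\in SO(3)$ and any vertex $\epsilon$ of $\Sh$, the matrix $-U\,\mathrm{diag}(\epsilon)\,V$ has determinant $(-1)^3\cdot 1\cdot(-1)\cdot 1=1$ and is orthogonal, so $\tfrac14 U\,\mathrm{diag}(\epsilon)\,V=-\tfrac14 W$ with $W\in SO(3)$; that is, it is a rank-1 cost matrix in the sense of the rank-1 case. Writing $\mathrm{diag}(\Sigma_1)=\sum_{a=1}^{4}\lambda_a\epsilon^{(a)}$ and $\mathrm{diag}(\Sigma_2)=\sum_{b=1}^{4}\mu_b\epsilon^{(b)}$ as convex combinations of the vertices of $\Sh$, I obtain $C_{ij}=\sum_a\lambda_a C_{ij}^{(a)}$ and $C_{ik}=\sum_b\mu_b C_{ik}^{(b)}$, where each $C_{ij}^{(a)}:=\tfrac14 U_1\,\mathrm{diag}(\epsilon^{(a)})\,V_1$ and $C_{ik}^{(b)}:=\tfrac14 U_2\,\mathrm{diag}(\epsilon^{(b)})\,V_2$ is a rank-1 cost matrix.

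Next I would push this through the value. Since $C\mapsto\tfrac43\tr[C^{T}M]$ is linear, \Cref{def:cost_matrix_generic} gives $v_{ij}=\sum_a\lambda_a v_{ij}^{(a)}$ and $v_{ik}=\sum_b\mu_b v_{ik}^{(b)}$, where $v_{ij}^{(a)}$ (resp.\ $v_{ik}^{(b)}$) denotes the value obtained from the rank-1 cost matrix $C_{ij}^{(a)}$ (resp.\ $C_{ik}^{(b)}$) against the same $\las_2$ moment matrix $M_{ij}$ (resp.\ $M_{ik}$). Assume $v_{ij}>\gamma$. Because $v_{ij}$ is a convex combination of the $v_{ij}^{(a)}$, there is an index $a^\ast$ with $v_{ij}^{(a^\ast)}\ge v_{ij}>\gamma$. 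Now, for every $b$, apply \Cref{thm:star_bound_rank_1} with the rank-1 cost matrices $C_{ij}^{(a^\ast)}$ on edge $ij$ and $C_{ik}^{(b)}$ on edge $ik$ and the given moment matrices $M_{ij},M_{ik}$ from $\las_2$: since $v_{ij}^{(a^\ast)}>\gamma$, it yields $v_{ik}^{(b)}<1/3$. Taking the convex combination over $b$, $v_{ik}=\sum_b\mu_b v_{ik}^{(b)}<1/3$, which is the claim.

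The only point requiring care — hence the main (mild) obstacle — is the orientation bookkeeping: I must absorb the sign so that each extreme-point cost matrix $C_{ij}^{(a)}$ is genuinely of the form $-W/4$ with $W\in SO(3)$ rather than merely orthogonal, which is exactly what the determinant computation above guarantees. Everything else (the decomposition over the vertices of $\Sh$, the max-over-vertices selection of $a^\ast$, and the final averaging) is routine, and no monogamy input beyond the QMC triangle inequality already invoked through \Cref{fact:2}/\Cref{thm:star_bound_rank_1} is needed. If one prefers not to cite \Cref{thm:star_bound_rank_1} as a black box, one can instead, for each pair $(a^\ast,b)$, absorb the $SO(3)$ rotations into $M_{ij}$ and $M_{ik}$ via \Cref{lem:rotation_L2} and apply \Cref{fact:2} directly; this is the same argument unrolled.
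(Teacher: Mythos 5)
Your proposal is correct and follows essentially the same route as the paper: decompose each $\Sigma_\ell$ over the extreme points of $\Sh$ so that each extreme term is a rank-1 cost matrix $-W/4$ with $W\in SO(3)$, use linearity of the value to extract some extreme term on $ij$ exceeding $\gamma$, invoke \Cref{thm:star_bound_rank_1} against every extreme term on $ik$, and average. Your explicit determinant check of the sign/orientation is a welcome elaboration of a step the paper dismisses as ``by inspection,'' but it is not a different argument.
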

\begin{proof}
Since the diagonal elements of $\Sigma_1$ are in $\Sh$, we can write $\Sigma_1=\sum_m \alpha_m \Delta_m$ where each $\Delta_m$ has on it's diagonal an extreme point of $\Sh$, $\sum_m \alpha_m=1$ and $\alpha_m\geq 0$.  

By inspection, see \Cref{eq:def_S}, it is clear that each $U_1 \Delta_m V_1$ can be written as $-W$ for $W_m$ $\in SO(3)$.  Hence, 
\begin{equation}
    3 v_{ij}=4\tr[C_{ij} M_{ij}]=\sum_m \alpha _m \tr[-W_m M_{ij}]
\end{equation}
Then $v_{ij} > \gamma$ implies that there exists some $m$ such that $\tr[-W_m M_{ij}] > 3 \gamma$.  If we apply the same kind of decomposition to $C_{ik}$ we obtain:
\begin{equation}
    3v_{ik}=4\tr[C_{ik} M_{ik}]=\sum_q \beta_q \tr[-R_q M_{ik}].
\end{equation}
Since there exists $m$ with $\tr[-W_m M_{ij}] > 3 \gamma$, \Cref{thm:star_bound_rank_1} implies that $ \tr[-R_q M_{ik}] < 1$ for all $q$, hence:
\begin{equation}
    v_{ik}=\frac{1}{3}\sum_q \beta_q \tr[-R_q M_{ik}] < 1/3.
    \end{equation}

\end{proof}

\subsection{Numerics}\label{sec:numerics}

Now we turn to the task of demonstrating generalizations of \Cref{fact:1} for the positive case.  The rank-$1$ case is a specialization (and would require the same techniques) so we will focus on the positive case.  Just as Parekh and Thompson \cite{PT20}, we will write down the approximation factor as a random variable then express it using a convergent series.  Since no analytic expression for the series is available to us, we can approximate the expectation to arbitrary precision by truncating the series and bounding the ``error term'' obtained from neglecting the remainder.  Contrary to \cite{PT20} we will use a very high order Hermite expansion and optimize over it computationally.  Their goal was to demonstrate a good approximation factor analytically, so a relatively low order expansion was used to keep the analysis tractable.  Demonstrating the generalization of \Cref{fact:1} requires a very precise estimate of the expectation necessitating a very high order expansion. We can reduce proving the generalization of \Cref{fact:1} in this case to an optimization over three parameters.  Hence we can find numerics which are fairly rigorous, although they fall short of a formal computer assisted proof. 

As previously discussed, $M_{ij}$ can be expressed as a Gram matrix of vectors from $\las_2$:
\begin{equation}
    M_{ij}=\begin{bmatrix}
    \, & \bra{X_i} & \, \\
     \, & \bra{Y_i}& \, \\
     \, & \bra{Z_i}& \, 
    \end{bmatrix}
    \begin{bmatrix}
    \, & \, & \, \\
    \ket{X_j} & \ket{Y_j} & \ket{Z_j}\\
    \, & \, & \, 
    \end{bmatrix}=: V_i^T V_j.
\end{equation}
If $\ket{r}$ is a randomly sampled standard multivariate normal random variable, then the expected approximation factor from \cite{PT20} can be written as:
\begin{equation}
    \alpha_{ij}=\frac{1+\mathbb{E}\left[\frac{\bra{r} V_i (4 C_{ij}) V_j^T \ket{r}}{||V_j^T \ket{r}|| \cdot ||V_i^T \ket{r}||} \right]}{1+3 v_{ij}}.
\end{equation}
Using exactly the ``SVD trick'' from \cite{PT20} (restated as \Cref{lem:svd} in the Appendix) we may assume that $M_{ij}$ is diagonal with $diag(M_{ij})=(a, b, c)$ and that $C_{ij}$ has diagonal elements $diag(C_{ij})=(p, q, r)$.  We may assume further that $(a, b, c)$ and $(p, q, r)\in \Sh$.  If we define the multivariate normal random vectors $z=(\braket{r|X_i}, \braket{r|Y_i}, \braket{r|Z_i})^T$ and $z'=(\braket{r| X_j}, \braket{r|Y_j}, \braket{r|Z_j})^T$, then we can express the approximation factor as:
\begin{equation}\label{eq:reduced_approx_factor}
\alpha_{ij}=\frac{1+\mathbb{E} \left[\frac{z^T\Sigma z'}{||z|| \cdot ||z'||} \right]}{1+a p + b q + c r}
\end{equation}
where $\Sigma$ is a diagonal matrix with $diag(\Sigma)=(p, q, r)$.  The generalization of \Cref{fact:1} to the case of interest then becomes:
\begin{mdframed}
\begin{lemma}[Computational]\label{lem:generic_fact2}
For $(a, b, c)$ and $(p, q, r)$ in $\Sh$,
\begin{equation}
    a p + b q + r c \leq 3 \gamma \Rightarrow \alpha_{ij} \geq 1/2.
\end{equation}
\end{lemma}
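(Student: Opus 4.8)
The plan is to turn the statement into a bounded numerical optimization and then certify it, in the spirit of~\cite{PT20} but with a much higher‑order expansion, since \Cref{alg:generic} is tight and the inequality consequently has essentially no slack. Start from the normal form already set up: $M_{ij}=\mathrm{diag}(a,b,c)$, $C_{ij}=\tfrac14\mathrm{diag}(p,q,r)$ with $(a,b,c),(p,q,r)\in\Sh$, and $\alpha_{ij}$ given by \Cref{eq:reduced_approx_factor}, where $z,z'$ are jointly Gaussian, each marginally standard normal on $\R^3$ (the $\las_2$ vectors $\ket{X_i},\ket{Y_i},\ket{Z_i}$ are orthonormal), with cross‑covariance $\mathbb{E}[z\,(z')^{\mathsf T}]=\mathrm{diag}(a,b,c)$ — a legitimate covariance because $\Sh\subseteq[-1,1]^3$. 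Since $\mathbb{E}[z^{\mathsf T}\Sigma z'/(\|z\|\,\|z'\|)]\ge-\|\Sigma\|\ge-1$ and $1+ap+bq+cr\ge 0$ on $\Sh\times\Sh$, the bound $\alpha_{ij}\ge\tfrac12$ is equivalent to
\begin{equation*}
\Psi\ :=\ 1+2\,\mathbb{E}\!\left[\frac{z^{\mathsf T}\Sigma z'}{\|z\|\,\|z'\|}\right]-\bigl(ap+bq+cr\bigr)\ \ge\ 0,
\end{equation*}
which is the statement I would prove (it is also well‑defined in the degenerate case $1+ap+bq+cr=0$, where it holds trivially).

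The key structural observation is that $\mathbb{E}[z^{\mathsf T}\Sigma z'/(\|z\|\,\|z'\|)]=\sum_{k=1}^{3}p_k\,g_k(a,b,c)$, where $g_k(a,b,c):=\mathbb{E}[z_kz'_k/(\|z\|\,\|z'\|)]$ depends only on the cross‑covariance $\mathrm{diag}(a,b,c)$ and \emph{not} on $(p,q,r)$, because $(z,z')$ is Gaussian and its law is fixed by that covariance. Hence both $\Psi$ and the constraint $ap+bq+cr\le3\gamma$ are affine in $(p,q,r)$, so for each fixed $(a,b,c)$ the minimum of $\Psi$ over $\{(p,q,r)\in\Sh:\ ap+bq+cr\le3\gamma\}$ is attained at one of the finitely many, explicitly enumerable vertices of that polytope — the vertices of $\Sh$ obeying the constraint together with the points where the hyperplane $ap+bq+cr=3\gamma$ cuts an edge of $\Sh$. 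This eliminates three of the six parameters, leaving the promised optimization over $(a,b,c)\in\Sh$, which one can further restrict to a fundamental domain of the symmetry group of $\Sh$ (coordinate permutations and even sign changes, applied simultaneously to $(a,b,c)$ and $(p,q,r)$, leave both the feasible set and $\Psi$ invariant, and so descend to the reduced problem).

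It remains to evaluate the three coefficient functions $g_k(a,b,c)$ with certified error bars. For this I would use the multivariate Hermite expansion of~\cite{PT20} (\Cref{lem:hermite}): expanding $x\mapsto x/\|x\|$ on $\R^3$ in Hermite polynomials and applying a Mehler‑type identity exhibits $g_k(a,b,c)$ as an absolutely convergent power series in $a,b,c$ whose degree‑$>m$ tail is controlled by $\max(|a|,|b|,|c|)\le1$ and the known summable decay of the Hermite coefficients — the same mechanism that makes the hypergeometric series defining $F(3,t)$ converge at $t=\pm1$. Truncating at a large order $m$ and bounding the remainder yields rigorous enclosures $g_k(a,b,c)\in[\underline g_k,\overline g_k]$; combining these with interval / branch‑and‑bound evaluation over $(a,b,c)\in\Sh$ and the finite vertex check in $(p,q,r)$ verifies $\Psi\ge0$ everywhere, with equality only in degenerate configurations (e.g.\ a singlet edge, $v_{ij}=1$), consistent with the chosen $\gamma=0.911$.

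The main obstacle is precision rather than any conceptual point: because the algorithm is tight, $\Psi$ vanishes along a curve in parameter space, so $m$ must be taken large enough that $\overline g_k-\underline g_k$ is smaller than the locally vanishing margin; moreover the tail bound degrades where $\max(|a|,|b|,|c|)\to1$, i.e.\ near the vertices of $\Sh$, so those neighbourhoods must be handled by matching against the exact values $g_k=F(3,\pm1)/3$ at the vertices together with a local continuity/monotonicity argument. Controlling this non‑uniform convergence is precisely what keeps the result computational rather than fully analytic.
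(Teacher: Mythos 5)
Your proposal follows essentially the same route as the paper: reduce to the diagonal normal form of \cite{PT20}, exploit linearity in $(p,q,r)$ (the paper's \Cref{lem:extreme_points}) to restrict to extreme points of $\Sh$ cut by the hyperplane $ap+bq+cr\le 3\gamma$, use the symmetry of $\Sh$ to shrink the $(a,b,c)$ search region, and certify the remaining three-parameter minimization with a high-order Hermite expansion plus remainder bound (\Cref{lem:hermite}, \Cref{lem:remainder}). The only differences are cosmetic or mild strengthenings -- you minimize the cross-multiplied affine quantity $\Psi$ rather than the ratio, and you propose interval/branch-and-bound enclosures where the paper uses a fixed mesh with an observed-gradient argument -- so the substance matches the paper's computational verification.
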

\end{mdframed}

Before delving into the details of our optimization we need to note some important simplifications.  Essentially the following shows that for fixed $(a, b, c)$ in the above, we need only check a few points to find a minimum among $(p, q, r)$.  While the optimization naively would have $6$ independent parameters, we reduce the problem to $3$.  
\begin{lemma}\label{lem:extreme_points}
Fix $(a, b, c)\in \Sh$.  For each $(p, q, r)$ consider $\alpha_{ij}$ as a function of $(p, q, r)$: $\alpha_{ij}^{(a, b, c)}(p, q, r)$.  Then, 
\begin{equation*}
    \min_{\substack{(p, q, r) \in \Sh\\ ap+bq+cr \leq 3 \gamma}} \alpha_{ij}^{(a, b, c)}(p, q, r) \geq \min_{(p, q, r) \in \mathcal{B}} \alpha_{ij}^{(a, b, c)}(p, q, r)
\end{equation*}
where $\mathcal{B}$ is the set of extreme points of the polytope formed by taking the intersection of $\Sh$ with the half-plane $\{(p, q, r): ap+bq+cr \leq 3 \gamma\}$.
\end{lemma}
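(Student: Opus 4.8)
The plan is to observe that, with $(a,b,c)$ fixed, $\alpha_{ij}^{(a,b,c)}(p,q,r)$ is a \emph{linear-fractional} function of $(p,q,r)$, and then to invoke the standard fact that such a function attains its minimum over a polytope at an extreme point, provided its denominator is nonnegative there. This is what reduces the optimization behind \Cref{lem:generic_fact2} from six free parameters to the finitely many vertices in $\mathcal{B}$ once $(a,b,c)$ is also fixed.

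First I would make the affine structure explicit. After the SVD reduction (\Cref{lem:svd}), $M_{ij}$ is the \emph{fixed} diagonal matrix $\mathrm{diag}(a,b,c)$, so the random vectors $z=(\braket{r|X_i},\braket{r|Y_i},\braket{r|Z_i})^T$ and $z'=(\braket{r|X_j},\braket{r|Y_j},\braket{r|Z_j})^T$ — in particular $\|z\|$ and $\|z'\|$ — do not depend on $(p,q,r)$; only $\Sigma=\mathrm{diag}(p,q,r)$ does, and $z^T\Sigma z' = p\,z_1z_1' + q\,z_2z_2' + r\,z_3z_3'$ is linear in $(p,q,r)$. Hence, writing $\alpha_{ij}^{(a,b,c)} = N(p,q,r)/D(p,q,r)$ with
\[
N(p,q,r) := 1 + \mathbb{E}\!\left[\frac{z^T\Sigma z'}{\|z\|\,\|z'\|}\right], \qquad D(p,q,r) := 1 + ap + bq + cr,
\]
both $N$ and $D$ are affine in $(p,q,r)$. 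Moreover $D\ge 0$ on $\mathcal{S}$: writing $(a,b,c),(p,q,r)\in\mathcal{S}$ as convex combinations $\sum_i\lambda_i e_i$ and $\sum_i\mu_i e_i$ of the four extreme points $e_i$ (which satisfy $e_i\cdot e_i=3$ and $e_i\cdot e_j=-1$ for $i\ne j$), one computes $ap+bq+cr = 4\sum_i\lambda_i\mu_i - 1 \ge -1$, so $D\ge 0$; and $N\ge 0$ since $|z^T\Sigma z'|\le \|\Sigma\|\,\|z\|\,\|z'\| \le \|z\|\,\|z'\|$ because $\mathrm{diag}(\Sigma)\in\mathcal{S}\subseteq[-1,1]^3$.

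Now set $P := \mathcal{S}\cap\{(p,q,r): ap+bq+cr\le 3\gamma\}$, a bounded polytope with vertex set $\mathcal{B}$, and let $m := \min_{(p,q,r)\in\mathcal{B}}\alpha_{ij}^{(a,b,c)}(p,q,r)\ge 0$. I would show the affine function $N - mD$ is nonnegative on all of $P$, which suffices because an affine function nonnegative at every vertex of a polytope is nonnegative on the whole polytope. At a vertex $v\in\mathcal{B}$ with $D(v)>0$ we have $\alpha_{ij}^{(a,b,c)}(v)=N(v)/D(v)\ge m$, hence $N(v)-mD(v)\ge 0$; at a vertex with $D(v)=0$ we have $N(v)-mD(v)=N(v)\ge 0$. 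Therefore $N(p,q,r)\ge m\,D(p,q,r)$ for all $(p,q,r)\in P$, so at any $(p,q,r)\in P$ with $D(p,q,r)>0$ we get $\alpha_{ij}^{(a,b,c)}(p,q,r)\ge m$; at points with $D=0$ the edge contributes no value to the $\las_2$ relaxation and the per-edge approximation bound is vacuous (one may set $\alpha_{ij}=+\infty$). This yields $\min_{P}\alpha_{ij}^{(a,b,c)}\ge m=\min_{\mathcal{B}}\alpha_{ij}^{(a,b,c)}$, as claimed.

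The only genuine subtlety — and thus the main obstacle — is that $D$ can vanish on part of $P$ (precisely when the SDP value $v_{ij}$ attains its minimum $-1/3$), which blocks a naive quasiconcavity argument phrased through sublevel sets of $N/D$; the fix above, working directly with the affine function $N-mD$ and the crude bound $N\ge 0$, sidesteps it. A secondary bookkeeping point is simply to be careful that the SVD reduction really does put $M_{ij}$ and $C_{ij}$ into simultaneously diagonal form with diagonals ranging over $\mathcal{S}$, so that after fixing $(a,b,c)$ the remaining dependence on $(p,q,r)$ enters only through the bilinear term in $N$ and through $D$ — exactly the structure that makes linear-fractionality, and hence the vertex reduction, available.
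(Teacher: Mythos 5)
Your proposal is correct and follows essentially the same route as the paper: both arguments exploit that, for fixed $(a,b,c)$, the numerator and denominator of $\alpha_{ij}^{(a,b,c)}$ are affine and nonnegative in $(p,q,r)$, and thereby reduce the minimization over the polytope to its extreme points. The paper implements this by writing $(p,q,r)$ as a convex combination of the vertices and applying the mediant inequality $\frac{\sum_i s_i}{\sum_i t_i} \geq \min_{i:\,t_i\neq 0} \frac{s_i}{t_i}$, justifying nonnegativity of the $s_i,t_i$ by physicality of the associated $2$-qubit operators, while you use the equivalent device of the affine function $N-mD$ with direct bounds giving $N,D\geq 0$ — a minor variation that is self-contained and handles the degenerate vertices with $D=0$ somewhat more explicitly.
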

\begin{proof}
Using the notation of random variables from \Cref{eq:reduced_approx_factor}, let us define:
\begin{equation*}
    A=\mathbb{E}\left[ \frac{z_1 z_1'}{||z|| \cdot{}||z'||}\right], \,\,\,\,\,\,B=\mathbb{E}\left[ \frac{z_2 z_2'}{||z|| \cdot{}||z'||}\right], \,\,\,\text{and }\,\,\, C=\mathbb{E}\left[ \frac{z_3 z_3'}{||z|| \cdot{}||z'||}\right].
\end{equation*}
Let $\mathcal{P}$ be the polytope formed by taking the intersection of $\Sh$ with the half-place $\{(p, q, r): ap+bq+cr \leq 3 \gamma\}$.  We are optimizing $(p, q, r)$ over this polytope.  Write $(p, q, r)$ as a convex combination of the extreme points of $\mathcal{P}$: $(p, q, r)=\sum_i \lambda_i (p_i, q_i, r_i)$.  Then,
\begin{align*}
     \alpha_{ij}(p, q, r)= \frac{1+p A+q B+r C}{1+p a + q b + r c}=\frac{1+\sum_i \lambda_i (A, B, C) \cdot (p_i, q_i, r_i)}{1+\sum_i \lambda_i (a, b, c)\cdot (p_i, q_i, r_i)}\\
     =\frac{\sum_i \lambda_i(1+  (A, B, C) \cdot (p_i, q_i, r_i))}{\sum_i \lambda_i(1+ (a, b, c)\cdot (p_i, q_i, r_i))}.
\end{align*}
Now apply the elementary fact that for non-negative constants $\{s_i, t_i\}$, $\frac{\sum_i s_i}{\sum_i t_i}\geq \min_{i:t_i \neq 0} \frac{s_i }{t_i}$.  It is easy to verify that $(1+  (A, B, C) \cdot (p_i, q_i, r_i))$ and $(1+ (a, b, c)\cdot (p_i, q_i, r_i))$ always correspond to $\tr[T \rho]$ for $\rho$ a physical density matrix and $T\semigeq 0$ (see \cite{H96, G16, PT20}).  Hence $(1+ (a, b, c)\cdot (p_i, q_i, r_i))$ and $(1+  (A, B, C) \cdot (p_i, q_i, r_i))$ are non-negative and we can apply the elementary fact:
\begin{equation*}
    \alpha_{ij}(p, q, r) \geq \frac{1+  (A, B, C) \cdot (p_i, q_i, r_i)}{1+ (a, b, c)\cdot (p_i, q_i, r_i)} \text{  for some $i$.}
\end{equation*}
\end{proof}
\noindent In light of the above lemma we will switch the notation $\alpha_{ij}(a, b, c)$ for the minimum value of $\alpha_{ij}^{(a, b, c)}(p, q, r)$ over the implicitly defined region $\mathcal{B}$:
\begin{equation}
    \alpha_{ij}(a, b, c):=\min_{(p, q, r) \in \mathcal{B}} \alpha_{ij}^{(a, b, c)}(p, q, r)
\end{equation}

In addition to \Cref{lem:extreme_points}, by exploiting symmetries in our problem we may assume the following:
\begin{lemma}
If $(a^*, b^*, c^*)={\rm argmin}_{(a, b, c) \in \Sh} \alpha_{ij}(a, b, c)$ then we may assume $a^*\leq b^*$, and $b^* \leq c^*$.  Further if $a^*+b^*+c^* > -1.5$ then $\alpha_{ij}(a, b, c) \geq 1/2$ hence me may restrict our search to $a^*+b^*+c^* \leq -1.5$. 
\end{lemma}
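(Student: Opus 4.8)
The plan is to dispatch the two assertions separately: the ordering $a^{*}\le b^{*}\le c^{*}$ by a permutation symmetry, and the threshold $a^{*}+b^{*}+c^{*}\le -3/2$ by reducing to the finite family of extreme points supplied by \Cref{lem:extreme_points}.

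\emph{Ordering.} The polytope $\Sh$ is invariant under permuting the three coordinates: a permutation $\pi\in S_{3}$ fixes the extreme point $(-1,-1,-1)$ and permutes the remaining three among themselves. Each of $A,B,C$ (the notation from the proof of \Cref{lem:extreme_points}) is the \emph{same} Gaussian expectation with the roles of the coordinate axes interchanged, so the substitution $(a,b,c)\mapsto\pi(a,b,c)$ carries $(A,B,C)$ to $\pi(A,B,C)$. Hence the single-edge objective $\alpha_{ij}^{(a,b,c)}(p,q,r)=\frac{1+pA+qB+rC}{1+pa+qb+rc}$ is unchanged if $\pi$ is applied simultaneously to $(a,b,c)$ and to $(p,q,r)$, and the feasible set of $(p,q,r)$ --- the extreme points $\mathcal B$ of $\Sh\cap\{x:(a,b,c)\cdot x\le 3\gamma\}$ --- transforms equivariantly, $\mathcal B(\pi(a,b,c))=\pi(\mathcal B(a,b,c))$. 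Therefore $\alpha_{ij}(a,b,c)$ is a symmetric function of its arguments, and a minimizer may be taken with its coordinates sorted.

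\emph{Threshold.} Fix $(a,b,c)\in\Sh$ with $a+b+c>-3/2$. By \Cref{lem:extreme_points} it suffices to show $\frac{1+e\cdot(A,B,C)}{1+e\cdot(a,b,c)}\ge\tfrac12$ for each extreme point $e$ of $\mathcal P:=\Sh\cap\{x:(a,b,c)\cdot x\le 3\gamma\}$; such an $e$ is either a vertex of $\Sh$ lying in the halfspace, or the point where the bounding hyperplane $(a,b,c)\cdot x=3\gamma$ cuts an edge of $\Sh$. The hypothesis enters exactly here: since $\gamma=0.911$, the ``singlet'' cost vector satisfies $(-1,-1,-1)\cdot(a,b,c)=-(a+b+c)<3/2<3\gamma$ and is therefore always of the first kind, so the only vertices of $\Sh$ that can fall outside the halfspace are the three ``triplet'' Bell vectors, and each such vertex has $\las_2$ value at most $1$. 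For a vertex $e$ inside the halfspace, the associated edge is a rank-$1$ term with value $v_{e}:=\tfrac13 e\cdot(a,b,c)\le\gamma$ (and $v_e<1/2$ when $e=(-1,-1,-1)$), so the rank-$1$ form of \Cref{fact:1} bounds its ratio below by $\tfrac12$. For an edge-cut point, write $e=\lambda e_{i}+(1-\lambda)e_{j}$ with $\tilde v_{i}\le\gamma<\tilde v_{j}\le 1$ and $\lambda\tilde v_{i}+(1-\lambda)\tilde v_{j}=\gamma$, where $\tilde v_{\ell}=\tfrac13 e_{\ell}\cdot(a,b,c)\in[-\tfrac13,1]$; the numerator splits as a convex combination, and (once each Bell-vector contribution is put in QMC-normalized form via the rotation covariance of the rounding) the claim reduces to $\lambda F(3,\tilde v_{i})+(1-\lambda)F(3,\tilde v_{j})\ge F(3,\gamma)$, which follows from convexity of $F(3,\cdot)$ on $[0,1]$ (\Cref{item:helper_lemma_item2} of \Cref{prop:helper}) when $\tilde v_{i}\ge 0$ and, on the branch $\tilde v_{i}<0$, from the monotonicity of $t\mapsto F(3,t)/t$ on $(0,1]$ extracted in the proof of \Cref{item:helper_lemma_item3} of \Cref{prop:helper}; combined with the defining inequality $\frac{1+F(3,\gamma)}{1+3\gamma}\ge\tfrac12$ of $\gamma$, this yields ratio $\ge\tfrac12$.

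I expect the crux to be making the single-edge bound for a Bell-vector cost rigorous. Unlike pure Quantum Max Cut, the rounding of \cite{PT20} is not isotropized, so the expectation $1+e\cdot(A,B,C)$ is not a one-parameter function of $v_{e}$ but genuinely depends on the full triple $(a,b,c)$ (and, after the signed singular value decomposition, on the sign pattern of $e$); it has no closed form and must be controlled through the truncated Hermite/hypergeometric expansion of \Cref{lem:hermite}. This is why that step is the computational estimate of \Cref{sec:numerics}, and why the constant $-3/2$ is not cosmetic: below it the worst-case single-edge ratio truly drops below $\tfrac12$ (just as the Quantum Max Cut ratio does near value $1$), so the estimate must be sharp right at the boundary $a+b+c=-3/2$, whereas above it the slack described in the previous paragraph is enough.
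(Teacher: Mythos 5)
Your ordering argument is fine and is essentially the paper's: permute $(a,b,c)$ and $(p,q,r)$ simultaneously and use the permutation symmetry of $\Sh$ and of the Gaussian expectations $A,B,C$. The gap is in the threshold part. There you reduce the claim, via \Cref{lem:extreme_points}, to single-edge bounds of the form $\frac{1+e\cdot(A,B,C)}{1+e\cdot(a,b,c)}\geq\frac12$ for Bell-type cost vectors $e$ with value at most $\gamma$ -- but that is precisely the rank-$1$ instance of the generalized \Cref{fact:1} (\Cref{lem:generic_fact2}), which the paper only establishes computationally, and which the present lemma exists to make computationally tractable by shrinking the search region. Invoking it here is circular, and you concede as much in your closing paragraph when you say the crux ``must be controlled through the truncated Hermite expansion'' and ``is the computational estimate of \Cref{sec:numerics}.'' A second, related flaw: your edge-cut step assumes each Bell-vector contribution can be ``put in QMC-normalized form via rotation covariance'' so that it equals $F(3,\tilde v_\ell)$. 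Rotation covariance (\Cref{lem:rotation_L2}) preserves $\las_2$ feasibility and the SDP values, but it does not isotropize the moment matrix; the closed form $F(3,\cdot)$ only applies when $a=b=c$ (\Cref{lem:gp_exp}), so for general $(a,b,c)$ the expectation $e\cdot(A,B,C)$ is not a one-parameter function of $\tilde v_\ell$ and the convexity step has no object to apply to.

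The paper's proof avoids all of this by being much cruder: it discards the halfspace constraint on $(p,q,r)$ entirely, uses \Cref{lem:pqr_sym} to pin the minimizing cost at $(p,q,r)=(-1,-1,-1)$, and then bounds $-A-B-C$ analytically by a \emph{first-order} Hermite truncation with the remainder estimate of \Cref{lem:remainder}, giving $-A-B-C\geq-\tfrac{8}{9\pi}(a+b+c)-3\bigl(\tfrac13-\tfrac{8}{9\pi}\bigr)$. The resulting ratio is a monotone function of the single variable $x=a+b+c$ that crosses $\tfrac12$ near $x\approx-1.61$, so it exceeds $\tfrac12$ whenever $x>-1.5$. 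Note this also corrects your reading of the constant: $-1.5$ is not where the constrained worst-case ratio drops below $\tfrac12$ (the numerics show it never does on the constrained region); it is simply a convenient threshold, with slack, below which the crude unconstrained bound stops certifying $\tfrac12$ and the fine-grained numerical search takes over. To repair your write-up you would either need to reproduce an analytic truncation bound of this kind, or restructure so the threshold claim does not presuppose the computational single-edge fact.
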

\begin{proof}
The first piece of the lemma is clear.  If the optimal assignment $(a^*, b^*, c^*)$ does not satisfy $a^* \leq b^* \leq c^*$ then we may shuffle them as well as the corresponding $(p^*, q^*, r^*)$ to obtain $a^* \leq b^* \leq c^*$ without effecting the approximation factor.  

The second piece follows from \cite{PT20}.  Essentially in the region corresponding to $a^*+b^*+c^* > -1.5$ we may use a very low order expansion to show the expectation is above $1/2$. We may lower bound $\alpha_{ij}$ as
\begin{equation}
    \min_{(a, b, c) \in \Sh} \alpha_{ij} (a, b, c) \geq \min_{\substack{(a, b, c) \in \Sh\\(p, q, r)\in \Sh}}\frac{1+p A + q B + r C}{1 + p a +q b + r c},
\end{equation}
using the notation of \Cref{lem:extreme_points}.  Using an argument very similar to \Cref{lem:extreme_points} (see \Cref{lem:pqr_sym} for a statement) we may lower bound this by setting $(p, q, r)=(-1, -1, -1)$.  
\begin{equation}
     \min_{(a, b, c) \in \Sh} \alpha_{ij} (a, b, c) \geq \min_{(a, b, c) \in \Sh}\frac{1- A -  B - C}{1 - a - b - c}.
\end{equation}
$A$, $B$ and $C$ are expectations which may be approximated via Hermite polynomial expansion using techniques from \cite{PT20}.  By \Cref{lem:hermite} and \Cref{lem:remainder} we can lower bound:
\begin{equation}
    -A-B-C \geq -\frac{8}{9 \pi} (a+b+c)-3(1/3-8/(9\pi)).
\end{equation}
Setting $x=(a+b+c)$ then we find
\begin{equation}
     \min_{(a, b, c) \in \Sh} \alpha_{ij} (a, b, c) \geq \min_{ x\in[-3, 1]}\frac{1-8/(9\pi) x-3(1/3-8/(9\pi))}{1-x}.
\end{equation}
The R.H.S.is monotonically increasing as a function of $x$ and the value of $x$ for which it exactly equals $1/2$ is $x\approx -1.61$.  Hence, (in particular) if $x>-1.5$ the R.H.S. must be larger than $1/2$.

\end{proof}

\paragraph{Computational Tests used to verify \Cref{lem:generic_fact2}.}  We make use of the previously described simplifications plus brute-force search to verify \Cref{lem:generic_fact2}.  We first create a mesh over the $(a, b, c)$ parameter space restricted to $a\leq b \leq c$ and $a+b+c \leq -1.5$.  The mesh itself will be a list of points each one of which is at most side length $\ell$ away from it's neighbor.  The mesh can be thought of as points on a square lattice with side length $\ell$.  For each point in the mesh $(a, b, c)$ we will find the extreme points guaranteed by \Cref{lem:extreme_points}: $(p, q, r) \in \mathcal{B}$.  

Then for all parameters $(a, b, c, p, q, r)$ found previously, we will evaluate a bound for:
\begin{equation}
    \frac{1+p A +q B+r C}{1+ p a +q b + r c }
\end{equation}
to very high precision using a very high order Hermite expansion (order $70$) from \cite{PT20}.  Recall $A$, $B$ and $C$ are defined in \Cref{lem:extreme_points}.  By including a remainder term, $rem$, essentially we will calculate a very large polynomial $t:\mathbb{R}^3 \rightarrow \mathbb{R}$ such that
\begin{equation}
     \frac{1+p A +q B+r C}{1+ p a +q b + r c } \geq \frac{1+p t(a, b, c) +q t(b, a, c)+r t(c, a, b)-rem }{1+ p a +q b + r c }
\end{equation}
Using the notation of \cite{PT20} and of \Cref{lem:hermite}, 
\begin{equation*}t(a, b, c)=\sum_{\substack{i, j \leq k\\i, j, k \leq 70}} \hat{f}_{i, jk}^2 (u_{i, jk}(a, b, c)+u_{i, jk}(b, a, c)+u_{i, jk}(c, a, b)).
\end{equation*}

By checking every point in the mesh we arrive at:
\begin{equation}
\alpha_{ij}^*=\min_{(a, b, c, p, q, r)\in \mathcal{A}}  \frac{1+p t(a, b, c) +q t(b, a, c)+r t(c, a, b)-rem }{1+ p a +q b + r c },
\end{equation}
where $\mathcal{A}$ is the set of parameters $(a, b, c, p, q, r)$ arrived at with the previous procedure.  

Hence by checking every point in the mesh we can be certain of the lower bound up to {\it spatial} accuracy determined by $\ell$.  As long as $\alpha_{ij}^*$ is greater than $1/2$ and the function does not change too rapidly on the length scale $\ell$, we can find a good estimate of the lower bound.  In the mesh, we use a side length $\ell=5 \cdot 10^{-3}$ when $-1 \leq (a+b+c)/3 \leq -0.85$ and, $\ell=5\cdot 10^{-2}$ when $-0.85 \leq (a+b+c)/3 \leq -0.5$.  The reason for this is that we anticipate the minimum value to occur in the region corresponding to $-1 \leq (a+b+c)/3 \leq -0.85$, so we use a more accurate mesh to better capture the function there.  The observed minimum was at $(a, b, c)=(-\gamma, -\gamma, -\gamma)$ and we obtained $\alpha_{ij}^*=0.500296$.  While the length scale is larger than the function difference $\alpha_{ij}^*-1/2\approx 10^{-4}$, the observed maximum magnitude of the gradient near the minimum is $\approx 10^{-3}$, hence we believe we have accurately captured the lower bound $\alpha_{ij}^*\geq  0.500296- 5\cdot 10^{-3} \cdot 10^{-3} >1/2$.  By magnitude of gradient here we mean $|\nabla_{a, b, c} \alpha_{ij}(a, b, c)|$, calculated numerically.

\section{Relevant Results from Previous Work} \label{sec:lems_from_beating_random}

Some of the work we present here relies on work by Parekh and Thompson \cite{PT20}.  Hence we provide ``clean'' descriptions here of the facts we need which follow from that work.  First define the following convex hull:
Let us also define the discrete delta function:
\begin{equation}
    \delta_{ij}=\begin{cases} 1 \text{ if $i=j$ }\\ 0 \text{  otherwise} \end{cases}.
\end{equation}


The next important piece is the SVD technique used to reduce the expectation of the standard product state rounding algorithm to a canonical form:

\begin{lemma}[\cite{PT20}]\label{lem:svd}
Let $V_1=[\mathbf{v}_{11}, \mathbf{v}_{12}, \mathbf{v}_{13}]\in \mathbb{R}^{d\times 3}$ and $V_2=[\mathbf{v}_{21}, \mathbf{v}_{22}, \mathbf{v}_{23}]\in \mathbb{R}^{d\times 3}$ be real matrices with normalized columns ($|| \mathbf{v}_{ij}||^2=1$) such that $V_1^T V_2$ is a valid moment matrix for some density matrix.  Let $C$ be the cost matrix for a rank $1$ strictly quadratic projector,  and let $\mathbf{r}\sim \mathcal{N}(0, \mathbb{I_d})$ be the same size as the vectors $\mathbf{v}_{ij}$.  Then,
\begin{align}\label{eq:22}
\mathbb{E}_{\mathbf{r}}\left[\frac{ \mathbf{r}^T V_1 (4C) V_2^T \mathbf{r}}{||V_1^T \mathbf{r}|| \,\, ||V_2^T \mathbf{r}||}\right]=\mathbb{E} \left[ \frac{p z_1 z_1' +q z_2 z_2' +r z_3 z_3'}{\sqrt{(z_1^2+z_2^2+z_3^2)((z_1')^2+(z_2')^2+(z_3')^2)}}\right]\\
{\rm and }\,\, Tr\left( V_1 (4 C) V_2^T \right)=a p+b q + c r\label{eq:23}.
\end{align}
\noindent with $[z_1, z_2, z_3, z_1', z_2', z_3']$ multivariate normal random variables with mean zero and covariance matrix
\begin{equation*}
\Sigma=\begin{bmatrix}  
1 & 0 & 0 & a & 0 & 0\\
0 & 1 & 0 & 0 & b & 0\\
0 & 0 & 1 & 0 & 0 &c\\
a & 0 & 0 & 1 & 0 & 0\\
0 & b & 0 & 0 & 1 & 0\\
0 & 0 & c & 0 & 0 & 1
\end{bmatrix}
\end{equation*}

Further, we can assume without loss of generality that $(a, b, c)$, and $(p, q, r)\in \Sh$.
\end{lemma}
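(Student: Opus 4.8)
The plan is to reduce an arbitrary instance to the canonical ``doubly diagonal'' form by exploiting the $SO(3)$-rotational (equivalently, single-qubit-unitary) invariance of both the cost matrix and the moment matrix. Write $z := V_1^{T}\mathbf r$ and $z' := V_2^{T}\mathbf r$. Since the single-qubit Pauli vectors of a (real) Lasserre solution are orthonormal, $V_1^{T}V_1 = \Id = V_2^{T}V_2$, so $z,z'$ are jointly Gaussian with $\mathrm{Cov}(z)=\mathrm{Cov}(z')=\Id$ and cross-covariance $M := V_1^{T}V_2$, and $\mathbf r^{T}V_1(4C)V_2^{T}\mathbf r = z^{T}(4C)z'$, $\|V_1^{T}\mathbf r\| = \|z\|$, $\|V_2^{T}\mathbf r\| = \|z'\|$, while $\Tr[V_1(4C)V_2^{T}] = \Tr[(4C)M^{T}]$. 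The goal is to rotate the instance so that both $4C$ and $M$ become diagonal, with diagonals in $\Sh$.

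First I would diagonalize the cost. Since $C$ is the cost matrix of a rank-$1$ strictly quadratic projector onto a maximally entangled state, $4C = -R$ for some $R\in SO(3)$ (the locally rotated Bell-state picture; see \cite{G16,H96}); hence the one-sided rotation $P=\Id$, $Q=R^{T}\in SO(3)$ gives $P^{T}(4C)Q = -\Id$. By \Cref{lem:rotation_L2} (and its density-matrix analogue, via the $SU(2)\to SO(3)$ double cover), replacing $V_1\mapsto V_1P$ and $V_2\mapsto V_2Q$ keeps the columns normalized and keeps $(V_1P)^{T}(V_2Q) = P^{T}MQ$ a valid moment matrix; and because $PP^{T}=QQ^{T}=\Id$ and orthogonal matrices preserve Euclidean norms, both $\mathbf r^{T}V_1(4C)V_2^{T}\mathbf r$ and $\Tr[V_1(4C)V_2^{T}]$ are literally unchanged. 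So without loss of generality $4C=-\Id$; the stated form $(p,q,r)\in\Sh$ is the one needed when the same reduction is applied after writing a general positive cost as a convex combination of rank-$1$ costs.

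Next I would diagonalize $M$ while keeping the cost fixed. The scalar cost $-\Id$ is invariant under $O^{T}(-\Id)O=-\Id$ for every $O\in SO(3)$ and under SWAP of the two qubits, so: transposition of $M$ is realized by conjugating the underlying $2$-qubit density matrix by SWAP, and convexity of the set of valid moment matrices makes $\tfrac12(M+M^{T})$ valid; this does not change $\Tr[(4C)M^{T}]=-\Tr(M)$ and, as in \cite{PT20}, does not improve the worst-case expectation. A symmetric moment matrix is orthogonally diagonalizable, $M=O\,\mathrm{diag}(a,b,c)\,O^{T}$, and applying $O$ as above (it fixes $-\Id$) makes $M$ diagonal. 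Then $\mathrm{Cov}(z,z')=\mathrm{diag}(a,b,c)$, which is exactly the covariance $\Sigma$ of the statement; substituting $4C=\mathrm{diag}(p,q,r)$ yields $z^{T}(4C)z' = pz_1z_1'+qz_2z_2'+rz_3z_3'$ and $\Tr[(4C)M^{T}] = ap+bq+cr$, giving the two displayed identities. Finally $(a,b,c)\in\Sh$ follows from the characterization of which $3\times 3$ matrices arise (after signed SVD) as correlation matrices of genuine $2$-qubit density matrices \cite{H96}, and $(p,q,r)\in\Sh$ was already arranged.

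The main obstacle is the second step: in general the cost matrix and the moment matrix cannot be simultaneously diagonalized by a single local rotation. What makes the rank-$1$ case tractable is that every rank-$1$ maximally entangled cost is local-unitarily equivalent to the scalar $-\Id$, so after the first reduction the full $SO(3)$ rotation group still fixes the cost and can be used to diagonalize the (symmetrized) moment matrix. The point that genuinely needs care is the passage to the symmetric part of $M$ — verifying that transposition of a valid moment matrix really is a SWAP conjugation of a genuine $2$-qubit density matrix, and that symmetrizing cannot help the adversary in the worst-case ratio — which is where I would follow the corresponding argument of Parekh and Thompson~\cite{PT20}.
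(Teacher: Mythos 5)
There is a genuine gap, and it sits exactly at the step you flagged as delicate. After you rotate the cost to $-\Id$, the cross-covariance of $(z,z')$ becomes $M' = M R^{T}$, which is generally not symmetric, and your fix --- replacing $M'$ by its symmetric part $\tfrac12(M'+M'^{T})$ --- changes the joint law of $(z,z')$ and hence changes the Gaussian expectation. But \Cref{lem:svd} asserts an exact \emph{identity}, \eqref{eq:22} together with \eqref{eq:23}, and the later analysis (e.g.\ \Cref{lem:extreme_points} and \Cref{eq:reduced_approx_factor}) uses precisely this identity, with $(a,b,c)$ tied to the SDP value of the edge through \eqref{eq:23}. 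The expectation $\mathbb{E}\!\left[z^{T}(-\Id)z'/(\norm{z}\,\norm{z'})\right]$ is \emph{not} a function of the symmetric part of the cross-covariance alone (e.g.\ take $M$ one third of a rotation about the $z$-axis: the value depends on the rotation angle through $\tr$-type data that symmetrization distorts), so symmetrizing yields neither the claimed equality nor, without further argument, an inequality in the direction you need; ``does not improve the worst-case expectation'' is asserted, not proved, and in examples the symmetrized expectation can exceed the true one, which is the unsound direction for a performance lower bound.

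The missing idea is that the two matrices are never simultaneously diagonalized; instead one diagonalizes the \emph{moment} matrix and then shows the off-diagonal part of the rotated cost contributes nothing. Concretely: take a signed SVD $M = P\,\mathrm{diag}(a,b,c)\,Q^{T}$ with $P,Q\in SO(3)$, absorb $P,Q$ into $V_1,V_2$ (\Cref{lem:rotation_L2} and \cite{H96} give feasibility and $(a,b,c)\in\Sh$), so $(z,z')$ now has cross-covariance $\mathrm{diag}(a,b,c)$ and the cost becomes $N := P^{T}(4C)Q$. When the cross-covariance is diagonal, negating the $k$-th coordinate of \emph{both} $z$ and $z'$ preserves their joint law, whence $\mathbb{E}\!\left[z_k z'_l/(\norm{z}\,\norm{z'})\right]=0$ for $k\neq l$; therefore only the diagonal entries $(p,q,r)$ of $N$ survive in the expectation, which is exactly \eqref{eq:22}, and \eqref{eq:23} follows from $\tr[(4C)M^{T}]=\tr[N\,\mathrm{diag}(a,b,c)]=ap+bq+cr$. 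Finally $(p,q,r)\in\Sh$ because $N=-P^{T}RQ$ is minus a rotation and the diagonals of $SO(3)$ matrices form the tetrahedron $-\Sh$. Your cost-first reduction forces you into the simultaneous-diagonalization problem (needing $MR^{T}$ symmetric), which is exactly what this order of operations avoids.
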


One of the important technical observations in the previous work was that this can be evaluated with a convergent series:
\begin{lemma}[\cite{PT20}]\label{lem:hermite}
Let $[z_1, z_2, z_3, z_1', z_2', z_3']$ be as described in \Cref{lem:svd}.  Define the constants:
\begin{equation}\label{eq:50}
    \hat{f}_{i, jk}=\begin{cases} 2\sqrt{\frac{2}{\pi}} \frac{(-1)^p\sqrt{i! j! k!}}{(i-1)!! j!! k!! (1+2p)(3+2p)} \text{  if  $i$ is odd, $j$ is even, $k$ is even, and $i$, $j$, $k\in \mathbb{Z}_{\geq 0}$}\\
    0 \text{  otherwise  }\end{cases}
\end{equation}
for $p=(i+j+k-1)/2$.  Further define the following family of polynomials:
\begin{equation}
    u_{i, jk}(a, b, c) =\begin{cases}   
a^i b^j c^k+a^i b^k c^j \,\,\,\, \text{  if $k\neq j$ }\\
a^i b^j c^j \,\,\,\, \text{  if $k=j$}\end{cases}.
\end{equation}
If $(a, b, c)\in \Sh$, \textbf{then}
\begin{equation}
    \mathbb{E} \left[ \frac{z_1 z_1'}{||(z_1, z_2, z_3)|| \,\, ||(z_1', z_2', z_3')||}\right]=\sum_{i, j\leq k} \hat{f}_{i, jk}^2 u_{i, jk}(a, b, c).
\end{equation}

\end{lemma}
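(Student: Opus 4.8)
The result is quoted from \cite{PT20}; the plan is a Gaussian-chaos computation exploiting the structure of the covariance $\Sigma$ in \Cref{lem:svd}. The key observation is that $\Sigma$ makes the three pairs $(z_1,z_1')$, $(z_2,z_2')$, $(z_3,z_3')$ jointly independent, with $(z_m,z_m')$ a standard bivariate Gaussian of correlation $a$, $b$, $c$ for $m=1,2,3$ respectively. So I would expand the ``unit-vector'' map $g(x):=x_1/\|x\|$ on $\R^3$ in the orthonormal (probabilists') Hermite basis $H_\alpha(x)=\prod_m h_{\alpha_m}(x_m)$, $\alpha\in\mathbb{Z}_{\ge 0}^3$. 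Since $|g|\le 1$, $g$ lies in $L^2$ of the standard Gaussian $\gamma_3$, so $g=\sum_\alpha \hat g(\alpha)H_\alpha$ with $\sum_\alpha \hat g(\alpha)^2=\|g\|_2^2<\infty$, and the quantity of interest is exactly $\mathbb{E}[g(z)g(z')]$ for $z=(z_1,z_2,z_3)$, $z'=(z_1',z_2',z_3')$.

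Next I would diagonalize. Using independence of the pairs together with the classical Hermite identity $\mathbb{E}[h_p(U)h_q(V)]=\delta_{pq}\rho^p$ for a correlation-$\rho$ standard Gaussian pair $(U,V)$, one gets $\mathbb{E}[H_\alpha(z)H_\beta(z')]=\delta_{\alpha\beta}\,a^{\alpha_1}b^{\alpha_2}c^{\alpha_3}$. Because $|a|,|b|,|c|\le 1$ on $\Sh$, the series $\sum_\alpha \hat g(\alpha)^2 a^{\alpha_1}b^{\alpha_2}c^{\alpha_3}$ is absolutely convergent (dominated by $\sum_\alpha \hat g(\alpha)^2$) and the sum-expectation interchange is legitimate, so $\mathbb{E}[g(z)g(z')]=\sum_\alpha \hat g(\alpha)^2 a^{\alpha_1}b^{\alpha_2}c^{\alpha_3}$. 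Now $g$ is odd in $x_1$, even in $x_2$ and in $x_3$, and invariant under $x_2\leftrightarrow x_3$; since $H_\alpha$ has parity $(-1)^{\alpha_m}$ in $x_m$, this forces $\hat g(i,j,k)=0$ unless $i$ is odd and $j,k$ are even, and $\hat g(i,j,k)=\hat g(i,k,j)$. Collecting the one or two monomials associated with each unordered pair $\{j,k\}$ rewrites the sum as $\sum_{i,\,j\le k}\hat g(i,j,k)^2\, u_{i,jk}(a,b,c)$, where the $j=k$ and $j<k$ cases match the two cases in the definition of $u_{i,jk}$, and the support condition ($i$ odd, $j,k$ even) matches that of $\hat f_{i,jk}$.

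The remaining task, which I expect to be the main obstacle, is to check that $\hat g(i,j,k)$ equals the stated closed form $\hat f_{i,jk}$ for $i$ odd and $j,k$ even. For this I would use the Laplace representation $\|x\|^{-1}=\pi^{-1/2}\int_0^\infty s^{-1/2}e^{-s\|x\|^2}\,ds$ to factor
\[
\hat g(i,j,k)=\frac{1}{\sqrt{\pi}}\int_0^\infty s^{-1/2}\,\mathbb{E}\!\left[z_1 e^{-s z_1^2}h_i(z_1)\right]\mathbb{E}\!\left[e^{-s z_2^2}h_j(z_2)\right]\mathbb{E}\!\left[e^{-s z_3^2}h_k(z_3)\right]ds ,
\]
each inner factor being a one-dimensional Gaussian moment of a Hermite polynomial against the weight $e^{-sx^2}$, which evaluates in closed form to $(1+2s)^{-(d+1)/2}$ (for $d$ the degree) times an explicit double-factorial constant, nonzero only for the correct parity. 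Performing the remaining $s$-integral, which is a Beta integral, then produces the $1/((1+2p)(3+2p))$ factor with $p=(i+j+k-1)/2$, the ratio $\sqrt{i!\,j!\,k!}/((i-1)!!\,j!!\,k!!)$, the sign $(-1)^p$, and the constant $2\sqrt{2/\pi}$ that together make up $\hat f_{i,jk}$. The conceptual steps — independence of the pairs, Mehler-type diagonalization, and the parity reduction — are clean; this last step is pure bookkeeping, where the normalization convention for the Hermite polynomials and the signs require care.
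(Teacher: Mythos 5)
Your proposal is correct and takes essentially the same route as the source this lemma is quoted from (\cite{PT20} — the paper itself states it without proof): independence of the three pairs under $\Sigma$, the Mehler identity $\mathbb{E}[h_p(U)h_q(V)]=\delta_{pq}\rho^p$, the parity/swap-symmetry reduction to indices with $i$ odd, $j,k$ even and unordered $\{j,k\}$, and a closed-form evaluation of the Hermite coefficients of $x_1/\norm{x}$. The bookkeeping you defer does work out — the one-dimensional factors are actually a constant times $(-2s)^{d/2}(1+2s)^{-(d+1)/2}$ rather than just $(1+2s)^{-(d+1)/2}$ (the omitted $s$-power is what produces the $(-1)^p$ and sets the Beta-integral parameters), after which $\int_0^\infty s^{p-1/2}(1+2s)^{-(p+5/2)}\,ds$ gives the $1/((1+2p)(3+2p))$ factor and the identity $n!=n!!\,(n-1)!!$ converts the resulting $i!!(j-1)!!(k-1)!!/\sqrt{i!j!k!}$ into the stated $\sqrt{i!j!k!}/((i-1)!!\,j!!\,k!!)$, so the coefficients match $\hat f_{i,jk}$ exactly.
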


Since the above expresses our desired expectation in terms of a convergent series at each point, we can truncate the series to approximate the expectation to any precision we want:
\begin{lemma}[\cite{PT20}]\label{lem:remainder}
Let $A$ be a finite subset of $\mathbb{Z}_{\geq 0} \times \mathbb{Z}_{\geq 0} \times \mathbb{Z}_{\geq 0}$ such that $k\leq l$ for all $(j, k, l)\in A$ and let $(a_1, a_2, a_3)\in \Sh$.  Then, 
\begin{align*}
\left|\sum_{\substack{i, j, k\\j\leq k}}\hat{f}_{i, jk}^2 u_{i, jk}(a, b, c)-\sum_{(i, j, k)\in A}\hat{f}_{i, jk}^2 u_{i, jk}(a, b, c)\right| \leq  1/3-\sum_{(i, j, k) \in A}2^{1-\delta_{j, k}}\hat{f}_{i, jk}^2.
\end{align*}
\end{lemma}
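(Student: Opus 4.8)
The plan is to read off the total mass of the nonnegative coefficients $\{2^{1-\delta_{j,k}}\hat f_{i,jk}^2\}$ from one clean evaluation, and then bound the truncation error by a purely termwise estimate exploiting that $\Sh\subseteq[-1,1]^3$.

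First I would recognize the series in \Cref{lem:hermite} as the Hermite expansion of the bounded function $f(w):=w_1/\lVert w\rVert$ on $\mathbb R^3$ (standard Gaussian measure) evaluated at diagonal correlation $(a,b,c)$. Indeed, the Gaussian/Wick product formula gives $\mathbb E[f(z)f(z')]=\sum_{(i,j,k)}\hat g_{i,jk}^2\,a^ib^jc^k$, where the $\hat g_{i,jk}$ are the Hermite coefficients of $f$ (symmetric in the last two indices since $f$ is symmetric under $w_2\leftrightarrow w_3$); grouping $(i,j,k)$ with $(i,k,j)$ turns this into $\sum_{i,\,j\le k}\hat g_{i,jk}^2 u_{i,jk}(a,b,c)$, which is exactly the form in \Cref{lem:hermite}, so the $\hat f_{i,jk}$ of \Cref{eq:50} are these Hermite coefficients (up to sign). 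Since $f$ is bounded, Parseval for the Hermite basis gives $\sum_{(i,j,k)}\hat f_{i,jk}^2=\lVert f\rVert^2=\mathbb E_{z\sim\mathcal N(0,\mathbb{I}_3)}\bigl[z_1^2/\lVert z\rVert^2\bigr]=\tfrac13$, the last equality being the symmetry among the three coordinates of $z$ (the three such expectations sum to $1$). Regrouping this total the same way and using $u_{i,jk}(1,1,1)=2^{1-\delta_{j,k}}$ (two coincident terms when $j\neq k$, one when $j=k$) yields the key identity $\sum_{i,\,j\le k}2^{1-\delta_{j,k}}\hat f_{i,jk}^2=\tfrac13$.

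Second, since $\Sh$ is the convex hull of four points of $\{\pm1\}^3$ it is contained in $[-1,1]^3$, so for $(a,b,c)\in\Sh$ every term obeys $\lvert u_{i,jk}(a,b,c)\rvert\le 2^{1-\delta_{j,k}}=u_{i,jk}(1,1,1)$ (a single monomial of absolute value $\le1$ when $j=k$; a sum of two such when $j\neq k$). Hence $\sum_{i,\,j\le k}\hat f_{i,jk}^2\lvert u_{i,jk}(a,b,c)\rvert\le\tfrac13$, so every series in sight converges absolutely and may be split freely. For a finite set $A$ of index triples $(i,j,k)$ with $j\le k$ (as in the statement), the left-hand side of the claim is precisely the tail $\bigl|\sum_{(i,j,k)\notin A,\,j\le k}\hat f_{i,jk}^2 u_{i,jk}(a,b,c)\bigr|$; the triangle inequality, then the pointwise bound, then the key identity give
\[
\Bigl|{\textstyle\sum_{i,\,j\le k}}\hat f_{i,jk}^2 u_{i,jk}(a,b,c)-{\textstyle\sum_{(i,j,k)\in A}}\hat f_{i,jk}^2 u_{i,jk}(a,b,c)\Bigr|\;\le\;{\textstyle\sum_{(i,j,k)\notin A,\,j\le k}}2^{1-\delta_{j,k}}\hat f_{i,jk}^2\;=\;\tfrac13-{\textstyle\sum_{(i,j,k)\in A}}2^{1-\delta_{j,k}}\hat f_{i,jk}^2,
\]
which is the assertion.

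The only step needing genuine care is the ``total mass $=\tfrac13$'' identity. It is tempting to obtain it by evaluating \Cref{lem:hermite} at $(a,b,c)=(1,1,1)$, but this point is \emph{not} in $\Sh$ (the facet of $\Sh$ opposite $(-1,-1,-1)$ lies in the plane $a+b+c=1$, whereas $\Sh\subseteq\{a+b+c\le1\}$); one should instead derive it from Parseval together with the coordinate symmetry of $z_1^2/\lVert z\rVert^2$, after which the relevant power series are seen to converge on all of $[-1,1]^3$ and the rest is routine termwise estimation.
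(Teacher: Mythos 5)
Your proof is correct and is essentially the argument behind this lemma in the cited source [PT20] (the present paper states it without proof): bound each grouped term by $\abs{u_{i,jk}(a,b,c)}\le 2^{1-\delta_{j,k}}$ using $\Sh\subseteq[-1,1]^3$, and identify the total mass $\sum_{i,\,j\le k}2^{1-\delta_{j,k}}\hat f_{i,jk}^2=\mathbb{E}\bigl[z_1^2/\norm{z}^2\bigr]=\tfrac13$ via Parseval and coordinate symmetry, then apply the triangle inequality to the tail. Your caveat about not evaluating at $(1,1,1)\notin\Sh$ is well taken; no gaps remain.
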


The final important fact concerns the case when $a=b=c$.  For this, we get an exact expression in terms of the Hypergeometric function $\,_2 F_1$:

\begin{lemma}[\cite{B10, G19}]\label{lem:gp_exp}
Let $[z_1, z_2, z_3, z_1', z_2', z_3']$ be as described in \Cref{lem:svd} with $a=b=c$.  Then we have:
\begin{equation}
    \mathbb{E} \left[ \frac{(z_1, z_2, z_3) \cdot (z_1', z_2', z_3')}{||(z_1, z_2, z_3)|| \,\, ||(z_1', z_2', z_3')||}\right]=\frac{8a}{3\pi}\,_2 F_1\left[1/2, 1/2;5/2;a^2 \right]=: F[a]
\end{equation}
\end{lemma}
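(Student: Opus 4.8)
The plan is to reduce directly to \Cref{lem:Briet-et-al} (Lemma 2.1 of Bri\"et, de Oliveira Filho, and Vallentin). First observe that since the covariance matrix $\Sigma$ (with $a=b=c$) is positive semidefinite, we must have $|a|\leq 1$. I would then realize the Gaussian vector $(z,z')$ as the image of a random Gaussian matrix: take $d=2$, the unit vectors $u := (1,0)^T$ and $v := (a,\sqrt{1-a^2})^T$ in $\R^2$ (so that $u^Tv = a$), and a matrix $Z \in \R^{3\times 2}$ with i.i.d.\ $\mathcal{N}(0,1)$ entries. Each of $Zu$ and $Zv$ is then a standard Gaussian vector in $\R^3$; coordinates $(Zu)_i$ and $(Zv)_{i'}$ with $i\neq i'$ are independent since they involve disjoint rows of $Z$; and within a single row $\mathrm{Cov}((Zu)_i,(Zv)_i) = u^Tv = a$. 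Hence $(Zu,Zv)$ is mean-zero Gaussian with precisely covariance $\Sigma$, i.e.\ $(z,z') \overset{d}{=} (Zu,Zv)$.

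Applying \Cref{lem:Briet-et-al} with $k=3$ and these $u,v$ then gives
\begin{equation*}
\mathbb{E}\!\left[\frac{(z_1,z_2,z_3)\cdot(z_1',z_2',z_3')}{\|(z_1,z_2,z_3)\|\,\|(z_1',z_2',z_3')\|}\right]
= \mathbb{E}\!\left[\frac{Zu}{\|Zu\|}\cdot\frac{Zv}{\|Zv\|}\right]
= F(3,a),
\end{equation*}
with $F$ as in \Cref{eq:F_with_r}. It remains only to simplify the prefactor: $F(3,a) = \frac{2}{3}\bigl(\Gamma(2)/\Gamma(3/2)\bigr)^2 a \,_2F_1(1/2,1/2;5/2;a^2)$, and since $\Gamma(2)=1$ and $\Gamma(3/2)=\sqrt{\pi}/2$ we have $\bigl(\Gamma(2)/\Gamma(3/2)\bigr)^2 = 4/\pi$, yielding $F(3,a) = \frac{8a}{3\pi}\,_2F_1(1/2,1/2;5/2;a^2) = F[a]$, as claimed.

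Given \Cref{lem:Briet-et-al}, essentially nothing here is an obstacle; the only steps needing a moment's care are verifying that $(Zu,Zv)$ really has covariance $\Sigma$ (the short computation above, which uses $|a|\leq 1$ so that $v$ is well defined) and the Gamma-function bookkeeping. If a self-contained derivation were wanted instead, I would pass to spherical coordinates $z=\rho\,\omega$, $z'=\rho'\,\omega'$, integrate out the independent radial parts, and evaluate $\mathbb{E}[\langle\omega,\omega'\rangle]$ against the conditional density of $z'$ given $z$ over $S^2\times S^2$; expanding the relevant $(1-a^2\,\cdot\,)$-type factor as a binomial series and integrating term by term over the sphere reproduces the $\,_2F_1$ series. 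That route is considerably longer, so I would only take it if citing \Cref{lem:Briet-et-al} were deemed insufficient.
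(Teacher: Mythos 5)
Your proof is correct, and it follows essentially the route the paper intends: the lemma is imported from \cite{B10,G19} rather than proved in the paper, and the underlying derivation is exactly your reduction to \Cref{lem:Briet-et-al} with $k=3$ (realizing the correlated Gaussian pair as $(Zu,Zv)$ with $u\cdot v=a$), followed by the $\Gamma$-function simplification, which matches how the paper applies that lemma for QMC rounding.
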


\begin{lemma}[\cite{PT20}]\label{lem:pqr_sym}
Fix $(a, b, c)$ and let $z$ and $z'$ be defined according to \Cref{lem:svd}.  Define:
\begin{gather*}
A= \mathbb{E} \left[ \frac{z_1 z_1'}{\sqrt{(z_1^2+z_2^2+z_3^2)((z_1')^2+(z_2')^2+(z_3')^2)}}\right], \\
\nonumber B= \mathbb{E} \left[ \frac{z_2 z_2'}{\sqrt{(z_1^2+z_2^2+z_3^2)((z_1')^2+(z_2')^2+(z_3')^2)}}\right], \\
\nonumber {\rm and } \,\,C= \mathbb{E} \left[ \frac{z_3 z_3'}{\sqrt{(z_1^2+z_2^2+z_3^2)((z_1')^2+(z_2')^2+(z_3')^2)}}\right].
\end{gather*}
Then, 
\begin{equation*}
\min_{(p, q, r)\in \Sh} \frac{1+p A+ q B + r C}{1 + p a +q b + r c} =\frac{1-A-B-C}{1-a-b-c} 
\end{equation*}
\end{lemma}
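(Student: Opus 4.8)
The plan is to view the objective as a linear-fractional function of $(p,q,r)$, reduce the minimization over the tetrahedron $\Sh$ to a comparison among its four vertices, and then single out the vertex $(-1,-1,-1)$. First I would record that $N(p,q,r):=1+pA+qB+rC$ and $D(p,q,r):=1+pa+qb+rc$ are affine in $(p,q,r)$ and both nonnegative on $\Sh$. For $D$ this holds because $(a,b,c)\in\Sh$ and the four halfspaces cutting out $\Sh$ are exactly $\{\,1+(a,b,c)\cdot w\ge 0\,\}$ as $w$ ranges over the four vertices of $\Sh$; convexity of $D$ then propagates nonnegativity to all of $\Sh$. For $N$ the identical argument applies once $(A,B,C)\in\Sh$, which holds because $1+(A,B,C)\cdot(p,q,r)$ is a positive multiple of $\Tr[T\rho]$ for a density matrix $\rho$ and an operator $T\semigeq 0$ (equivalently, it is the expectation of a manifestly nonnegative random variable), as in \cite{H96,G16,PT20}. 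Writing an arbitrary $(p,q,r)\in\Sh$ as a convex combination $\sum_i\lambda_i w_i$ of the vertices, linearity gives $N(p,q,r)=\sum_i\lambda_i N(w_i)$ and $D(p,q,r)=\sum_i\lambda_i D(w_i)$ with all summands nonnegative, so the elementary mediant inequality $\frac{\sum_i s_i}{\sum_i t_i}\ge\min_{i:\,t_i\ne 0}\frac{s_i}{t_i}$ — precisely the step used in the proof of \Cref{lem:extreme_points} — yields $N/D\ge\min_i N(w_i)/D(w_i)$. Since the vertices lie in $\Sh$, the minimum over $\Sh$ is attained at one of $(-1,-1,-1),(-1,1,1),(1,-1,1),(1,1,-1)$.

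It then remains to identify $(-1,-1,-1)$ as the minimizing vertex. A permutation of the three coordinates acts in the same way on $(a,b,c)$ and on $(A,B,C)$ (it is merely a relabeling of the Gaussian coordinates $z_i$) and permutes the three ``non-singlet'' vertices among themselves while fixing $(-1,-1,-1)$, so it suffices to show
\[
\frac{1-A-B-C}{1-a-b-c}\ \le\ \frac{1-A+B+C}{1-a+b+c}.
\]
Both denominators are values of $D$ at vertices, hence nonnegative; cross-multiplying and cancelling reduces this to the single scalar inequality $(1-a)(B+C)\ge(1-A)(b+c)$. To establish it I would use the Hermite-series formula for $A,B,C$ from \Cref{lem:hermite}: since the nonvanishing coefficients $\hat f_{i,jk}$ force $i$ odd and $j,k$ even, one may factor $A=a\sum_{i,\,j\le k}\hat f_{i,jk}^2\,a^{i-1}(b^jc^k+b^kc^j)$ with every factor of every summand nonnegative on $\Sh$, whence $\mathrm{sign}(A)=\mathrm{sign}(a)$ and (using $u_{i,jk}(1,1,1)=2^{1-\delta_{jk}}$ together with \Cref{lem:remainder}) $|A|\le|a|\sum\hat f_{i,jk}^2 u_{i,jk}(1,1,1)=|a|/3\le|a|$, and symmetrically for $B$ and $C$. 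Feeding these sign and contraction bounds, together with $1-A,1-a\in[0,2]$ and the membership $(a,b,c)\in\Sh$, into a short case split on the sign of $b+c$ should close out $(1-a)(B+C)\ge(1-A)(b+c)$.

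The reduction to the four vertices is essentially routine once $N,D\ge 0$ on $\Sh$ is in hand. I expect the genuine obstacle to be the vertex comparison: converting the coordinatewise sign/contraction properties of the rounding map $(a,b,c)\mapsto(A,B,C)$ into the scalar inequality $(1-a)(B+C)\ge(1-A)(b+c)$, where the argument has little slack and must invoke the constraints defining $\Sh$ in an essential way — in particular it must use that, in the regime where the lemma is applied, the moment vector $(a,b,c)$ is the $\las_2$-optimal one for a rank-$1$ projector cost, so it is not an arbitrary point of $\Sh$.
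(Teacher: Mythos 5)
The paper does not prove \Cref{lem:pqr_sym} at all --- it is imported verbatim from \cite{PT20} as background --- so your attempt has to stand on its own, and its second half does not. The first half is fine: nonnegativity of numerator and denominator on $\Sh$ plus the mediant inequality reduces the minimization to the four vertices, exactly as in \Cref{lem:extreme_points}, and your sign/contraction facts ($\mathrm{sign}(A)=\mathrm{sign}(a)$, $|A|\le |a|/3$) are correct consequences of \Cref{lem:hermite,lem:remainder}. The problem is the inequality you reduce everything to, $(1-a)(B+C)\ge(1-A)(b+c)$: it is false for a general $(a,b,c)\in\Sh$, so no case split can close it. Take $(a,b,c)=(-\tfrac12,\tfrac12,\tfrac12)\in\Sh$. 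Since $A$ is odd in $a$ and even in $b,c$ (and symmetrically for $B,C$), \Cref{lem:gp_exp} gives $A=-F(3,\tfrac12)/3\approx-0.145$ and $B=C\approx 0.145$, so $(1-a)(B+C)\approx 0.44$ while $(1-A)(b+c)\approx 1.15$. Worse, the conclusion of the lemma read literally (``for every fixed $(a,b,c)\in\Sh$'') fails at this point: the ratio at $(p,q,r)=(-1,-1,-1)$ is $0.855/0.5\approx 1.71$, whereas at $(p,q,r)=(-1,1,1)$ it is $1.436/2.5\approx 0.57$, so the minimizing vertex is not $(-1,-1,-1)$. (Your permutation-symmetry step is legitimate as a reduction of the universally quantified claim, but it cannot rescue a comparison that is simply untrue.)

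The missing idea is an alignment-of-signs reduction, and your proposed rescue --- that $(a,b,c)$ is the $\las_2$-optimal moment vector for a rank-$1$ cost --- is not the mechanism; nothing in the surrounding usage restricts $(a,b,c)$ that way. What makes the statement usable is that flipping the signs of any two coordinates is a symmetry of $\Sh$ that acts covariantly on $(A,B,C)$ (each of $A,B,C$ is odd in its own correlation and even in the other two, by \Cref{lem:hermite} or directly by flipping $z_i'\mapsto -z_i'$). Hence for any vertex $w_i$ and any $(a,b,c)\in\Sh$ there is a flipped $(a',b',c')\in\Sh$ with the same ratio at $(-1,-1,-1)$ as the original has at $w_i$; this proves the version actually invoked in the paper just before \Cref{sec:numerics}, namely $\min_{(a,b,c),(p,q,r)\in\Sh}\frac{1+pA+qB+rC}{1+pa+qb+rc}=\min_{(a,b,c)\in\Sh}\frac{1-A-B-C}{1-a-b-c}$, but not the pointwise statement you set out to prove. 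Your sign and contraction bounds do suffice for the vertex comparison under an explicit sign hypothesis such as $a,b,c\le 0$ (there $B+C\ge (b+c)/3$ and $3A-a\le 2$ close it), so the honest fix is either to prove the joint-minimization form via the sign-flip symmetry or to add such a hypothesis, matching how the result is used.
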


\end{document}